\documentclass[12pt]{article}
\usepackage[margin=1in]{geometry}
\usepackage[english]{babel}
\usepackage[utf8]{inputenc}
\usepackage{graphicx}
\graphicspath{{figures/}}
\usepackage{amssymb,amsmath,amsthm}
\usepackage{stmaryrd,xspace}
\usepackage{hyperref}
\usepackage{enumitem}
\setitemize{noitemsep,topsep=2pt,parsep=0pt,partopsep=0pt}
\setenumerate{noitemsep,topsep=2pt,parsep=0pt,partopsep=0pt}
\usepackage{xcolor}
\usepackage{multicol,multirow}
\usepackage{changepage} 
\usepackage{tikz}
\usetikzlibrary{decorations.pathreplacing}
\usepackage{booktabs}
\newtheorem{theorem}{Theorem}

\newtheorem{lemma}[theorem]{Lemma}
\newtheorem{corollary}[theorem]{Corollary}

\newtheorem{remark}[theorem]{Remark}
\newtheorem{example}[theorem]{Example}

\newtheorem{notation}[theorem]{Notation}

\newcommand{\N}{\mathbb{N}}

\newcommand{\1}{\ensuremath{\normalfont \texttt{1}}\xspace}
\newcommand{\0}{\ensuremath{\normalfont \texttt{0}}\xspace}
\newcommand{\bool}{\{\0,\1\}}

\newcommand{\intz}[1]{\ensuremath{\llbracket #1 \rrbracket}\xspace}
\newcommand{\set}[1]{\left\{#1\right\}}
\newcommand{\zu}{\ensuremath{\set{\0,\1}}\xspace}
\newcommand{\ignore}[1]{}
\newcommand{\Pn}{\ensuremath{\mathcal{P}_n}\xspace}
\newcommand{\Pf}{\ensuremath{\mathcal{P}_{|f|}}\xspace}
\newcommand{\Px}[1]{\ensuremath{\mathcal{P}_{#1}}\xspace}


\newcommand{\negi}[2]{\overline{#1}^{#2}}
\newcommand{\loopless}[1]{\hat{#1}}
\newcommand{\lab}{lab}
\newcommand{\labminus}{\ominus}
\newcommand{\labplus}{\oplus}
\newcommand{\dynamics}[1]{\mathcal D(#1)}
\newcommand{\dynamicsu}[2]{\mathcal D_{#2}(#1)}
\newcommand{\updates}[1]{\mathcal U(#1)}
\newcommand{\sens}[1]{\mu_s(#1)}
\newcommand{\GECA}[1]{G^\text{ECA}_{#1}}
\newcommand{\updatesECA}[1]{\mathcal U^\text{ECA}(#1)}
\newcommand{\dleft}[1]{\overleftarrow{\tt d\!\!}_{#1}}
\newcommand{\dright}[1]{\overrightarrow{\tt d\!\!}_{#1}}
\newcommand{\dset}[1]{{\tt d}_{#1}}
\newcommand{\countplus}[1]{|#1|_{\labplus}}
\newcommand{\countminus}[1]{|#1|_{\labminus}}
\newcommand{\countplusminus}[1]{|#1|_{\labplus\labminus}}
\newcommand{\partplusminus}[2]{L_{#1#2}}
\newcommand{\sumplusminus}[2]{S_{#1#2}}
\newcommand{\ie}{\emph{i.e.}\@\xspace}
\newcommand{\wrt}{\emph{w.r.t}\@\xspace}

\newcommand{\TODO}[1]{\textcolor{red}{{\fbox{TODO} #1}}}

\usepackage{authblk}

\title{Non-maximal sensitivity to synchronism in periodic elementary cellular automata: exact asymptotic measures}
\author[1]{Pedro Paulo Balbi}
\author[2]{Enrico Formenti}
\author[3]{K\'evin Perrot}
\author[2]{Sara Riva}
\author[1]{Eurico L. P. Ruivo}
\affil[1]{Universidade Presbiteriana Mackenzie, FCI, S\~{a}o Paulo, Brazil} 
\affil[2]{Universit\'e C\^ote d'Azur, CNRS, I3S, France}
\affil[3]{Universit\'e publique}
\date{}

\begin{document}
\maketitle

\begin{abstract}
  In~\cite{pmmor19} and~\cite{rmmop18} the authors showed that elementary
  cellular automata rules 0, 3, 8, 12, 15, 28, 32, 34, 44,
  51, 60, 128, 136, 140, 160, 162, 170, 200 and 204 (and
  their conjugation, reflection, reflected-conjugation) are not maximum
  sensitive to synchronism, \ie they do not have a different dynamics
  for each (non-equivalent) block-sequential update schedule (defined as
  ordered partitions of cell positions). In this work we present exact
  measurements of the sensitivity to synchronism for these rules,
  as functions of the size. These exhibit a surprising variety of
  values and associated proof methods, such as the special pairs of
  rule $128$, and the connection to the bissection of Lucas numbers of rule $8$.
\end{abstract}

\section{Introduction}

Cellular automata (CAs) are discrete dynamical systems with respect
to time, space and state variables, which have been widely studied both as
mathematical and computational objects as well as suitable models for
real-world complex systems.

The dynamics of a CA is locally-defined: every agent (\emph{cell}) computes
its future state based upon its present state and those of their
neighbors, that is, the cells connected to it.
In spite of their apparent simplicity, they may display non-trivial
global emergent behavior, some of them even reaching computational
universality \cite{Cook2004,Gardner1970}.

Originally, CAs are updated in a synchronous fashion, that is, every cell 
of the lattice is updated simultaneously. However, over the last decade,
\emph{asynchronous} cellular automata have attracted increasing
attention in its associated scientific community.

A comprehensive and detailed overview of asynchronous CAs is given in
\cite{Fates2014}. There are different ways to define asynchronism in CAs,
be it deterministically or stochastically.

Here, we deal with a
deterministic version of asynchronism, known as \emph{block-sequential},
coming from the model of Boolean networks and first characterized
for this more general model in \cite{Aracena2009,Aracena2011}.
Under such an update scheme, the 
lattice of the CA is partitioned into blocks of cells, each one is 
assigned a priority of being updated, and this priority ordering is kept
 fixed throughout the time evolution. For the sake of simplicity, 
from now on, whenever we refer to \emph{asynchronism}, we will mean 
\emph{block-sequential}, deterministic asynchronism.

In previous works (\cite{pmmor19,rmmop18}), the notion of \emph{maximum
sensitivity to asynchronism} was established. Basically, a CA rule was said
to present maximum sensitivity to asynchronism when, for any two different
block-sequential
update schedules, the rule would yield different dynamics.
Out of the 256 elementary cellular automata rules (ECAs),
200 possess maximum sensitivity to asynchronism, while the remaining 56
rules do not. Therefore, it is natural to try and define a \emph{degree} of
sensitivity to asynchronism to the latter.

Here, such a notion of a measure to the sensitivity to asynchronism is presented
and general analytical formulas for sensitivities of the non-maximal sensitive
rules are provided.
The results (to be presented on Table~\ref{tab:results}
at the end of Section~\ref{s:definitions})
exhibit an interesting range of values requiring the introduction of various techniques,
from measures tending to $0$ (insensitive rules)
to measures tending to $1$ (almost max-sensitive),
with one rule tending to some surprising constant between $0$ and $1$.

This paper is organized as follows. In Section~\ref{s:definitions}, fundamental
definitions and results on Boolean networks, update digraphs and elementary
cellular automata are given. Then, in Section~\ref{s:exp}, experimental
measures of sensitivity to asynchronism are given for rules which do not possess
maximal sensitivity to asynchronism. Such experimental measures pave the way to
the theoretical results in Section~\ref{s:theory}, in which formal
expressions to the sensitivity to asynchronism of such rules are provided for
configurations of arbitrary size. Finally, concluding remarks are made in
Section~\ref{s:conclusions}.

\section{Definitions}
\label{s:definitions}

Elementary cellular automata will be presented in the more general framework of
Boolean automata networks, for which the variation of update schedule benefits
from useful considerations already studied in the literature.
Figure~\ref{fig:ECA-digraph} illustrates the definitions.

\begin{figure}[!h]
  \centerline{
    \includegraphics{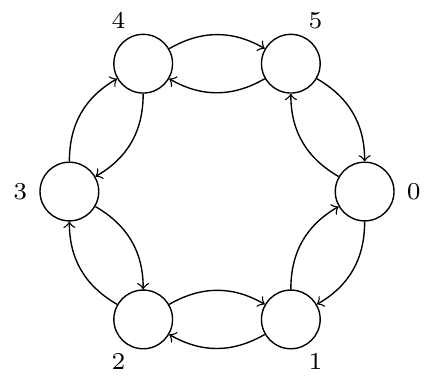}
    \hspace{2cm}
    \includegraphics{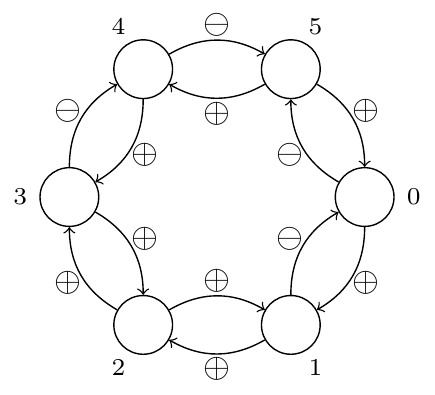}
  }
  \caption{
    Left: interaction digraph $\GECA{6}$ of the ECA rule $128$
    for $n=6$, with local functions $f_i(x) = x_{i-1} \wedge
    x_i \wedge x_{i+1}$ for all $i \in \{0,\dots,5\}$.
    Right: update digraph corresponding to the update schedules
    $\Delta=(\{1,2,3\},\{0,4\},\{5\})$ and $\Delta'=(\{1,2,3\},\{0\},\{4\},\{5\})$,
    which are therefore equivalent ($\Delta \equiv \Delta'$).
    For example, $f^{(\Delta)}(\1\1\1\0\1\1)=\1\1\0\0\0\0$ whereas for the
    synchronous update schedule we have
    $f^{(\Delta^\texttt{sync})}(\1\1\1\0\1\1)=\1\1\0\0\0\1$.
  }
  \label{fig:ECA-digraph}
\end{figure}

\subsection{Boolean networks}
\label{ss:bn}

A Boolean Network (BN) of size $n$ is an arrangement of $n$ finite Boolean automata
(or components)
interacting each other according to a \emph{global rule} $f\colon\zu^n\to\zu^n$
which describes how the global state changes after one time step. Let
$\intz{n}=\set{0,\dots,n-1}$. Each automaton is identified with a unique integer
$i\in\intz{n}$ and $x_i$ denotes the current state of the automaton $i$.
A \emph{configuration} $x\in\zu^n$ is a snapshot of the current state of all automata
and represents the global state of the BN.

For convenience, we identify configurations with words on $\zu^n$.
Hence, for example, $\0\1\1\1\1$ or $\0\1^4$ both denote
the configuration $(\0,\1,\1,\1,\1)$. Remark that the global function
\ignore{A {\em Boolean network} (BN)}
$f:\bool^n \to \bool^n$ of a BN of size $n$ induces
a set of $n$ \emph{local functions} $f_i:\bool^n \to \bool$, one per each component, 
such that $f(x)=(f_0(x),f_1(x),\dots,f_{n-1}(x))$ for all $x \in \bool^n$.  This gives a
static description of a discrete dynamical system, and it remains to set the
order in which components are updated in order to get a dynamics.  Before going
to update schedules, let us first introduce interaction digraphs.

The component $i$ \emph{influences}
the component $j$ if $\exists x \in \bool^n : f_j(x) \neq f_j(\negi{x}{i})$, where
$\negi{x}{i}$ is the configuration obtained from $x$ by flipping the state of 
component $i$. Note that in literature one may also consider {\em positive} and
{\em negative} influences, but they will not be useful for the present study.
The \emph{interaction digraph} $G_f=(V,A)$ of a BN $f$ represents the effective
dependencies among its set of components 
\[
  V=\intz{n} \quad\text{and}\quad A=\set{ (i,j) \mid i \text{ influences } j}.
\]
It will turn out to be pertinent to consider $\loopless{G_f}=(V,A)$, obtained
from $G_f$ by removing the loops (arcs of the form $(i,i)$).

For $n\in\N$, denote \Pn the set of ordered partitions of \intz{n} and $|f|$
the size of a BN $f$.
A \emph{block-sequential update schedule} $\Delta=(\Delta_1,\dots,\Delta_k)$ is
an element of \Pf. It defines the following dynamics
$f^{(\Delta)}:\bool^n \to \bool^n$,
\[
  f^{(\Delta)}=f^{(\Delta_k)} \circ \dots \circ f^{(\Delta_2)} \circ f^{(\Delta_1)}
  \quad\text{with}\quad f^{(\Delta_j)}(x)_i=\begin{cases}
    f_i(x) & \text{if } i \in \Delta_j,\\
    x_i & \text{if } i \notin \Delta_j.
  \end{cases}
\]
In words, the components are updated in the order given by $\Delta$:
sequentially part after part, and in parallel within each part. The \emph{parallel}
or \emph{synchronous} update schedule is $\Delta^\texttt{sync}=(\intz{n})$ 
and we have $f^{(\Delta^\texttt{sync})}=f$.
In this article, since only block-sequential update schedules are considered, 
they are simply called \emph{update schedule} for short. They are
\begin{itemize}
  \item ``\emph{fair}'' in the sense that all components are updated the exact
    same number of times,
  \item ``\emph{periodic}'' in the sense that the same ordered partition is
    repeated.
\end{itemize}

Given a BN $f$ of size $n$ and an update schedule $\Delta$, the 
\emph{transition digraph} $D_{f^{(\Delta)}}=(V,A)$ is such that
\[
  V=\bool^n \quad\text{and}\quad A=\{ (x,f^{(\Delta)}(x)) \mid x \in \bool^n \}.
\]
It describes the \emph{dynamics} of $f$ under the update schedule $\Delta$. The set
of all possible dynamics of the BN $f$, at the basis of the 
measure of sensitivity to synchronism, is then defined as
\[
  \dynamics{f} = \set{ D_{f^{(\Delta)}} \mid\Delta\in\Pf}.
\]
\subsection{Update digraphs and equivalent update schedules}
\label{ss:update-digraph}

For a given BN, some update schedules always give the same dynamics. Indeed,
if, for example, two components do not influence each other, their
order of updating has no effect on the dynamics (see Example~\ref{fig:ECA-digraph} for
a detailed example). In~\cite{Aracena2009}, the notion of \emph{update digraph}
has been introduced in order to study update schedules.

Given a BN $f$ with loopless interaction digraph $\loopless{G_f}=(V,A)$ and an update schedule
$\Delta\in\Pn$, define $\lab_\Delta : A \to \{\labplus,\labminus\}$ as
\[
  \forall (i,j) \in A : \lab_\Delta((i,j))=\begin{cases}
    \labplus & \text{if } i \in \Delta_{a},j \in \Delta_{b}
    \text{ with } 1 \leq b \leq a \leq k,\\
    \labminus & \text{if } i \in \Delta_{a},j \in \Delta_{b}
    \text{ with } 1 \leq a < b \leq k.
  \end{cases}
\]
The \emph{update digraph} $U_{f^{(\Delta)}}$ of the BN $f$ for the update 
schedule $\Delta\in\Pn$ is the loopless interaction digraph decorated with
$\lab_\Delta$, \ie $U_{f^{(\Delta)}}=(V,A,\lab_\Delta)$. Note that loops are removed because
they bring no meaningful information: indeed, an edge $(i,i)$ would always be labeled
$\labplus$. Now we have that, if two update schedules define the same update
digraph then they also define the same dynamics.

\begin{theorem}[\cite{Aracena2009}]\label{theorem:lab}
  Given a BN $f$ and two update schedules $\Delta,\Delta'$,
  if $\lab_\Delta=\lab_{\Delta'}$ then $D_{f^{(\Delta)}}=D_{f^{(\Delta')}}$.
\end{theorem}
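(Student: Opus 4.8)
The plan is to show that if two update schedules $\Delta,\Delta'$ induce the same labeling $\lab_\Delta=\lab_{\Delta'}$ of the loopless interaction digraph, then $f^{(\Delta)}(x)=f^{(\Delta')}(x)$ for every configuration $x\in\zu^n$; since the transition digraph is entirely determined by this function, equality of the functions immediately yields $D_{f^{(\Delta)}}=D_{f^{(\Delta')}}$. So the whole problem reduces to proving pointwise equality of the two induced dynamics.

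To prove $f^{(\Delta)}(x)=f^{(\Delta')}(x)$, I would first isolate \emph{what the labeling records}: for an arc $(i,j)\in A$, the label is $\labplus$ exactly when $i$ is updated in the same block as $j$ or in an earlier block (so $j$ ``sees'' the already-updated value of $i$), and $\labminus$ when $i$ is updated strictly later (so $j$ ``sees'' the old value of $i$). The key observation is that the value $f^{(\Delta)}(x)_j$ computed at component $j$ depends only on $x$ restricted to the in-neighbors of $j$, and for each in-neighbor $i$ the relevant datum is solely whether $j$ reads the updated or the non-updated state of $i$ — which is precisely $\lab_\Delta((i,j))$. Thus I would argue that $f^{(\Delta)}(x)_j$ is a function only of $x$ and of the labels on arcs entering $j$; since those labels coincide for $\Delta$ and $\Delta'$, the two outputs at $j$ agree.

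The cleanest way to make this rigorous is by induction following a linear extension of the update order. First I would note that the only thing that matters for the final state of component $j$ is the single application of $f_j$ at the block $\Delta_b\ni j$ (components are updated exactly once), and that the argument fed into $f_j$ at that moment has, in coordinate $i$, either the original value $x_i$ (if $i$ is in a strictly later block, label $\labminus$) or the final value $f^{(\Delta)}(x)_i$ (if $i$ is in the same or an earlier block, label $\labplus$). I would formalize this by processing components in the order dictated by $\Delta$ and proving by induction on this order that the state threaded through the computation at each component equals the corresponding state under $\Delta'$; the base case covers components in the first block (all entering $\labminus$-arcs read $x$), and the inductive step uses that every $\labplus$-arc $(i,j)$ points from an already-processed $i$ whose final value, by the induction hypothesis, is the same under both schedules, while every $\labminus$-arc reads the common input $x$.

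The main obstacle is handling the within-block simultaneity together with the possibility of \emph{mutual} $\labplus$-labels, i.e. two components $i,j$ in the same block with arcs in both directions, both labeled $\labplus$: here neither reads the other's updated value, yet the label is $\labplus$ for both. I must therefore be careful that ``$\labplus$'' does \emph{not} literally mean ``reads the updated value'' but rather ``reads the value as it stands at the end of $j$'s block-update round,'' and that for same-block pairs this coincides with reading $x$ because both are updated in parallel from the pre-block state. Reconciling this — showing that the labeling still determines the read value uniformly across the ``earlier block'' and ``same block'' cases — is the delicate point, and I would address it by defining, for each component $j$, the precise input vector actually passed to $f_j$ in terms of the labels, and checking that this vector is identical for $\Delta$ and $\Delta'$ whenever their labelings agree.
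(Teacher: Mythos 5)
Neither the paper nor its sources printed here prove this statement (it is quoted from Aracena et al.), so your proposal stands on its own. Its overall architecture is the right one: reduce to $f^{(\Delta)}(x)=f^{(\Delta')}(x)$ for all $x$, observe that the vector actually fed to $f_j$ is determined by $x$ together with the labels of the arcs entering $j$, and induct along the update order of $\Delta$. However, there is a genuine error that breaks the key step: you have the label semantics backwards. In the paper's definition, an arc $(i,j)$ receives $\labplus$ when $j$'s block is \emph{at most} $i$'s block ($b \leq a$), i.e.\ when $j$ is updated before or together with $i$, so that $j$ reads the \emph{old} value $x_i$; it receives $\labminus$ when $i$ is updated strictly earlier ($a<b$), so that $j$ reads the \emph{updated} (hence final) value $f^{(\Delta)}(x)_i$. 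You state the opposite (``$\labplus$ exactly when $i$ is updated in the same block as $j$ or in an earlier block, so $j$ sees the already-updated value''). This is not a harmless renaming: under your convention the label $\labplus$ lumps together the same-block case (where $j$ reads $x_i$) and the earlier-block case (where $j$ reads the updated value), so the label no longer determines which value is read, and the statement is simply false for that labeling. Concretely, take $n=2$, $f_0(x)=\neg x_0$, $f_1(x)=x_0$ (the only arc of the loopless interaction digraph is $(0,1)$), $\Delta=(\{0,1\})$ and $\Delta'=(\{0\},\{1\})$: in your convention both schedules label $(0,1)$ with $\labplus$, yet $f^{(\Delta)}(x)_1=x_0$ while $f^{(\Delta')}(x)_1=\neg x_0$. (In the paper's convention these two schedules get different labels, as they must.)

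The ``delicate point'' you flag in your last paragraph --- mutual $\labplus$ arcs inside one block not reading updated values --- is precisely the symptom of this inversion, and your proposed repair (reading ``the value as it stands at the end of $j$'s block-update round'') cannot work, because under your convention the read value genuinely is not a function of the label. Once the convention is corrected, the difficulty evaporates and your induction goes through with the roles of $\labplus$ and $\labminus$ exchanged: $\labplus$ \emph{uniformly} means ``$f_j$ reads $x_i$'' (both for same-block pairs and when $j$ precedes $i$), $\labminus$ uniformly means ``$f_j$ reads the updated value of $i$'', and since each component is updated exactly once per sweep, the value read along a $\labminus$ arc is the final value $f^{(\Delta)}(x)_i$. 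The induction is then a strong induction on the block index of $j$ in $\Delta$: the base case is components whose in-arcs are all $\labplus$ (in particular the first block), and in the inductive step every $\labminus$ in-arc $(i,j)$ has $i$ in a strictly earlier block of \emph{both} schedules (since the labelings agree), so the induction hypothesis gives $f^{(\Delta)}(x)_i=f^{(\Delta')}(x)_i$ and the two input vectors to $f_j$ coincide on every coordinate that influences $j$.
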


A very important remark is that not all labelings correspond to \emph{valid}
update digraphs (\ie such that there are update schedules giving these
labelings). For example, if two arcs $(i,j)$ and $(j,i)$ belong to the
interaction digraph and are both labeled $\labminus$, it would mean that $i$ is
updated prior to $j$ and $j$ is updated prior to $i$, which is contradictory.
Fortunately there is a nice characterisation of \emph{valid} update digraphs.

\begin{theorem}[\cite{Aracena2011}]\label{theorem:lab_valid}
  Given $f$ with $\loopless{G_f}=(V,A)$,
  the label function $\lab: A \to \set{ \labplus,\labminus}$ is valid
  if and only if there is no cycle $(i_0,i_1,\dots,i_k)$,
  with $i_0=i_k$ and $k>0$, such that
  \begin{itemize}
    \item $\forall 0 \leq j < k:
      ((i_j,i_{j+1}) \in A \wedge \lab((i_j,i_{j+1}))=\labplus) \vee
      ((i_{j+1},i_j) \in A \wedge \lab((i_{j+1},i_j))=\labminus)$,
    \item $\exists 0 \leq i < k: \lab((i_{j+1},i_j))=\labminus$.
  \end{itemize}
\end{theorem}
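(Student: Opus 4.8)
The plan is to reformulate validity as the solvability of a system of difference constraints, and then to recognise the forbidden cycle of the statement as precisely an infeasibility certificate for that system. First I would record the operational meaning of validity: $\lab$ is valid exactly when there is a \emph{level function} $\ell\colon V\to\N$ (sending a component to the index of its block, so $\ell(v)=a \iff v\in\Delta_a$) with $\ell(i)\geq\ell(j)$ for every arc $(i,j)\in A$ labeled $\labplus$ and $\ell(i)<\ell(j)$ for every arc labeled $\labminus$; conversely, any such $\ell$ yields an update schedule by sorting its distinct values and grouping components accordingly. I would then encode these inequalities in a weighted digraph $W$ on vertex set $V$, with the convention that an edge $u\to v$ of weight $w$ stands for the constraint $\ell(u)\geq\ell(v)+w$: for each $\labplus$ arc $(i,j)$ add $i\to j$ of weight $0$ (encoding $\ell(i)\geq\ell(j)$), and for each $\labminus$ arc $(i,j)$ add $j\to i$ of weight $1$ (encoding $\ell(j)\geq\ell(i)+1$, \ie $\ell(i)<\ell(j)$). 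The key observation is that a single step of the cycle in the statement, from $i_t$ to $i_{t+1}$, is exactly an edge $i_t\to i_{t+1}$ of $W$, of weight $0$ when it comes from a forward $\labplus$ arc and of weight $1$ when it comes from a backward $\labminus$ arc; the second bullet then says at least one edge has weight $1$. Hence a forbidden cycle is precisely a cycle of positive total weight in $W$.

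For necessity, I would take a valid labeling with a witnessing level function $\ell$ and sum the edge constraints along any positive cycle of $W$: the differences $\ell(u)-\ell(v)$ telescope to $0$ around the cycle, while the weights sum to at least $1$, yielding $0\geq 1$, a contradiction. Thus a valid labeling admits no forbidden cycle.

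For sufficiency I would assume $W$ has no positive cycle and construct $\ell$ directly. This is the classical feasibility result for difference constraints: under the no-positive-cycle hypothesis the quantity $\ell(v)$ defined as the maximum weight of a path of $W$ starting at $v$ is finite, and it satisfies $\ell(u)\geq\ell(v)+w$ for every edge $u\to v$ of weight $w$, hence all the required inequalities, with strict inequality exactly on the $\labminus$ arcs. Equivalently one may contract the strongly connected components of $W$ (each containing only weight-$0$ edges, for otherwise it would close a positive cycle) and assign levels along a topological order of the resulting acyclic condensation. Collapsing the distinct values of $\ell$ to consecutive indices $1,\dots,k$ produces an ordered partition $\Delta\in\Pn$, and a direct check that $\lab_\Delta=\lab$ on each arc completes the argument.

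I expect the sufficiency direction to be the main obstacle, since it requires actually producing an update schedule rather than merely deriving a contradiction. The delicate points are confirming that the potential obtained from longest paths realises a \emph{strict} drop across every $\labminus$ arc while allowing equality across $\labplus$ arcs, and that the two opposite edge orientations used for the two labels correctly reproduce $\lab_\Delta$ on every arc of $\loopless{G_f}$.
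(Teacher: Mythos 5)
Your proof is correct, but there is nothing in the paper to compare it against: Theorem~\ref{theorem:lab_valid} is imported from~\cite{Aracena2011} and used as a black box, with no in-paper proof. Taken on its own terms, your argument is sound and complete. The reformulation of validity as the existence of a level function $\ell$ with $\ell(i)\geq\ell(j)$ on $\labplus$ arcs $(i,j)$ and $\ell(i)<\ell(j)$ on $\labminus$ arcs is exactly the content of the definition of $\lab_\Delta$, and your encoding into the weighted digraph $W$ makes the forbidden cycles of the statement coincide with the positive-weight cycles of $W$ (the reversal of $\labminus$ arcs in the statement is precisely your choice of edge direction for weight-$1$ constraints). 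Necessity is then the telescoping argument, and for sufficiency the longest-path potential is well defined because all weights lie in $\{0,1\}$, so under the no-forbidden-cycle hypothesis every cycle of $W$ has weight exactly $0$ and maximal walk weights are attained by simple paths; prepending an edge gives $\ell(u)\geq\ell(v)+w$, which yields the strict drop on $\labminus$ arcs, and collapsing the values of $\ell$ to consecutive indices recovers an ordered partition $\Delta$ with $\lab_\Delta=\lab$ because the two defining conditions ($\ell(i)\geq\ell(j)$ versus $\ell(i)<\ell(j)$) are complementary. Two cosmetic remarks: the phrase ``strict inequality \emph{exactly} on the $\labminus$ arcs'' overstates what you need and what you prove (strictness may also happen on $\labplus$ arcs, harmlessly); and your fallback construction via the condensation of strongly connected components and a topological order is essentially the argument used in the original reference, so you have in effect reconstructed both the standard difference-constraints proof and the original one.
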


In words, Theorem~\ref{theorem:lab_valid} states that a labeling is valid if
and only if the multi-digraph where the labeling is unchanged but the
orientation of arcs labeled $\labminus$ is reversed, does not contain a cycle
with at least one arc label $\labminus$ ({\em forbidden cycle}).

According to Theorem~\ref{theorem:lab}, update digraphs define equivalence
classes of update schedules: $\Delta \equiv \Delta'$ if and only if
$\lab_\Delta=\lab_{\Delta'}$. Given a BN $f$, 
the set of equivalence classes of update schedules is therefore defined as
\[
  \updates{f}=\set{ U_{f^{(\Delta)}} \mid \Delta\in\Pf}.
\]

\subsection{Sensitivity to synchronism}
\label{ss:sensitivity}

The sensitivity to synchronism $\sens{f}$ of a BN $f$ quantifies the proportion of 
distinct dynamics  \wrt non-equivalent update schedules. The idea is that
when two or more update schedules are equivalent then $\sens{f}$ decreases,
while it increase when distinct update schedules bring to different dynamics. 
More formally, given a BN $f$ we define
\[
  \sens{f}=\frac{|\dynamics{f}|}{|\updates{f}|}.
\]

Obviously, it holds that $\frac{1}{|\updates{f}|} \leq \sens{f} \leq 1$, and a BN
$f$ is as much sensible to synchronism as it has different dynamics when the
update schedule varies. The extreme cases are a BN $f$ with
$\sens{f}=\frac{1}{|\updates{f}|}$ that has always the same dynamics $D_{f^{(\Delta)}}$
for any update schedule $\Delta$, and a BN $f$ with $\sens{f}=1$ which has a
different dynamics for different update schedules (for each $\Delta \not
\equiv \Delta'$ it holds $D_{f^{(\Delta)}} \neq D_{f^{(\Delta')}}$).
A BN $f$ is \emph{max-sensitive} to synchronism iff $\sens{f}=1$.
Note that a BN $f$ is max-sensitive if and only if
\begin{align}
  \label{eq:max-sensitive}
  \forall\Delta\in\Pf\forall\Delta'\in\Pf\;(\Delta\not\equiv \Delta') \Rightarrow\exists x \in \bool^n
  \exists i\in\intz{n}\,f^{(\Delta)}(x)_i \neq f^{(\Delta')}(x)_i\enspace.
\end{align}

\subsection{Elementary cellular automata}
\label{ss:eca}

In this study we investigate the sensitivity to synchronism of \emph{elementary cellular automata} (ECA)
over periodic configurations. Indeed, they are a subclass of BN
in which all components (also called \emph{cells} in this context) have 
the same local rule, as follows. Given a size $n$, the ECA of local function
$h: \bool^3 \to \bool$ is the BN $f$ such that
\[
  \forall i \in \intz{n}: f_i(x)=h(x_{i-1},x_i,x_{i+1})
\]
where components are taken modulo $n$ (this will be the case 
throughout all the paper without explicit mention).
We use \emph{Wolfram numbers}~\cite{Wolfram2002} to designate each of the
$256$ ECA local rule $h: \bool^3 \to \bool$ as the number
\[
  w(h)=\sum_{(x_1,x_2,x_3)\in\bool^3} h(x_1,x_2,x_3)2^{2^2x_1+2^1x_2+2^0x_3}.
\]
Given a Boolean function $h: \bool^3 \to \bool$, consider the following transformations over local rules:
$\tau_i(h)(x,y,z)=h(x,y,z)$, $\tau_r(h)(x,y,z)=h(z,y,x)$, $\tau_n(h)(x,y,z)=1-h(1-z,1-y,1-x)$
and $\tau_{rn}(h)(x,y,z)=1-h(1-z,1-y,1-x)$ for all $x,y,z\in\bool$.
In our
context, they preserve the sensitivity to synchronism.
For this reason we consider only 88 ECA rules up to $\tau_i$, $\tau_r$, $\tau_n$ and $\tau_{rn}$. 
Table~\ref{tab:88-topo} reports these equivalence classes of ECA,
the smallest Wolfram number per class is indicated.

\begin{table}
\centerline{
\begin{tabular}{|p{13cm}|}
\hline
0, 1, 2, 3, 4, 5, 6, 7, 8, 9, 10, 11, 12, 13, 14, 15, 18, 19, 22, 23, 24, 25,
26, 27, 28, 29, 30, 32, 33, 34, 35, 36, 37, 38, 40, 41, 42, 43, 44, 45, 46, 50,
51, 54, 56, 57, 58, 60, 62, 72, 73, 74, 76, 77, 78, 90, 94, 104, 105, 106, 108,
110, 122, 126, 128, 130, 132, 134, 136, 138, 140, 142, 146, 150, 152, 154, 156,
160, 162, 164, 168, 170, 172, 178, 184, 200, 204, 232\\
\hline
\end{tabular}
}
\caption{ECA local rules up to $\tau_i$, $\tau_r$, $\tau_n$ and $\tau_{rn}$.}
\label{tab:88-topo}
\end{table}

The definitions of Subsection~\ref{ss:sensitivity} 
are applied to
ECA rules as follows. Given a size $n$, the {\em ECA interaction digraph of
size $n$} $\GECA{n}=(V,A)$ 
is such that
$V=\intz{n}$ and $A=\set{ (i+1,i),(i,i+1) \mid i \in \intz{n}}$.

In~\cite{pmmor19,rmmop18}, it is proved that
\[
  |\updatesECA{n}|=3^n-2^{n+1}+2.
\]
where $\updatesECA{n}$ is the set of valid labelings of $\GECA{n}$.
The sensitivity to synchronism of ECAs is measured relatively to the family of
ECAs, and therefore relatively to this count of valid labelings of $\GECA{n}$,
even for rules where some arcs do not correspond to effective influences (one may think of rule $0$). Except from this subtlety, the measure is 
correctly defined by considering, for an ECA rule number $\alpha$ and a 
size $n$, that $h_\alpha\colon\bool^3 \to \bool$ is its local rule, and 
that $f_{\alpha,n}\colon\bool^n \to \bool^n$
is its global function on periodic configurations of size $n$,
\[
  \forall x \in \bool^n\;f_{\alpha,n}(x)_i= h_\alpha(x_{i-1},x_i,x_{i+1}).
\]
Then, the sensitivity to synchronism of ECA rule number $\alpha$ is given by
\[
  \sens{f_{\alpha,n}}=\frac{|\dynamics{f_{\alpha,n}}|}{3^n-2^{n+1}+2}.
\]
An ECA rule number $\alpha$ is ultimately
\emph{max-sensitive to synchronism} when
\[
  \lim_{n\to+\infty} \sens{f_{\alpha,n}}=1.
\]

The following result provides a first overview of sensitivity to
synchronism in ECA.

\begin{theorem}[\cite{pmmor19,rmmop18}]
  \label{theorem:max}
  For any size $n \geq 7$, the nineteen
  ECA rules 0, 3, 8, 12, 15, 28, 32, 34, 44,
  51, 60, 128, 136, 140, 160, 162, 170, 200 and 204
  are not max-sensitive to synchronism.
  The remaining sixty nine other rules are max-sensitive to synchronism.
\end{theorem}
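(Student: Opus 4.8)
The plan is to treat the two directions separately and, in both, to exploit the reduction to the $88$ representative rules afforded by the transformations $\tau_i,\tau_r,\tau_n,\tau_{rn}$, together with the elementary observation that $D_{f^{(\Delta)}}=D_{f^{(\Delta')}}$ holds if and only if $f^{(\Delta)}=f^{(\Delta')}$ as functions $\bool^n\to\bool^n$, since the transition digraph is exactly the functional graph of $f^{(\Delta)}$. By Theorem~\ref{theorem:lab} and Theorem~\ref{theorem:lab_valid}, update-equivalence classes are in bijection with valid labelings of $\GECA{n}$, and on the cycle such a labeling amounts to choosing, for each edge $\{i,i+1\}$, one of three relations ($i$ before $i+1$, $i+1$ before $i$, or the two updated in the same part), subject only to the global constraint that the strict relations do not close up into a directed cycle around the ring. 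I would phrase the whole argument in these combinatorial terms.

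For the nineteen non-max-sensitive rules I would, for each representative, exhibit an explicit pair $\Delta\not\equiv\Delta'$ with $f^{(\Delta)}=f^{(\Delta')}$, grouping the rules by the algebraic structure of the local map $h_\alpha$ that makes the relative update order of adjacent cells invisible: the constant rule $0$ and the identity rule $204$ are immediate, while the projection-, shift-, conjunction-, and disjunction-type rules (e.g.\ $128,136,160,170,200$) have local maps idempotent or monotone enough that, on a suitable edge, swapping the order of two neighbors (turning one $\labplus/\labminus$ pair into the other) never changes any coordinate of the output. Concretely I would pick two labelings differing on a single well-chosen edge and check that the cascade they induce cancels, so that the two global functions coincide; this is a finite verification per rule that I expect to be routine once the right edge and background are identified.

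The harder direction is max-sensitivity of the remaining sixty-nine rules, where the statement is universal: for every pair $\Delta\not\equiv\Delta'$ I must produce a witness $x$ with $f^{(\Delta)}(x)\neq f^{(\Delta')}(x)$, as in~\eqref{eq:max-sensitive}. Since two non-equivalent labelings differ on at least one edge $\{i,i+1\}$, the strategy is to build a \emph{local witness}: start from a background configuration that is a fixed point of $h_\alpha$ under every schedule (typically $\0^n$ or $\1^n$, or else a short periodic pattern), plant a controlled perturbation in the cells surrounding edge $\{i,i+1\}$, and verify that the two update orders of $i$ and $i+1$ yield different updated values there. The quiescent background is precisely what prevents the cascade around the ring from contaminating the comparison, so that the difference produced locally survives in the global output.

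The main obstacle is exactly this cascade control combined with the universal quantifier. Two schedules may differ on many edges simultaneously, so I cannot reduce to single-edge flips, as distinct dynamics is not transitive along a path of one-edge changes; instead I must, for each rule, select one distinguishing edge and a background/perturbation robust to whatever the rest of the schedule does. I expect the hypothesis $n\geq 7$ to enter here, furnishing enough room to embed the gadget without wrap-around interference and ensuring the labeling count $3^n-2^{n+1}+2$ is already in its generic regime. The verification that a suitable gadget exists for each of the sixty-nine rules is a finite, rule-by-rule (and plausibly computer-assisted) check, which I anticipate to be the most laborious part of the proof.
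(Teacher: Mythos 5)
First, a point of comparison: this paper does not prove Theorem~\ref{theorem:max} at all --- it is imported from \cite{pmmor19,rmmop18} --- so the only portion that can be checked against the text is the easy half, which Section~\ref{s:theory} re-establishes in strengthened form (exact sensitivities). Your treatment of that half is broadly in line with the paper: for the ten rules whose interaction digraph is a proper subgraph of $\GECA{n}$ (rules $0,3,12,15,34,51,60,136,170,204$) a non-equivalent pair with equal dynamics is immediate, and for the remaining nine rules explicit equal-dynamics pairs exist. But ``routine once the right edge and background are identified'' undersells rules $8$, $128$, $160$, $162$: there, a pair of schedules differing on a single arc yields equal dynamics only when essentially \emph{all other} labels are forced into one rigid pattern (Lemma~\ref{lemma:special-labminus}, Figure~\ref{fig:special}), and proving the equality requires the chain-of-influence machinery of Equation~\eqref{eq:d} and Lemmas~\ref{lemma:di}, \ref{lemma:dset} and~\ref{lemma:128d}, not a local cancellation check --- you cannot take an arbitrary schedule and swap one well-chosen edge.

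The genuine gap is in the other half, max-sensitivity of the sixty-nine rules, and it is concrete. Your gadget needs a background ``that is a fixed point of $h_\alpha$ under every schedule''; such a configuration is exactly a fixed point of the synchronous map $f_{\alpha,n}$. Rule $105$, which is among the sixty-nine, has local rule $h_{105}(x_1,x_2,x_3)=1\oplus x_1\oplus x_2\oplus x_3$, whose fixed points are characterized by $x_{i-1}\neq x_{i+1}$ for all $i\in\intz{n}$; on a ring of odd size this amounts to a proper $2$-coloring of an odd cycle, which does not exist. So for every odd $n\geq 7$, rule $105$ has no schedule-invariant background whatsoever --- uniform or periodic --- and your construction cannot even start. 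Moreover, once the background is not invariant, the claim that correctness of a fixed-size gadget is a ``finite, rule-by-rule check'' collapses: by Equation~\eqref{eq:d} the image at a cell is a nested composition along chains of influence of depth up to $n$, the labels of the other schedule far from the chosen edge genuinely affect the outcome, and robustness must be proved by an induction along the chain --- exactly the ``witness sequence'' induction (Formulas~\eqref{eq:start_small} and~\eqref{eq:continue_small}) that the paper must develop even for the far easier, non-max-sensitive Class II rules in Theorem~\ref{theorem:3-12-15-34-60-136}. Without such an argument, uniform in $n$ and carried out for each of the sixty-nine rules, the universally quantified half of the theorem remains a plan rather than a proof.
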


Theorem~\ref{theorem:max} gives a precise measure of sensitivity for the sixty 
nine maximum sensitive rules, for which $\sens{f_{\alpha,n}}=1$ for all $n \geq
7$, but for the nineteen that are not maximum sensitive it only informs that
$\sens{f_{\alpha,n}}<1$ for all $n \geq 7$. In the rest of this paper we study the
precise dependency on $n$ of $\sens{f_{\alpha,n}}$ for these rules, filling the
huge gap between $\frac{1}{3^n-2^{n+1}+2}$ and
$\frac{3^n-2^{n+1}+1}{3^n-2^{n+1}+2}$. This will offer a finer view on the
sensitivity to synchronism of ECA.
The results are summarized in Table~\ref{tab:results}.

\begin{table}
  \centerline{
    \renewcommand{\arraystretch}{1.25}
    \begin{tabular}{|c|c|c|c|}
      \hline
      Class & Rules ($\alpha$) & Sections & Sensitivity ($\sens{f_{\alpha,n}}$)\\
      \hline
      I & $0, 51, 200, 204$ & \ref{ss:51_204} & $\frac{1}{3^n-2^{n+1}+2}$ for any $n \geq 3$\\
      \multirow{2}{*}{II} & $3,12,15,34,60,136,170$ & \multirow{2}{*}{\ref{cl:low}} & \multirow{2}{*}{$\frac{2^n-1}{3^n-2^{n+1}+2}$ for any $n \geq 4$}\\
      & $28,32,44,140$ & & \\    
      III & $8$ & \ref{s:8} & $\frac{\phi^{2n}+\phi^{-2n}-2^n}{3^n-2^{n+1}+2}$ for any $n \geq 5$\\
      IV & $128,160,162$ & \ref{s:128_160_162} &$\frac{3^n-2^{n+1}-cn+2}{3^n-2^{n+1}+2}$ for any $n \geq 5$\\
      \hline
    \end{tabular}}
  \caption{The rules are divided into four classes ($\phi$ is the golden ratio).}
  \label{tab:results}
\end{table}

\section{Experimental measures of sensitivity to synchronism}
\label{s:exp}

This section presents some numerical calculations of $\sens{f_{\alpha,n}}$
for rules that are not max-sensitive to synchronism according to
Theorem~\ref{theorem:max}. An interesting variety of behaviors
(for $n=3$ to $10$) is observed. It will be characterized in Section~\ref{s:theory}.

\begin{center}
  \includegraphics[scale=.8,page=1]{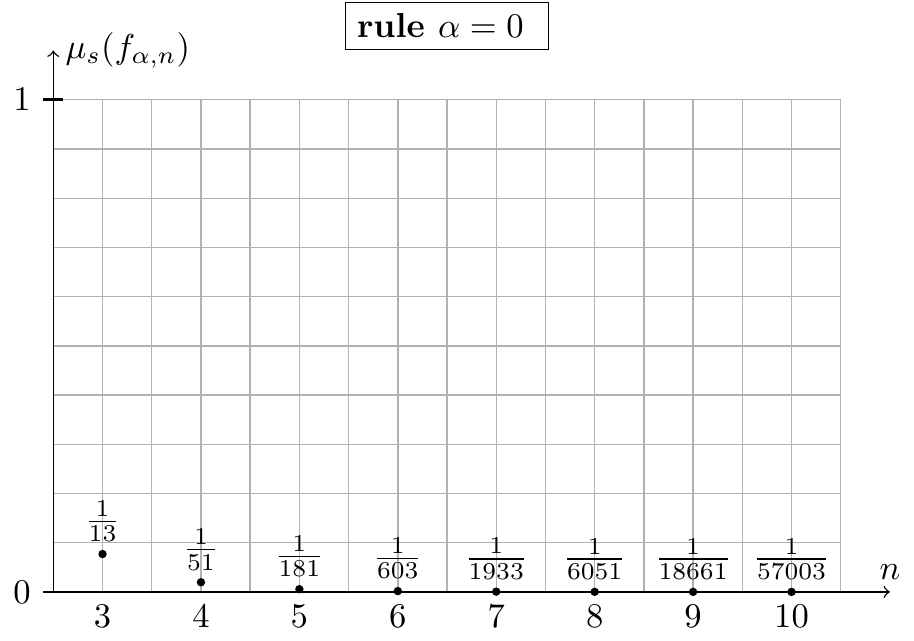}
  \includegraphics[scale=.8,page=2]{ECA-sensitivity-plot.pdf}\\
  \includegraphics[scale=.8,page=3]{ECA-sensitivity-plot.pdf}
  \includegraphics[scale=.8,page=4]{ECA-sensitivity-plot.pdf}\\
  \includegraphics[scale=.8,page=5]{ECA-sensitivity-plot.pdf}
  \includegraphics[scale=.8,page=6]{ECA-sensitivity-plot.pdf}\\
  \includegraphics[scale=.8,page=7]{ECA-sensitivity-plot.pdf}
  \includegraphics[scale=.8,page=8]{ECA-sensitivity-plot.pdf}\\
  \includegraphics[scale=.8,page=9]{ECA-sensitivity-plot.pdf}
  \includegraphics[scale=.8,page=10]{ECA-sensitivity-plot.pdf}\\
  \includegraphics[scale=.8,page=11]{ECA-sensitivity-plot.pdf}
  \includegraphics[scale=.8,page=12]{ECA-sensitivity-plot.pdf}\\
  \includegraphics[scale=.8,page=13]{ECA-sensitivity-plot.pdf}
  \includegraphics[scale=.8,page=14]{ECA-sensitivity-plot.pdf}\\
  \includegraphics[scale=.8,page=15]{ECA-sensitivity-plot.pdf}
  \includegraphics[scale=.8,page=16]{ECA-sensitivity-plot.pdf}\\
  \includegraphics[scale=.8,page=17]{ECA-sensitivity-plot.pdf}
  \includegraphics[scale=.8,page=18]{ECA-sensitivity-plot.pdf}
  \includegraphics[scale=.8,page=19]{ECA-sensitivity-plot.pdf}
\end{center}

\section{Theoretical measures of sensitivity to synchronism}
\label{s:theory}

This section contains the main results of the paper, regarding the
dependency on $n$ of $\sens{f_{\alpha,n}}$ for ECA rules that are 
not max-sensitive to synchronism.

As illustrated in Table~\ref{tab:results}, the
ECA rules can be divided into four classes according to their sensitivity functions.
Each class will require specific proof techniques but all of them
have interaction digraphs as a common denominator.

As a starting point, one can consider the case of
%
ECA rules have an interaction digraph which is a proper
subgraph of $\GECA{n}$. Indeed, when considering them as BN
many distinct update schedules give the same labelings and hence,
by Theorem~\ref{theorem:lab} and the definition of $\sens{f_{\alpha,n}}$,
they cannot be max-sensitive.
This is the case of the following set of ECA rules
$\mathcal{S}=\set{0, 3, 12, 15, 34, 51, 60, 136, 170, 204}$.
Indeed, denoting $G_{f_{\alpha,n}}=(\intz{n},A_{f_{\alpha,n}})$ the
interaction digraph of ECA rule $\alpha$ of size $n$ for $\alpha\in\mathcal{S}$, one finds $\forall n \geq3$ and 
$\forall i\in \intz{n}$:
\begin{multicols}{2}
  \begin{itemize}
    \item $(i+1,i) \notin A_{f_{0,n}}$ and $(i-1,i) \notin A_{f_{0,n}}$,
    \item $(i+1,i) \notin A_{f_{3,n}}$,
    \item $(i+1,i) \notin A_{f_{12,n}}$,
    \item $(i+1,i) \notin A_{f_{15,n}}$,
    \item $(i,i+1) \notin A_{f_{34,n}}$,
    \item $(i+1,i) \notin A_{f_{51,n}}$ and $(i-1,i) \notin A_{f_{51,n}}$,
    \item $(i+1,i) \notin A_{f_{60,n}}$,
    \item $(i,i+1) \notin A_{f_{136,n}}$,
    \item $(i,i+1) \notin A_{f_{170,n}}$,
    \item $(i+1,i) \notin A_{f_{204,n}}$ and $(i-1,i) \notin A_{f_{204,n}}$.
  \end{itemize}
\end{multicols}

Let us now introduce some useful results and notations that will be widely
used in the sequel.
Given an update schedule $\Delta$, in order to study the chain of influences
involved in the computation of the image at cell $i \in \intz{n}$, define 
\begin{align*}
  \dleft{\Delta}(i)=&\max\set{ k \in \N \mid
  \forall j \in \N : 0 < j < k \implies \lab_\Delta((i-j,i-j+1))=\labminus}\\
  \dright{\Delta}(i)=&\max\set{ k \in \N \mid
  \forall j \in \N : 0 < j < k \implies \lab_\Delta((i+j,i+j-1))=\labminus}.
\end{align*}
These quantities are well defined because $k=1$ is always a possible value,
and moreover, if
$\dleft{\Delta}(i)$ or $\dright{\Delta}(i)$ is greater than $n$, then there is a
forbidden cycle in the update digraph of schedule $\Delta$
(Theorem~\ref{theorem:lab_valid}). Note that for any $\Delta\in\Pn$
\[
  \lab_\Delta((i-\dleft{\Delta}(i),i-\dleft{\Delta}(i)+1))=\labplus
  \quad\text{and}\quad
  \lab_\Delta((i+\dright{\Delta}(i),i+\dright{\Delta}(i)-1))=\labplus.
\]
See Figure~\ref{fig:d} for an illustration.

\begin{figure}
  \centerline{\includegraphics{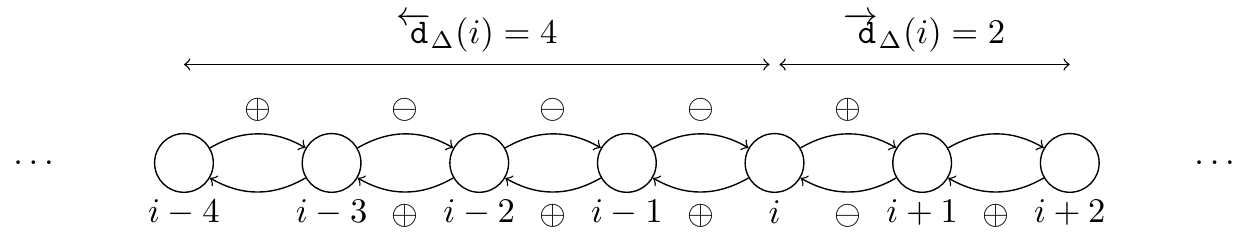}}
  \caption{illustration of the chain of influences for some update schedule $\Delta$.}
  \label{fig:d}
\end{figure}

The purpose of these quantities is that it holds for any $x \in \bool^n$, 
\begin{align}
  \label{eq:d}
  \begin{tikzpicture}[baseline=-1.5cm]
    \node[inner sep=0pt] (i) at (0,0) {$(\qquad,x_i,\qquad)$};
    \node[left,inner sep=0pt] at (i.west) {$f_\alpha^{(\Delta)}(x)_i=r_\alpha$};
    \draw[decorate,decoration={brace,aspect=0.9,amplitude=4pt}] (-3.5,-.6) -- ++ (3,0);
    \draw[decorate,decoration={brace,aspect=0.1,amplitude=4pt}] (.5,-.6) -- ++ (3,0);
    \node at (-2,-1) {$r_\alpha(\qquad,x_{i-1},x_i)$};
    \node at (2,-1) {$r_\alpha(x_i,x_{i+1},\qquad)$};
    \draw[decorate,decoration={brace,aspect=0.8,amplitude=4pt}] (-4.3,-1.6) -- ++ (2,0);
    \draw[decorate,decoration={brace,aspect=0.2,amplitude=4pt}] (2.5,-1.6) -- ++ (2,0);
    \node at (-3.5,-2) {$\dots$};
    \node at (3.5,-2) {$\dots$};
    \draw[decorate,decoration={brace,aspect=0.5,amplitude=4pt}] (-7.5,-2.6) -- ++ (7,0);
    \draw[decorate,decoration={brace,aspect=0.5,amplitude=4pt}] (.5,-2.6) -- ++ (7,0);
    \node at (-4,-3) {$r_\alpha(x_{i-\!\!\dleft{\Delta}(i)},x_{i-\!\!\dleft{\Delta}(i)+1},x_{i-\!\!\dleft{\Delta}(i)+2})$};
    \node at (4,-3) {$r_\alpha(x_{i+\!\!\dright{\Delta}(i)-2},x_{i+\!\!\dright{\Delta}(i)-1},x_{i+\!\!\dright{\Delta}(i)})$};
  \end{tikzpicture}
\end{align}
\ie the quantities $\dleft{\Delta}(i)$ and $\dright{\Delta}(i)$ are the
lengths of the chain of influences at cell $i$ for the update schedule $\Delta$,
on both sides of the interaction digraph.
If the chains of influences at some cell $i$ are identical for two update schedules,
then the images at $i$ we be identical for any configuration, as stated in the following lemma.
\begin{lemma}
  \label{lemma:di}
  For any ECA rule $\alpha$, any $n \in \N$, any $\Delta,\Delta'\in\Pn$ and any $i\in\intz{n}$,
  it holds that
  \[
    \dleft{\Delta}(i) = \dleft{\Delta'}(i) \wedge \dright{\Delta}(i) = \dright{\Delta'}(i)
    \text{ implies }
    \forall x \in \{\0,\1\}^n\;f_{\alpha,n}^{(\Delta)}(x)_i=f_{\alpha,n}^{(\Delta')}(x)_i.
  \]
\end{lemma}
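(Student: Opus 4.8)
The plan is to reduce the lemma to the pictorial identity~\eqref{eq:d}, and then to observe that its right-hand side depends on the update schedule \emph{only} through the two numbers $\dleft{\Delta}(i)$ and $\dright{\Delta}(i)$. Once \eqref{eq:d} is established, the hypothesis $\dleft{\Delta}(i)=\dleft{\Delta'}(i)$ and $\dright{\Delta}(i)=\dright{\Delta'}(i)$ makes the two nested expressions for $f_{\alpha,n}^{(\Delta)}(x)_i$ and $f_{\alpha,n}^{(\Delta')}(x)_i$ syntactically identical as functions of $x$, so they agree on every $x\in\zu^n$, which is exactly the conclusion.

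First I would record the elementary \emph{input-selection} fact coming from the definition of the block-sequential dynamics: when the new state of a cell $j$ is computed, the value read along an incoming arc $(i,j)$ is the already-updated value $f_{\alpha,n}^{(\Delta)}(x)_i$ if $\lab_\Delta((i,j))=\labminus$ (\ie $i$ lies in a strictly earlier part than $j$, hence has already been updated), and the original value $x_i$ if $\lab_\Delta((i,j))=\labplus$. Applied to the two incoming arcs $(i-1,i)$ and $(i+1,i)$ of the ECA interaction digraph, this yields $f_{\alpha,n}^{(\Delta)}(x)_i=h_\alpha(\ell,x_i,r)$, where $\ell$ is $x_{i-1}$ when $\lab_\Delta((i-1,i))=\labplus$ and $f_{\alpha,n}^{(\Delta)}(x)_{i-1}$ otherwise, and symmetrically for $r$ on the right.

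The core of the argument is then an induction that unfolds these two dependency chains. I would prove by induction on $\dleft{\Delta}(i)$ that the left input $\ell$ equals a fixed function $\Lambda(\dleft{\Delta}(i);x_{i-\dleft{\Delta}(i)},\dots,x_i)$ of $\dleft{\Delta}(i)$ and the window it delimits. The base case $\dleft{\Delta}(i)=1$ is precisely the situation $\lab_\Delta((i-1,i))=\labplus$, giving $\ell=x_{i-1}$. For the inductive step one checks the two bookkeeping facts that drive the recursion: if $\lab_\Delta((i-1,i))=\labminus$ then the reversed arc $(i,i-1)$ is labeled $\labplus$, so $\dright{\Delta}(i-1)=1$ and the right input of cell $i-1$ is the original $x_i$; and the leftward chain at $i-1$ is one shorter, $\dleft{\Delta}(i-1)=\dleft{\Delta}(i)-1$. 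Hence $\ell=f_{\alpha,n}^{(\Delta)}(x)_{i-1}=h_\alpha(\Lambda(\dleft{\Delta}(i)-1;\cdots),x_{i-1},x_i)$, which by the induction hypothesis depends only on $\dleft{\Delta}(i)$ and the announced window. The symmetric statement for $r$ and $\dright{\Delta}(i)$ is obtained by the mirror argument, and together they give~\eqref{eq:d}.

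The step I expect to be the most delicate is guaranteeing that these recursions terminate and that the index arithmetic is sound. Termination follows from the validity of $\Delta$: by Theorem~\ref{theorem:lab_valid} a chain of $\labminus$-labeled arcs cannot close up into a cycle, so $\dleft{\Delta}(i),\dright{\Delta}(i)\le n$ and each chain necessarily meets a $\labplus$ boundary arc $\lab_\Delta((i-\dleft{\Delta}(i),i-\dleft{\Delta}(i)+1))=\labplus$, where the unfolding stops on an original cell value. All cell indices are read modulo $n$, so the two windows may in principle wrap around or overlap; this causes no difficulty, since \eqref{eq:d} is asserted as an identity between functions of the single variable $x$, and the whole point is that this function is determined by the pair $(\dleft{\Delta}(i),\dright{\Delta}(i))$ alone.
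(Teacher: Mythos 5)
Your proof is correct and takes essentially the same route as the paper: the paper's own proof of Lemma~\ref{lemma:di} is a one-line appeal to Equation~\ref{eq:d}, noting that the nesting of local rules at cell $i$ is identical for $\Delta$ and $\Delta'$ once $\dleft{\Delta}(i)=\dleft{\Delta'}(i)$ and $\dright{\Delta}(i)=\dright{\Delta'}(i)$. The only difference is that you also supply a proof of Equation~\ref{eq:d} itself (by induction on $\dleft{\Delta}(i)$, using the bookkeeping facts $\lab_\Delta((i,i-1))=\labplus$ and $\dleft{\Delta}(i-1)=\dleft{\Delta}(i)-1$, plus validity for termination), which the paper asserts from the definitions without a formal argument.
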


\begin{proof}
  This is a direct consequence of Equation~\ref{eq:d}, because the nesting of local rules
  for $\Delta$ and $\Delta'$ are identical at cell $i$.
\end{proof}

For any rule $\alpha$, size $n$, and update schedules $\Delta,\Delta'\in\Pn$,
it holds that
\begin{align}
  \label{eq:dequiv}
  \forall i \in \intz{n}: \dleft{\Delta}(i) = \dleft{\Delta'}(i) \wedge \dright{\Delta}(i) = \dright{\Delta'}(i)
  \quad\iff\quad \Delta \equiv \Delta'
\end{align}
and this implies $D_{f^{(\Delta)}_{\alpha,n}}=D_{f^{(\Delta')}_{\alpha,n}}$.
Remark that it is possible that
$\dleft{\Delta}(i)+\dright{\Delta}(i) \geq n$, in which case the image at cell
$i$ depends on the whole configuration. Moreover the previous inequality may be strict, 
meaning that the dependencies on both sides may overlap for some cell.
This will be a key in computing the dependency on $n$ of the
sensitivity to synchronism for rule $128$ for example.
Let
\[
  \dset{\Delta}(i)=
  \{ j \leq i \mid i-j \leq \dleft{\Delta}(i) \} \cup
  \{ j \geq i \mid j-i \leq \dright{\Delta}(i) \}
\]
be the set of cells that $i$ depends on under update schedule $\Delta\in\Pn$.
When $\dset{\Delta}(i) \neq \intz{n}$ then cell $i$ does not depend on
the whole configuration, and $\dset{\Delta}(i)$ describes precisely
$\Delta$, as stated in the following lemma.

\begin{lemma}
  \label{lemma:dset}
  For any $\Delta,\Delta'\in\Pn$, it holds that
  \[
    \forall i\in\intz{n}\;d_\Delta(i) = d_{\Delta'}(i) \neq \intz{n}\text{ implies }\Delta \equiv \Delta'.
  \]
\end{lemma}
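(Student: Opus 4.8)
The plan is to reduce this statement to the already-established equivalence~\eqref{eq:dequiv}. By~\eqref{eq:dequiv}, it suffices to show that the hypothesis $d_\Delta(i)=d_{\Delta'}(i)\neq\intz{n}$ for all $i$ forces $\dleft{\Delta}(i)=\dleft{\Delta'}(i)$ and $\dright{\Delta}(i)=\dright{\Delta'}(i)$ for every $i\in\intz{n}$. So the crux is to recover the two one-sided chain lengths $\dleft{\Delta}(i)$ and $\dright{\Delta}(i)$ from the single set $d_\Delta(i)$, and this is exactly where the hypothesis $d_\Delta(i)\neq\intz{n}$ is needed: without it the two sides wrap around the cycle and cannot be separated.

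First I would fix a cell $i$ and unpack the definition of $d_\Delta(i)$. It is the union of a left arc $\{j\leq i \mid i-j\leq\dleft{\Delta}(i)\}$ and a right arc $\{j\geq i\mid j-i\leq\dright{\Delta}(i)\}$ around $i$ on the cycle $\intz{n}$ (indices mod $n$). Since $\dleft{\Delta}(i),\dright{\Delta}(i)\geq 1$, the set $d_\Delta(i)$ is a contiguous arc containing $i$, whose left reach from $i$ is $\dleft{\Delta}(i)$ and whose right reach is $\dright{\Delta}(i)$. When $d_\Delta(i)\neq\intz{n}$, this arc is a proper arc of the cycle, hence it has two distinct endpoints, and the two endpoints are unambiguously identifiable as the extreme left and extreme right cells of the arc relative to $i$. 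Concretely, the cell $i-\dleft{\Delta}(i)$ is the left endpoint and $i+\dright{\Delta}(i)$ is the right endpoint, and these are genuinely different positions precisely because the arc does not cover all of $\intz{n}$.

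Next I would argue that these reaches are recoverable purely from the set, with no extra information about $\Delta$. Given the proper arc $d_\Delta(i)$ containing $i$, there is a unique way to write it as a contiguous interval; its length minus one splits into the portion strictly before $i$ and the portion strictly after $i$, which are exactly $\dleft{\Delta}(i)$ and $\dright{\Delta}(i)$. Hence $\dleft{\Delta}(i)$ and $\dright{\Delta}(i)$ are determined by $d_\Delta(i)$ alone. Applying the same reasoning to $\Delta'$ and using $d_\Delta(i)=d_{\Delta'}(i)$ yields $\dleft{\Delta}(i)=\dleft{\Delta'}(i)$ and $\dright{\Delta}(i)=\dright{\Delta'}(i)$. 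Since $i$ was arbitrary, the left-hand side of~\eqref{eq:dequiv} holds, and therefore $\Delta\equiv\Delta'$.

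The main obstacle, and the only point requiring genuine care, is the separation of the two arcs at the endpoints when the arc is proper: I must verify that a proper contiguous arc of the cycle containing a marked point $i$ admits a unique decomposition into a left reach and a right reach from $i$. This is where $d_\Delta(i)\neq\intz{n}$ is essential — when $d_\Delta(i)=\intz{n}$ the arc is the whole cycle and the left and right reaches can be traded off against each other (different pairs $(\dleft{\Delta}(i),\dright{\Delta}(i))$ summing to at least $n$ give the same set), so the reaches are no longer recoverable and the conclusion can fail. I would make this boundary case explicit to justify why the hypothesis cannot be dropped.
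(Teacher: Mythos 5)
Your proposal is correct and follows essentially the same route as the paper's proof: both recover $\dleft{\Delta}(i)$ and $\dright{\Delta}(i)$ uniquely from the set $\dset{\Delta}(i)$ (the paper via explicit maximal-run formulas, you via the endpoints of the proper arc), and both then conclude $\Delta\equiv\Delta'$ by invoking Formula~\ref{eq:dequiv}. Your explicit remark on why the hypothesis $\dset{\Delta}(i)\neq\intz{n}$ is indispensable matches the paper's opening observation that the two chains do not overlap in that case.
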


\begin{proof}
  If $\dset{\Delta}(i) \neq \intz{n}$ then 
  $\dleft{\Delta}(i)$ and $\dright{\Delta}(i)$ do not overlap.
  Moreover, remark that 
  $\dleft{\Delta}(i)$ and $\dright{\Delta}(i)$ can be deduced from $\dset{\Delta}(i)$.
  Indeed,
  \begin{align*}
  \dleft{\Delta}(i)=&\max\set{j \mid \forall k\in\intz{j}, i-j+k\in\dset{\Delta}(i)}\\
  \dright{\Delta}(i)=&\max\set{j \mid \forall k\in\intz{j}, i+j-k\in\dset{\Delta}(i)}
  \end{align*}
  
  The result follows since knowing $\dright{\Delta}(i)$ and $\dleft{\Delta}(i)$
  for all $i \in \intz{n}$ allows to completely reconstruct $\lab_\Delta$,
  which would be the same as $\lab_{\Delta'}$ if $d_\Delta(i) = d_{\Delta'}(i)$
  for all $i\in\intz{n}$ (Formula~\ref{eq:dequiv}).
\end{proof}
%

\subsection{Class I: Insensitive rules}
\label{ss:0}

This class contains the simplest dynamics, with sensitivity function $\frac{1}{3^n-2^{n+1}+2}$, and it is a good starting point for
our analysis.


\begin{theorem}
  \label{theorem:0}
  $\sens{f_{0,n}}=\frac{1}{3^n-2^{n+1}+2}$ for any $n \geq 1$ and for $\alpha\in\set{0, 51, 204}$.
\end{theorem}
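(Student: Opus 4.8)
The plan is to show that rules $0$, $51$, and $204$ are \emph{insensitive} to synchronism, meaning that all valid update schedules yield exactly the same dynamics, so that $|\dynamics{f_{\alpha,n}}|=1$ and the sensitivity equals $\frac{1}{|\updatesECA{n}|}=\frac{1}{3^n-2^{n+1}+2}$. Since the denominator is fixed by the count of valid labelings of $\GECA{n}$ (independent of the rule), the entire task reduces to proving that the numerator is $1$, i.e.\ that $f_{\alpha,n}^{(\Delta)}=f_{\alpha,n}^{(\Delta')}$ for \emph{all} pairs $\Delta,\Delta'\in\Pn$, for each of these three rules.

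The key observation is that these three rules have local functions depending on at most the cell's own state: rule $0$ is the constant $\0$ ($h_0\equiv\0$), rule $204$ is the identity ($h_{204}(x,y,z)=y$), and rule $51$ is negation ($h_{51}(x,y,z)=1-y$). In each case the local rule $h_\alpha(x_{i-1},x_i,x_{i+1})$ does not actually depend on $x_{i-1}$ or $x_{i+1}$. First I would make this precise by invoking the interaction-digraph facts already listed in the excerpt: for $\alpha\in\{0,51,204\}$ one has $(i+1,i)\notin A_{f_{\alpha,n}}$ and $(i-1,i)\notin A_{f_{\alpha,n}}$ for all $i$, so the loopless interaction digraph $\loopless{G_{f_{\alpha,n}}}$ has \emph{no arcs} at all. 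Consequently, by the chain-of-influence formalism, $\dleft{\Delta}(i)=\dright{\Delta}(i)=1$ for every cell $i$ and every schedule $\Delta$ (the defining maxima are achieved only at $k=1$, since there are no $\labminus$-labeled arcs to extend them).

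With this in hand, the conclusion is immediate from Lemma~\ref{lemma:di}: for any two schedules $\Delta,\Delta'\in\Pn$ and any cell $i$, the chains of influence coincide ($\dleft{\Delta}(i)=\dleft{\Delta'}(i)=1$ and $\dright{\Delta}(i)=\dright{\Delta'}(i)=1$), hence $f_{\alpha,n}^{(\Delta)}(x)_i=f_{\alpha,n}^{(\Delta')}(x)_i$ for all $x$. Thus all schedules give the same global function, $|\dynamics{f_{\alpha,n}}|=1$, and the formula follows. Alternatively, one can argue directly: because the image of cell $i$ depends only on $x_i$ (and is computed as $\0$, $x_i$, or $1-x_i$ respectively), updating a cell is idempotent and order-independent, so $f^{(\Delta)}=f$ for every $\Delta$ regardless of the partition.

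I do not expect a genuine obstacle here, as this is the degenerate base case of the classification; the only point requiring a little care is the bookkeeping that the \emph{denominator} $3^n-2^{n+1}+2$ is the fixed family-wide count of valid labelings of the full digraph $\GECA{n}$ rather than the (empty) interaction digraph of these particular rules — this is exactly the subtlety flagged in the excerpt just before the definition of $\sens{f_{\alpha,n}}$, so I would state explicitly that the measure is taken relative to $\updatesECA{n}$. The range $n\geq 1$ in the statement (versus $n\geq 3$ in Table~\ref{tab:results}) is consistent since the argument that every cell depends only on itself holds for all $n\geq 1$.
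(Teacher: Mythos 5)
Your proposal is correct and takes essentially the same route as the paper: both proofs rest on the single observation that $h_0$, $h_{51}$, $h_{204}$ ignore the neighboring states (constant, negation, identity), so the loopless interaction digraph has no arcs, every update schedule induces the same dynamics, and the numerator is $1$ over the family-wide denominator $3^n-2^{n+1}+2$. The paper closes the argument by noting there is a single equivalence class of update digraphs (Theorem~\ref{theorem:lab}), whereas you go through $\dleft{\Delta}(i)=\dright{\Delta}(i)=1$ and Lemma~\ref{lemma:di} — an immaterial difference, though be aware that this chain-of-influence claim only makes sense relative to the rule's own (arcless) digraph rather than to labelings of $\GECA{n}$, which is why your fallback direct order-independence argument is the cleaner closing step.
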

\begin{proof}
The result for ECA rule $0$ is obvious since $\forall n \geq 1: \forall x \in \bool^n: f_{0,n}(x)=\0^n$.
The ECA Rule 51 is based on the boolean function $r_{51}(x_{i-1},x_i,x_{i-1})=\neg x_i$ and ECA rule 204 is
the identity. Therefore, similarly to ECA rule $0$, for any $n$ their
interaction digraph has no arcs. Hence, there is only one equivalence 
class of update digraph, and one dynamics.
\end{proof}

\label{ss:51_204}

%

\label{s:200}

The ECA rule $200$ also belongs to Class I. It has the following local function $r_{200}(x_1,x_2,x_3)=x_2 \wedge (x_1 \vee x_3)$.
Indeed, it is almost equal to the identity (ECA rule $204$), except for $r_{200}(\0,\1,\0)=\0$.
It turns out that, even if its interaction digraph has all of the $2n$ arcs,
this rule produces always the same dynamics, regardless of the update schedule.

\begin{theorem}
  \label{theorem:200}
  $\sens{f_{200,n}}=\frac{1}{3^n-2^{n+1}+2}$ for any $n \geq 1$.
\end{theorem}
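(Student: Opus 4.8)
The plan is to show that $\sens{f_{200,n}} = \frac{1}{3^n-2^{n+1}+2}$, i.e. that rule $200$ produces the \emph{same} dynamics for every update schedule, which by the definition of $\sens{}$ amounts to proving $|\dynamics{f_{200,n}}| = 1$. Since the denominator $3^n-2^{n+1}+2 = |\updatesECA{n}|$ is the total number of valid labelings, the statement is equivalent to: for any two update schedules $\Delta, \Delta' \in \Pn$ we have $f_{200,n}^{(\Delta)} = f_{200,n}^{(\Delta')}$. By Equation~\ref{eq:d} and Lemma~\ref{lemma:di}, the image at a cell $i$ under $\Delta$ is a nesting of copies of $r_{200}$ reaching $\dleft{\Delta}(i)$ cells to the left and $\dright{\Delta}(i)$ to the right. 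So it suffices to prove that this nested evaluation, \emph{as a function of the configuration $x$}, does not actually depend on the lengths $\dleft{\Delta}(i)$ and $\dright{\Delta}(i)$ at all — it always equals $f_{200,n}(x)_i = r_{200}(x_{i-1},x_i,x_i)$ wait, $=r_{200}(x_{i-1},x_i,x_{i+1})$, the synchronous value.

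First I would isolate the combinatorial heart of the matter with the observation that $r_{200}(x_1,x_2,x_3) = x_2 \wedge (x_1 \vee x_3)$, together with the remark already made in the excerpt that rule $200$ differs from the identity only on input $\0\1\0$. The key algebraic fact to establish is that prepending one extra layer of the rule on either side never changes the output. Concretely, I would prove the two "stability" identities
\begin{align}
  r_{200}\bigl(r_{200}(a,b,x_{i-1}),x_{i-1},x_i\bigr) &= r_{200}(x_{i-1},x_i,x_i)\text{-flavored claim},\nonumber
\end{align}
but stated cleanly: if $y = r_{200}(\,\cdot\,,x_{i-2},x_{i-1})$ is substituted in the left slot, then $r_{200}(y, x_{i-1}, x_i) = r_{200}(x_{i-1},x_{i-1},x_i)$ agrees with the one-step value $r_{200}(x_{i-1},x_i,x_{i+1})$ at cell $i$. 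The mechanism is that the output of rule $200$ always satisfies $r_{200}(u,v,w) \le v$ (the middle bit dominates), so any inserted value in an outer slot is bounded above by the bit it replaces; combined with the monotonicity of $\wedge$ and $\vee$, the disjunction $x_{i-1} \vee x_{i+1}$ that governs the center cell is unaffected. This is the step I expect to be the main obstacle: I must check that the inserted (recursively computed) left and right neighbor values never flip the value of $x_{i-1}\vee x_{i+1}$ relative to using the raw $x_{i-1}, x_{i+1}$, handling the overlap case $\dleft{\Delta}(i)+\dright{\Delta}(i)\ge n$ carefully.

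The cleanest way to discharge that obstacle is a short case analysis on $x_i$. If $x_i = \0$ then $r_{200}(\cdot,\0,\cdot) = \0$ regardless of neighbors, so the output at $i$ is $\0$ under every schedule; done. If $x_i = \1$, then the synchronous output is $x_{i-1}\vee x_{i+1}$, and I must show the nested value equals this. Here I would argue that the inserted left value $r_{200}(\dots,x_{i-1})$, call it $y_{i-1}$, satisfies $y_{i-1}\le x_{i-1}$, and symmetrically $y_{i+1}\le x_{i+1}$ on the right, so $y_{i-1}\vee y_{i+1} \le x_{i-1}\vee x_{i+1}$; the reverse inequality is the delicate direction and follows because whenever $x_{i-1}=\1$ the relevant left-chain value actually computes to $\1$ as long as $x_{i-1}$ sits at a block boundary — which it does, since $\dleft{\Delta}(i)$ is defined precisely so that $x_{i-1}$ is read from the configuration when $\dleft{\Delta}(i)=1$, and otherwise the recursion reaches a $\labplus$ boundary guaranteeing stabilization. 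A clean induction on the chain length, using the idempotence-type identity $r_{200}(a,\1,\1)=\1$ and $r_{200}(\1,\1,b)=\1$, closes the argument. Once equality of images holds at every cell for every $x$, Lemma~\ref{lemma:di} gives $f_{200,n}^{(\Delta)}=f_{200,n}^{(\Delta')}$, hence $|\dynamics{f_{200,n}}|=1$ and the formula follows.
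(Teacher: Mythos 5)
Your proposal is correct and takes essentially the same route as the paper: both arguments show that every update schedule reproduces the synchronous image $x_i \wedge (x_{i-1} \vee x_{i+1})$ at each cell, via a case analysis on $x_i$ that rests on exactly the two facts you identify — $r_{200}(\cdot,\0,\cdot)=\0$ (so \0-cells, hence \0-neighbors, are frozen) and $r_{200}(a,\1,\1)=r_{200}(\1,\1,b)=\1$ (so a \1-neighbor of a \1-cell is still \1 whenever it is read, since it sees the not-yet-updated $x_i=\1$). The half-garbled ``stability identity'' paragraph in the middle is dispensable; your closing case analysis on $x_i$ is the actual proof and it coincides with the paper's.
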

\begin{proof}
  We prove that
  $f_{200,n}^{(\Delta)}(x)=f_{200,n}^{(\Delta^\texttt{sync})}(x)$ for any configuration 
  $x \in \bool^n$ and for any update schedule $\Delta \in\Pn$. 
  For any $i \in \intz{n}$ such that $x_i=\0$, the ECA rule $200$ is the identity,
  therefore it does not depend on the states of its neighbors which may have
  been updated before itself, \ie $f_{200,n}^{(\Delta)}(x)_i=\0=f_{200,n}^{(\Delta^\texttt{sync})}(x)_i$.
  Moreover, for any $i\in\intz{n}$ such that $x_i=\1$, if its two neighbors
  $x_{i-1}$ and $x_{i+1}$ are both in state $\0$ then they
   will remain in state $\0$ and
  $f_{200,n}^{(\Delta)}(x)_i=\0=f_{200,n}^{(\Delta^\texttt{sync})}(x)_i$, otherwise the ECA $200$ is the
  identity map and the two neighbors of cell $i$ also apply the identity,
  thus again $f_{200,n}^{(\Delta)}(x)_i=\1=f_{200,n}^{(\Delta^\texttt{sync})}(x)_i$.
\end{proof}

\subsection{Class II: Low sensitivity rules}
\label{cl:low}

This class contains rules whose sensitivity function equals $\frac{2^n-1}{3^n-2^{n+1}+2}$.
This is a very interesting class that demands the development of specific arguments
and tools. However, the starting point is always the interaction digraph.

\subsubsection{One-way ECAs.}

The following result counts the number of equivalence classes of update
schedules for ECA rules $\alpha$ having only arcs of the form $(i,i+1)$, or only
arcs of the form $(i+1,i)$ in their interaction digraph $G_{f_{\alpha,n}}$.

\begin{lemma}
  \label{lemma:count_uni}
  For the ECA rules $\alpha \in \set{3, 12, 15, 34, 60, 136, 170}$,
  it holds that $|\updates{f_{\alpha,n}}| \leq 2^n-1$.
\end{lemma}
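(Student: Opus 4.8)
The plan is to reduce the count of update-schedule equivalence classes to a count of valid labelings of a single directed cycle, and then to pin down exactly which labelings are forbidden. First I would use the influence analysis recorded just above: for each $\alpha \in \set{3,12,15,60}$ the arcs $(i+1,i)$ are absent, and for each $\alpha \in \set{34,136,170}$ the arcs $(i,i+1)$ are absent, so in every case the loopless interaction digraph $\loopless{G_{f_{\alpha,n}}}$ has all of its arcs pointing in a single direction around the cycle $\intz{n}$. A quick inspection of the local rules confirms the surviving direction is fully present (e.g. rule $15$ computes $\neg x_{i-1}$ and rule $170$ computes $x_{i+1}$), so $\loopless{G_{f_{\alpha,n}}}$ is precisely a directed $n$-cycle. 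Since equivalence classes are by definition the distinct labelings $\lab_\Delta$, and valid labelings are exactly those realised by some $\Delta$ (Theorem~\ref{theorem:lab_valid}), it suffices to count the valid labelings of this cycle.

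Next I would apply the forbidden-cycle characterisation of Theorem~\ref{theorem:lab_valid}. Treating the forward case $A=\set{(i,i+1)\mid i\in\intz{n}}$ (the backward case being symmetric), a labeling assigns $\labplus$ or $\labminus$ to each of the $n$ arcs; in the associated reoriented multi-digraph each $\labplus$-arc keeps its orientation $(i,i+1)$ while each $\labminus$-arc is reversed to $(i+1,i)$. The key structural point is that the underlying undirected graph is a chordless cycle, and since one direction of arcs is entirely absent there are no multi-edges, so every vertex has exactly its two cycle-neighbours available; consequently any directed cycle in the reoriented digraph must traverse all $n$ edges and is therefore one of the two Hamiltonian orientations. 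The clockwise orientation requires every arc to be $\labplus$ and contains no $\labminus$-arc, hence is never forbidden, whereas the counterclockwise orientation requires every arc to be $\labminus$, and this single all-$\labminus$ labeling is the unique one producing a forbidden cycle.

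It follows that exactly one of the $2^n$ labelings is invalid, so $|\updates{f_{\alpha,n}}| = 2^n - 1$, which certainly yields the claimed bound (and shows it is in fact tight). The step I expect to require the most care is the chordless-cycle argument: one must verify rigorously that no shorter directed cycle can arise in the reoriented multi-digraph—ruling out in particular any spurious $2$-cycle—so that the two full traversals really are the only candidates for a forbidden cycle. This is exactly where the single-direction property of these interaction digraphs is used, since it guarantees a simple cycle with no opposing arcs. The remaining ingredients, namely identifying the precise shape of each interaction digraph and invoking the correspondence between equivalence classes and valid labelings, are routine given the results already established.
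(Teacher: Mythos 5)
Your proposal is correct and follows essentially the same route as the paper: reduce to counting valid labelings of the directed $n$-cycle via Theorem~\ref{theorem:lab_valid}, observe that any forbidden cycle must traverse the whole cycle, and conclude that the all-$\labminus$ labeling is the unique invalid one, giving $2^n-1$ valid labelings. Your added care about chordless cycles and the absence of multi-edges is just a more explicit version of the paper's remark that any mixed labeling has at least one arc in each orientation and hence admits no forbidden cycle.
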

\begin{proof}
  The interaction digraph of these rules is the directed cycle on $n$
  vertices (with $n$ arcs). There can be only a forbidden cycle of length $n$ in the case that all arcs are
  labeled $\labminus$ (see Theorem~\ref{theorem:lab_valid}). 
  Except for the all $\labplus$ labeling (which is valid),
  any other labeling prevents the formation of an invalid cycle, since 
  the orientation of at least one arc is unchanged (labeled $\labplus$), and
  the orientation of at least one arc is reversed (labeled $\labminus$).
\end{proof}

\label{ss:170}

In the sequel we are going to exploit Lemma~\ref{lemma:count_uni} to obtain one of the main results of this section.
The ECA rule $170$, which is based on the following Boolean function:
$r_{170}(x_{i-1},x_i,x_{i+1})=x_{i+1}$, shows
the pathway.

\begin{theorem}
  \label{theorem:170}
  $\sens{f_{170,n}}=\frac{2^n-1}{3^n-2^{n+1}+2}$ for any $n \geq 2$.
\end{theorem}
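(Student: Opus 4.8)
The plan is to compute $|\dynamics{f_{170,n}}|$ and then divide by $|\updates{f_{170,n}}|$. Since rule $170$ is the left-shift map $r_{170}(x_{i-1},x_i,x_{i+1})=x_{i+1}$, its interaction digraph consists only of arcs $(i,i+1)$ (as already recorded: $(i,i+1)\notin A_{f_{170,n}}$ is a typo in the list, but the point is the digraph is the directed cycle on $n$ vertices), so Lemma~\ref{lemma:count_uni} gives $|\updates{f_{170,n}}| \leq 2^n-1$. I would first argue that this bound is in fact an equality: each of the $2^n-1$ non-forbidden labelings is valid and, via Formula~\ref{eq:dequiv}, corresponds to a genuinely distinct equivalence class, so $|\updates{f_{170,n}}| = 2^n-1$.

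The heart of the proof is to show that every non-equivalent update schedule produces a \emph{distinct} dynamics, \ie that $|\dynamics{f_{170,n}}| = |\updates{f_{170,n}}| = 2^n-1$, which immediately yields $\sens{f_{170,n}} = \frac{2^n-1}{3^n-2^{n+1}+2}$. To do this I would work directly with the chain-of-influence quantities $\dleft{\Delta}$ and $\dright{\Delta}$. Because rule $170$ only reads its right neighbor, the relevant structure is $\dright{\Delta}(i)$: the image $f_{170,n}^{(\Delta)}(x)_i$ equals $x_{i+\dright{\Delta}(i)}$, the state (at time $0$) of the first cell to the right of $i$ whose incoming arc is labeled $\labplus$. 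The plan is to take two non-equivalent schedules $\Delta \not\equiv \Delta'$ and, using Formula~\ref{eq:dequiv}, find a cell $i$ where $\dright{\Delta}(i) \neq \dright{\Delta'}(i)$ (the $\dleft{}$ quantities are irrelevant here since there are no arcs $(i+1,i)$). Then I would exhibit a configuration $x$ separating the two images: for instance choosing $x$ so that exactly one of the two indexed cells $x_{i+\dright{\Delta}(i)}$ and $x_{i+\dright{\Delta'}(i)}$ is $\1$, which is always possible as long as these two indices are distinct modulo $n$.

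The main obstacle I anticipate is the modular wrap-around: when $\dright{\Delta}(i)$ can be as large as $n$, the two shifted indices $i+\dright{\Delta}(i)$ and $i+\dright{\Delta'}(i)$ might coincide modulo $n$ even though the $\dright{}$ values differ, which would block the naive separating-configuration argument at a single cell. To handle this I would observe that a single labeling of the directed $n$-cycle is essentially determined by the cyclic pattern of $\labplus$/$\labminus$ on its arcs, and that the function $i \mapsto \dright{\Delta}(i)$ reconstructs this pattern: $\dright{\Delta}(i)$ is the distance from $i$ rightward to the next $\labplus$-arc. Hence if $\Delta\not\equiv\Delta'$ the arc-labelings differ at some arc, and I can pick the cell $i$ immediately so that the target indices genuinely differ, avoiding the degenerate wrap-around case (the all-$\labminus$ labeling being excluded as forbidden). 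With the separating configuration in hand, the contrapositive of Lemma~\ref{lemma:di} gives distinct dynamics, completing the injectivity of $\Delta \mapsto D_{f_{170,n}^{(\Delta)}}$ on equivalence classes and hence the claimed equality.
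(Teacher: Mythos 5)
Your proposal is correct and takes essentially the same approach as the paper's proof: you locate an arc $(i_0+1,i_0)$ on which the two labelings differ, use validity (exclusion of the all-$\labminus$ cycle) to bound the rightward chain of influence --- the paper's minimal $\ell$ is exactly your $\dright{\Delta'}(i_0)-1$ --- and then separate the two dynamics with a configuration differing at the two target cells $i_0+1$ and $i_0+\dright{\Delta'}(i_0)$, which are distinct modulo $n$ for precisely the reason you give. One incidental correction: the paper's list is not a typo, since for rule $170$ cell $i+1$ influences cell $i$ (as $f_i$ reads $x_{i+1}$), the interaction digraph consists of the arcs $(i+1,i)$ rather than $(i,i+1)$; this does not affect your argument, because $\dright{\Delta}$ is defined on exactly those arcs.
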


\begin{proof}
  Let $f=f_{170,n}$ and $n \geq 2$. By definition, one finds that for any two 
  non-equivalent update schedules $\Delta \not\equiv \Delta'$
  it holds that
  \[
    \exists i_0\in\intz{n}\; \lab_{\Delta}((i_0+1,i_0))=\labplus \wedge
    \lab_{\Delta'}((i_0+1,i_0))=\labminus.
  \]
  Furthermore, since having $\lab_{\Delta'}((i+1,i))=\labminus$ for all $i\in\intz{n}$
  creates an invalid cycle of length $n$, there exists a minimal $\ell \geq 1$
  such that $\lab_{\Delta'}((i_0+\ell+1,i_0+\ell))=\labplus$ (this requires
  $n>1$). A part of the update digraph corresponding to $\Delta'$ is pictured below.
  \begin{center}
    \includegraphics{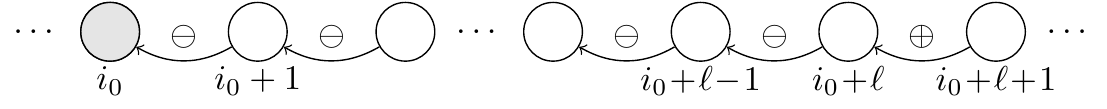}
  \end{center}
  By definition of the labels and the minimality of $\ell$ we have that 
  \[
    \forall 0 \leq k < \ell: f^{(\Delta')}(x)_{i_0+k}=x_{i_0+\ell+1}.
  \]
  Since for the update schedule $\Delta$ we have $f^{(\Delta)}(x)_{i_0}=x_{i_0+1}$,
  it is always possible to construct a configuration $x$ with
  $x_{i_0+1} \neq x_{i_0+\ell+1}$ such that the two dynamics differ, \ie,
  $f^{(\Delta)}(x)_{i_0} \neq f^{(\Delta')}(x)_{i_0}$.
  The result holds by Formula~\ref{eq:max-sensitive}.
\end{proof}

\label{ss:3_12_15_34_60_136}

Generalizing the idea behind the construction used for ECA rule $170$ one
may prove that ECA rules $3, 12, 15, 34, 60, 136$ have identical sensitivity functions.


\begin{theorem}
  \label{theorem:3-12-15-34-60-136}
  $\sens{f_{\alpha,n}}=\frac{2^n-1}{3^n-2^{n+1}+2}$ for any $n \geq 2$
  and for all $\alpha\in\set{3,12,15,34,60,136}$.
\end{theorem}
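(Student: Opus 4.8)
The plan is to show that for each of these rules the canonical surjection from equivalence classes of update schedules to dynamics is injective, so that $|\dynamics{f_{\alpha,n}}|=|\updates{f_{\alpha,n}}|$. Since the interaction digraph of each of these rules is the directed $n$-cycle, the proof of Lemma~\ref{lemma:count_uni} in fact gives $|\updates{f_{\alpha,n}}|=2^n-1$ (all labelings except the all-$\labminus$ one are valid), and then the claimed value of $\sens{f_{\alpha,n}}$ follows directly from its definition. By the reflection symmetry $\tau_r$, which preserves sensitivity, it suffices to treat a rule whose unique effective neighbour is, say, the right one; the left-neighbour rules $3,12,15,60$ are then handled symmetrically. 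For such a rule, write the local map as $g(\text{self},\text{neighbour})$; effective dependence on the neighbour yields a value $t$ making $g(t,\cdot)$ non-constant, so $G:=g(t,\cdot)$ is a bijection of $\bool$. The key observation, read off Equation~\ref{eq:d}, is that along a chain of $\labminus$-arcs the local maps compose: if $\dright{\Delta}(i)=k$ then $f_{\alpha,n}^{(\Delta)}(x)_i=g(x_i,g(x_{i+1},\dots,g(x_{i+k-1},x_{i+k})))$, so forcing $x_i=\dots=x_{i+k-1}=t$ collapses this to $G^{k}(x_{i+k})$, a bijective function of the single cell at the far end of the chain.

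With this tool I would mimic the proof of Theorem~\ref{theorem:170}. Valid labelings correspond bijectively to the nonempty sets $P$ of $\labplus$-arcs around the cycle, so two non-equivalent schedules differ on some arc; I pick an arc $a_{i_0}$ that is $\labplus$ under $\Delta$ and $\labminus$ under $\Delta'$ (swapping the roles of $\Delta,\Delta'$ if needed). Then $\dright{\Delta}(i_0)=1$ while $\dright{\Delta'}(i_0)=\ell+1$, where $\ell\geq 1$ is the distance to the next $\labplus$-arc of $\Delta'$. Setting the $\ell+1$ cells $x_{i_0},\dots,x_{i_0+\ell}$ to $t$ and leaving $x_{i_0+\ell+1}$ free gives $f^{(\Delta)}_{\alpha,n}(x)_{i_0}=G(t)$ and $f^{(\Delta')}_{\alpha,n}(x)_{i_0}=G^{\ell+1}(x_{i_0+\ell+1})$; since $G^{\ell+1}$ is a bijection, some choice of the free bit separates the two images, whence $D_{f^{(\Delta)}_{\alpha,n}}\neq D_{f^{(\Delta')}_{\alpha,n}}$. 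This disposes of every case where the leaf $i_0+\ell+1$ is distinct from the forced cells, i.e. whenever $\ell\leq n-2$.

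The hard part is the wrap-around case $\ell=n-1$, where the chain of influences at $i_0$ covers the whole cycle ($\dset{\Delta'}(i_0)=\intz{n}$) and the leaf coincides with a forced cell, so the bijective-leaf trick loses its free bit; this happens exactly when $\Delta'$ has a single $\labplus$-arc. I would dispatch it by re-choosing the differing arc $a_{i_0}$ whenever $P$ offers one avoiding the wrap (and swapping $\Delta\leftrightarrow\Delta'$ when both sets are singletons, which for $n\geq 3$ reduces to two short non-overlapping chains), leaving a single residual configuration $P=\{a_q,a_{q+1}\}$, $P'=\{a_q\}$ in which the chains of $\Delta$ and $\Delta'$ differ only at cell $q+1$. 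There the short side reads only $x_{q+1},x_{q+2}$ whereas the long side reads all $n$ cells, and one checks rule by rule that a single well-placed bit flip separates the two images (for the $\wedge$-type rule $136$, for instance, one makes the global conjunction vanish while keeping $x_{q+1}\wedge x_{q+2}=\1$). I expect this last, genuinely rule-dependent verification to be the main obstacle; together with a direct check of the small case $n=2$, it yields $|\dynamics{f_{\alpha,n}}|=2^n-1$ and hence the stated formula for all $\alpha\in\set{3,12,15,34,60,136}$ and $n\geq 2$.
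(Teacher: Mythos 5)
Your proposal follows the paper's skeleton---reduce to the $2^n-1$ valid labelings of the directed cycle, then separate any two non-equivalent labelings by a configuration built around an arc where they differ---but your separating mechanism is genuinely different: where the paper constructs rule-specific witness sequences (explicit tables for each of $3,12,15,34,60,136$ satisfying Formulas~\ref{eq:start_small} and~\ref{eq:continue_small}), you use one uniform device, the transparent state $t$ with $G=g(t,\cdot)$ a bijection, which collapses any $\labminus$-chain to $G^{\ell+1}$ applied to the leaf bit. You also correctly isolate the wrap-around case $\ell=n-1$, which the paper's proof silently skips: its construction ends by setting the leaf $x_{i_0+\ell+1}=o_{\ell+1}$, which is only legitimate when that cell is distinct from the already-forced ones, i.e.\ when $\ell\leq n-2$. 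On the generic case your argument is correct and arguably cleaner than the paper's.

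The problem is the step you deferred: it is not a routine verification---it actually fails, and the statement fails with it. Take rule $12$ (in your right-neighbour normal form, its reflection $g(s,r)=s\wedge\neg r$) and $n=3$. In your residual pair $P=\{a_q,a_{q+1}\}$, $P'=\{a_q\}$, the only cell whose image can differ is $q+1$; the short side is $g(x_{q+1},x_{q+2})$ and the long side is the wrapped nesting $g(x_{q+1},g(x_{q+2},g(x_q,x_{q+1})))$. If $x_{q+1}=\0$, both equal $\0$. If $x_{q+1}=\1$, the innermost factor is $g(x_q,\1)=x_q\wedge\neg\1=\0$, so the nesting collapses to $g(\1,g(x_{q+2},\0))=g(\1,x_{q+2})=\neg x_{q+2}$, which is exactly the short side. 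The feedback of $x_{q+1}$ into the innermost slot---precisely the feature that kills your free-bit argument---here neutralizes \emph{every} configuration: this non-equivalent pair yields identical dynamics (one can brute-force all eight configurations of $f_{12,3}$ with the schedules $(\{0,2\},\{1\})$ and $(\{0\},\{1\},\{2\})$), and with its rotations it brings $|\dynamics{f_{12,3}}|$ down to $4$, not $7$. Hence Theorem~\ref{theorem:3-12-15-34-60-136} as stated, for all $n\geq 2$, is false at $n=3$ for rule $12$, and no completion of your argument (or of the paper's) can prove it; note that Table~\ref{tab:results} restricts Class~II to $n\geq 4$, contradicting the theorem's own range, and that the paper's proof misses this precisely because of the wrap-around hole you spotted.

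The constructive way out is to claim the result for $n\geq 4$ (plus the direct check at $n=2$, where it does hold). There your program completes: for rule $12/68$ take $x_{q+1}=x_{q+2}=x_{q+3}=\1$, $x_{q+4}=\0$ and all remaining cells $\1$, which gives long side $\1$ against short side $\0$; the analogous one-line residual checks for the other five rules succeed as well (they even succeed at $n=3$; rule $12$ is the unique offender). As written, however, the proposal cannot deliver the claimed range $n\geq 2$, and the rule-by-rule residual verification must actually be carried out rather than asserted.
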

\begin{proof}
  We present the case when the interaction digraph has only arcs of type $(i+1,i)$
  (such as rule 170), the case $(i,i+1)$ is symmetric. Fix $n\geq 2$ and choose
  two update schedules $\Delta, \Delta'\in\Pn$ such that $\Delta \not \equiv \Delta'$, then
  it holds that
  \[
  \begin{array}{c}
    \exists i_0\in\intz{n}\; \lab_{\Delta}((i_0+1,i_0))=\labplus \wedge
    \lab_{\Delta'}((i_0+1,i_0))=\labminus\\[.5em]
    \exists \ell\in\intz{n}\;
    (\forall 0 \leq k < \ell\; \lab_{\Delta'}((i_0+k+1,i_0+k))=\labminus) \wedge
    (\lab_{\Delta'}((i_0+\ell+1,i_0+\ell))=\labplus).
  \end{array}
  \]
  Fix $\alpha\in\set{3,12,15,34,60,136}$ and let $r$ be the corresponding Boolean function.
  Moreover let $f=f_{\alpha,n}$. 
  We know that for any $x \in \bool^n$ we will have
   $f^{(\Delta)}(x)_{i_0}=r(b,x_{i_0},x_{i_0+1})$ for any $b \in \bool$.
  Our goal is to construct a configuration $x\in\zu^n$ such that
  $f^{(\Delta)}(x)_{i_0} \neq f^{(\Delta')}(x)_{i_0}$.
  In order to start, we need
  \begin{align}
    \label{eq:start_small}
    \exists x_{i_0},x_{i_0+1},o_1,o_2 \in \bool\,\forall b,b' \in \bool\;
    \begin{array}[t]{l}
      \phantom{\text{and }} r(b,x_{i_0},x_{i_0+1}) \neq r(b',x_{i_0},o_1)\\
      \text{and } r(x_{i_0},x_{i_0+1},o_2)=o_1.
    \end{array}
  \end{align}
  In other words, we can choose $x_{i_0},x_{i_0+1}$ so that there is a target output $o_1$ 
  for $f^{(\Delta')}(x)_{i_0+1}$, such that if
  $f^{(\Delta')}(x)_{i_0+1}=o_1$ then $f^{(\Delta)}(x)_{i_0} \neq
  f^{(\Delta')}(x)_{i_0}$ (the values of $b$ and $b'$ do not matter). Similarly,
  given $x_{i_0},x_{i_0+1},o_1$, there is a target output $o_2$ for
  $f^{(\Delta')}(x)_{i_0+2}$. We can now construct $x$ by finite induction by expressing
  such target outputs. In order to continue we want
  (the idea is to use this by induction for all $0 < k \leq \ell$)
  \begin{align}
    \label{eq:continue_small}
    \forall x_{i_0+k-1},o_k \in\bool\,\exists x_{i_0+k},o_{k+1}\in\bool\; 
    r(x_{i_0+k-1},x_{i_0+k},o_{k+1})=o_k.
  \end{align}
  
  If Formulas~\ref{eq:start_small} and~\ref{eq:continue_small} hold 
  then we can construct the desired configuration $x$. Indeed, Formula~\ref{eq:start_small}
  gives $x_{i_0}, o_1, x_{i_0+1}, o_2$, and by induction, knowing
  $x_{i_0+k-1},o_k$ Formula~\ref{eq:start_small} gives $x_{i_0+k},o_{k+1}$ for
  $0 < k \leq \ell$. The construction ends with
  $x_{i_0+\ell+1}=o_{\ell+1}$.
  A sequence $(x_i,o_i)\in\zu^2$ for $i\in\intz{l}$ which satisfies both Formulas~\ref{eq:start_small} and~\ref{eq:continue_small} 
  is called a \emph{witness} sequence. 
  Given a witness sequence $(x_i,o_i)_{i\in\intz{l}}$ it holds
  $f^{(\Delta)}(x)_{i_0} \neq f^{(\Delta')}(x)_{i_0}$, and hence, by
  Formula~\ref{eq:max-sensitive} we have the result. 
  We end by providing the witness sequences for all the local rules
  in the hypothesis. We start by those rules which have an interaction digraph made by arcs of type $(i+1,i)$.

  \begin{itemize}
    \item Rule $ 34 $ :
    \begin{itemize}
      \item Formula~\eqref{eq:start_small}: $x_{i_0}= \0 $, $x_{i_0-1}= \1 $, $o_1= \1 $, $o_2= \1 $.
      \item Formula~\eqref{eq:continue_small}: 
      $(x_{i_0+k},o_{k+1})= 
      \begin{cases} (\0, \0),&\text{if $k$ is even}\\ (\0, \1), &\text{otherwise}
      \end{cases}$
    \end{itemize}  
    \item Rule $ 136 $ :
    \begin{itemize}
      \item Formula~\eqref{eq:start_small}: $x_{i_0}= \1 $, $x_{i_0-1}= \1 $, $o_1= \0 $, $o_2= \0 $.
      \item Formula~\eqref{eq:continue_small}: 
      $(x_{i_0+k},o_{k+1})= 
      \begin{cases} (\0, \0),&\text{if $k$ is even}\\ (\1, \1),&\text{otherwise}\end{cases}$
    \end{itemize}
  \end{itemize}

We conclude with the witness sequences for the rules which have
interaction digraph made by arcs of type $(i,i+1)$.
  \begin{itemize}
    \item Rule $ 3 $ :
    \begin{itemize}
      \item Formula~\eqref{eq:start_small}: $x_{i_0}= \0 $, $x_{i_0+1}= \0 $, $o_{-1}= \1 $, $o_{-2}= \0 $.
      \item Formula~\eqref{eq:continue_small}: 
      $(x_{i_0-k},o_{-k-1})=\begin{cases} (\0, \1),&\text{if $k$ is even}\\ (\0, \0),&\text{otherwise}\end{cases}$
    \end{itemize}
    \item Rule $ 12 $ :
    \begin{itemize}
      \item Formula~\eqref{eq:start_small}: $x_{i_0}= \1 $, $x_{i_0+1}= \1 $, $o_{-1}= \0 $, $o_{-2}= \1 $.
      \item Formula~\eqref{eq:continue_small}: 
      $(x_{i_0-k},o_{-k-1})=\begin{cases} (\0, \0),&\text{if $k$ is even}\\ (\1, \0),&\text{otherwise}\end{cases}$ 
    \end{itemize}
    \item Rule $ 15 $ :
    \begin{itemize}
      \item Formula~\eqref{eq:start_small}: $x_{i_0}= \0 $, $x_{i_0+1}= \0 $, $o_{-1}= \1 $, $o_{-2}= \0 $.
      \item Formula~\eqref{eq:continue_small}: 
      $(x_{i_0-k},o_{-k-1})=\begin{cases} (\0, \1),&\text{if $k$ is even}\\ (\0, \0),&\text{otherwise}\end{cases}$ 
    \end{itemize}
    \item Rule $ 60 $ :
    \begin{itemize}
      \item Formula~\eqref{eq:start_small}: $x_{i_0}= \0 $, $x_{i_0+1}= \0 $, $o_{-1}= \1 $, $o_{-2}= \1 $.
      \item Formula~\eqref{eq:continue_small}: 
      $(x_{i_0-k},o_{-k-1})=\begin{cases} (\0, \0),&\text{if $k$ is even}\\ (\0, \1),&\text{otherwise}\end{cases}$ 
    \end{itemize}
  \end{itemize}
\end{proof}

\begin{example}
        Consider the ECA rule $\alpha=34$ and a size $n=6$.  Given 
        the two distinct update schedules
        $\Delta=(\{3,4\},\{5\},\{2\},\{0,1\})$
        and
        $\Delta'=(\{3,4\},\{5\},\{0,1,2\})$. Let
        $i_0=1$ and $\ell=2$. The following is a witness sequence 
        (see the proof of Theorem~\ref{theorem:3-12-15-34-60-136}):
        $x_{i_0-1}=\1,x_{i_0}=\0,
        o_1=\1,x_{i_0+1}=\0,
        o_2=\1,x_{i_0+2}=\0,
        o_3=\1,x_{i_0+3}=\0,
        x_{i_0+4}=o_4=\1$.
        By construction it ensures
        \begin{align*}
          &f^{(\Delta')}(\1\0\0\0\1\0)_{i_0}
          = r(\1,\0,r(\0,\0,r(\0,\0,\1)))
          = r(\1,\0,r(\0,\0,\1))
          = r(\1,\0,\1)
          = \1\\
          \neq
          &f^{(\Delta)}(\1\0\0\0\1\0)_{i_0}
          = r(\1\0\0)
          = \0.
        \end{align*}
        \qed
\end{example}

\subsubsection{Exploiting patterns in the update digraph}

In this subsection we are going to develop a proof technique which
characterizes the number of non-equivalent update schedules according
to the presence of specific patterns in their interaction digraph. This
will concern ECA rules $28, 32, 44$ and $140$.
When $n$ is clear from the context, we will simply denote
$f_{\alpha}$ instead of $f_{\alpha,n}$ with $n\in\set{28,32,44,140}$.

\label{s:28_32_44_140}

We begin with the ECA Rule $32$ which is based on the Boolean function 
$r_{32}(x_1,x_2,x_3)= x_1 \wedge \neg x_2 \wedge x_3$.

\begin{lemma}
\label{pattern32}
Fix $n\in\N$. For any update schedule $\Delta\in\Pn$, for any 
configuration $x\in\zu^n$ and for any $i\in\intz{n}$, the following holds:
\begin{equation*}
  f^{(\Delta)}_{32}(x)_i =\1 \iff \lab_{\Delta}((i+1,i))=\lab_{\Delta}((i-1,i))=\labplus \wedge (x_{i-1},x_i,x_{i+1})=(\1,\0,\1).
\end{equation*}
\end{lemma}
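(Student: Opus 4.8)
The plan is to read the image at cell $i$ directly from the nested–rule description of Equation~\ref{eq:d} and to exploit the extreme sparsity of $r_{32}$, which outputs $\1$ on the single triple $(\1,\0,\1)$. Write $f^{(\Delta)}_{32}(x)_i = r_{32}(a,x_i,b)$, where $a$ and $b$ denote the values seen at cells $i-1$ and $i+1$ at the moment $i$ is updated; the middle argument is always the original bit $x_i$, since a cell is not modified before its own update. Consequently $f^{(\Delta)}_{32}(x)_i = \1$ forces $a=\1$, $x_i=\0$ and $b=\1$ simultaneously, and in particular $x_i=\0$ is immediate. The whole proof then reduces to pinning down $a$ and $b$.

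The reverse implication ($\Leftarrow$) is short. If $\lab_\Delta((i-1,i)) = \lab_\Delta((i+1,i)) = \labplus$, then $\dleft{\Delta}(i) = \dright{\Delta}(i) = 1$, so the two chains of influence do not extend and $a = x_{i-1}$, $b = x_{i+1}$ are the original neighbour bits. With $(x_{i-1},x_i,x_{i+1}) = (\1,\0,\1)$ we get $r_{32}(\1,\0,\1) = \1$, as desired.

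For the forward implication ($\Rightarrow$) the only work is to show that both incoming arcs must carry the label $\labplus$. Suppose instead $\lab_\Delta((i-1,i)) = \labminus$. Then cell $i-1$ is updated strictly before $i$, so $a = f^{(\Delta)}_{32}(x)_{i-1}$; moreover the right neighbour of $i-1$ is $i$ itself, whose value is still the original $x_i$ when $i-1$ is updated. Since $r_{32}(y_1,y_2,y_3) = \1$ requires $y_3 = \1$, but here $y_3 = x_i = \0$, we obtain $f^{(\Delta)}_{32}(x)_{i-1} = r_{32}(\cdot,\cdot,\0) = \0$, i.e. $a = \0$, contradicting $a = \1$. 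Hence $\lab_\Delta((i-1,i)) = \labplus$, which gives $a = x_{i-1}$ and therefore $x_{i-1} = \1$. The argument for $\lab_\Delta((i+1,i))$ and $x_{i+1} = \1$ is symmetric, now using that $r_{32}$ needs its \emph{first} argument equal to $\1$ (so $r_{32}(\0,\cdot,\cdot) = \0$). Combining the two sides yields both arcs labeled $\labplus$ together with $(x_{i-1},x_i,x_{i+1}) = (\1,\0,\1)$.

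The one genuinely delicate point — the step I would double-check most carefully — is the claim that when $(i-1,i)$ carries the label $\labminus$ the right neighbour of $i-1$ contributes the unmodified bit $x_i$. This is exactly what the definition of $\labminus$ guarantees, namely that $i$ lies in a strictly later block than $i-1$, and it is what makes a single level of unfolding suffice: no wrap-around along the cycle is needed, so the argument is uniform in $n$. Everything else is a finite case check on the truth table of $r_{32}$.
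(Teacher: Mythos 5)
Your proof is correct and follows essentially the same route as the paper: both directions hinge on $r_{32}$ outputting \1 only on $(\1,\0,\1)$, with $\labplus$ labels meaning cell $i$ reads the original neighbour states, and a $\labminus$ label forcing the already-updated neighbour to hold \0 (since it saw $x_i=\0$ on its side), yielding the contradiction. If anything, your unfolding via $f^{(\Delta)}_{32}(x)_i=r_{32}(a,x_i,b)$ makes explicit a step the paper's terse case display leaves implicit, namely that its listed triples presuppose $x_i=\0$.
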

\begin{proof}\mbox{}\\
\noindent$(\Leftarrow)$ 
Since $lab_{\Delta}((i+1,i))=lab_{\Delta}((i-1,i))=\labplus$ (see Figure~\ref{figurepatt32}) that means that cell $i$ is not updated after cells $i-1$ and $i+1$, therefore $f^{(\Delta)}_{32}(x)_i =r_{32}(x_{i-1},x_i,x_{i+1})=r_{32}(\1,\0,\1)=\1$.\\
\noindent$(\Rightarrow)$
Choose $x\in\zu^n$ such that $f^{(\Delta)}_{32}(x)_i=\1$.
Assume that $\lab_{\Delta}((i+1,i))=\lab_{\Delta}((i-1,i))=\labplus$ but $(x_{i-1},x_i,x_{i+1}) \neq (\1,\0,\1)$. 
By the same reasoning as above, we have 
$f^{(\Delta)}_{32}(x)_i =r_{32}(x_{i-1},x_i,x_{i+1})=\0$, 
since $(x_{i-1},x_i,x_{i+1}) \neq (\1,\0,\1)$. 
Now, assume $\lab_{\Delta}((i+1,i))=\labminus$ or $\lab_{\Delta}((i-1,i))=\labminus$. 
Then,
\begin{equation*}
  f^{(\Delta)}_{32}(x)_i =
\begin{cases}
r_{32}(\0,\0,\1)=\0, \text{if }  lab_{\Delta}((i-1,i))=\labminus \wedge lab_{\Delta}((i+1,i))=\labplus\\
r_{32}(\1,\0,\0)=\0, \text{if }  lab_{\Delta}((i-1,i))=\labplus \wedge lab_{\Delta}((i+1,i))=\labminus\\
r_{32}(\0,\0,\0)=\0, \text{otherwise}
\end{cases}
\end{equation*}
which contradicts the hypothesis.
\end{proof}

\begin{figure}
\centerline{\includegraphics{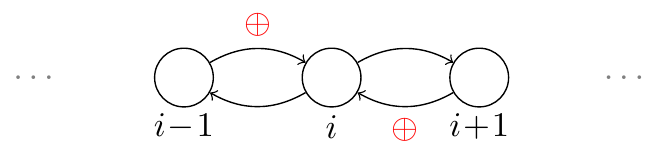}}
	\caption{labeling presented in the Lemma \ref{pattern32}.  This is the only situation in which we can obtain a cell updated equal to \1.}
	\label{figurepatt32}
	\end{figure}
	
\begin{corollary}
\label{corollaryminus}
  Fix $n\in\N$.
  For any update schedule $\Delta\in\Pn$, for any configuration $x\in\zu^n$ and $i\in\intz{n}$, if $\lab_{\Delta}((i-1,i))=\labminus$ or $\lab_{\Delta}((i+1,i))=\labminus$, then $f^{(\Delta)}_{32}(x)_i =\0$.
\end{corollary}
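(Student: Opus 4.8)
The plan is to obtain this as an immediate contrapositive of Lemma~\ref{pattern32}. That lemma gives the exact characterization
\[
  f^{(\Delta)}_{32}(x)_i = \1 \iff \lab_{\Delta}((i+1,i))=\lab_{\Delta}((i-1,i))=\labplus \wedge (x_{i-1},x_i,x_{i+1})=(\1,\0,\1),
\]
so the right-hand side requires \emph{both} incoming arcs to be labeled $\labplus$. First I would observe that the hypothesis of the corollary, namely that $\lab_{\Delta}((i-1,i))=\labminus$ or $\lab_{\Delta}((i+1,i))=\labminus$, directly negates the conjunction $\lab_{\Delta}((i+1,i))=\lab_{\Delta}((i-1,i))=\labplus$. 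Hence the entire right-hand side of the biconditional is false, and therefore the left-hand side must be false as well, \ie $f^{(\Delta)}_{32}(x)_i \neq \1$. Since $f^{(\Delta)}_{32}(x)_i \in \bool$, the only remaining possibility is $f^{(\Delta)}_{32}(x)_i = \0$, which is exactly the claim.

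There is no genuine obstacle here, as the statement is strictly weaker than Lemma~\ref{pattern32}: it merely isolates one implication of the characterization and phrases it in terms of a $\labminus$ label. For completeness, I would note that the result can equally be read off from the case analysis already carried out in the $(\Rightarrow)$ direction of the proof of Lemma~\ref{pattern32}, where the presence of a $\labminus$ on either incoming arc means the corresponding neighbor has been updated before cell $i$ and its state fed into $r_{32}$ is $\0$, so that $r_{32}$ evaluates to $\0$ regardless of the remaining states; but routing through the biconditional is the shortest and cleanest argument.
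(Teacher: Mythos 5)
Your proposal is correct and matches the paper exactly: the paper states this as a corollary of Lemma~\ref{pattern32} with no separate proof, the intended argument being precisely the contrapositive reading of the biconditional that you spelled out (the hypothesis falsifies the right-hand side, so $f^{(\Delta)}_{32}(x)_i \neq \1$, hence equals $\0$). Nothing is missing.
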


\begin{lemma}
For any $n\in\N$. Consider a pair of update schedules 
$\Delta, \Delta'\in\Pn$. Then, $D_{f^{(\Delta)}_{32,n}} \neq D_{f^{(\Delta')}_{32,n}}$
if and only if there exists $i\in\intz{n}$ such that one of the following 
holds:
\begin{enumerate}
\item 
$\lab_{\Delta}((i+1,i))=\lab_{\Delta}((i-1,i))=\labplus$ and either $\lab_{\Delta'}((i+1,i))=\labminus$ or $\lab_{\Delta'}((i-1,i))=\labminus$;
\item
 $\lab_{\Delta'}((i+1,i))=\lab_{\Delta'}((i-1,i))=\labplus$ and either $\lab_{\Delta}((i+1,i))=\labminus$ or $\lab_{\Delta}((i-1,i))=\labminus$.
\end{enumerate}
\end{lemma}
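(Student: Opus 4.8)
The plan is to reduce the inequality of transition digraphs to a pointwise condition and then read it off Lemma~\ref{pattern32}. Since the vertex set $\bool^n$ and the arc rule $x \mapsto f^{(\Delta)}_{32,n}(x)$ determine $D_{f^{(\Delta)}_{32,n}}$ entirely, we have $D_{f^{(\Delta)}_{32,n}} \neq D_{f^{(\Delta')}_{32,n}}$ if and only if there exist a configuration $x \in \zu^n$ and a cell $i \in \intz{n}$ with $f^{(\Delta)}_{32}(x)_i \neq f^{(\Delta')}_{32}(x)_i$. Because each output lies in $\bool$, such a disagreement means exactly that one image equals $\1$ while the other equals $\0$; after possibly exchanging the roles of $\Delta$ and $\Delta'$ I may assume $f^{(\Delta)}_{32}(x)_i=\1$ and $f^{(\Delta')}_{32}(x)_i=\0$, which will correspond to item~(1), the symmetric choice giving item~(2).

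For the forward direction I would take such $x$ and $i$. Lemma~\ref{pattern32} applied to $\Delta$ turns $f^{(\Delta)}_{32}(x)_i=\1$ into the two facts $\lab_{\Delta}((i+1,i))=\lab_{\Delta}((i-1,i))=\labplus$ and $(x_{i-1},x_i,x_{i+1})=(\1,\0,\1)$. Now I apply Lemma~\ref{pattern32} to $\Delta'$ with this very same $x$: since the neighbourhood is already $(\1,\0,\1)$, the only way to have $f^{(\Delta')}_{32}(x)_i=\0$ is for the labelling condition to fail, \ie $\lab_{\Delta'}((i+1,i))=\labminus$ or $\lab_{\Delta'}((i-1,i))=\labminus$. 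Together these are exactly the clauses of item~(1).

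For the converse I would assume item~(1) holds at some $i$ (item~(2) being symmetric) and construct a witnessing configuration $x$ by setting $(x_{i-1},x_i,x_{i+1})=(\1,\0,\1)$, the remaining cells being arbitrary. By the $(\Leftarrow)$ part of Lemma~\ref{pattern32}, $f^{(\Delta)}_{32}(x)_i=\1$, while the hypothesis that one of the incoming arcs at $i$ is labelled $\labminus$ under $\Delta'$ gives $f^{(\Delta')}_{32}(x)_i=\0$ directly from Corollary~\ref{corollaryminus}. Hence the two images disagree at $i$ and the dynamics differ.

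I do not expect a genuine obstacle: the whole argument is a bookkeeping exercise over Lemma~\ref{pattern32}, whose characterisation of when rule $32$ outputs $\1$ does all the work. The only point requiring a little care is the construction of $x$ in the converse, which presupposes that the cells $i-1,i,i+1$ are distinct so that the prescribed neighbourhood $(\1,\0,\1)$ is realisable; this holds for $n \geq 3$, and the degenerate cases $n \in \{1,2\}$ can be checked separately (rule $32$ then never outputs $\1$, so both sides of the equivalence are vacuously false).
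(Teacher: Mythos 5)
Your proposal is correct and follows essentially the same route as the paper: both directions are read off Lemma~\ref{pattern32} and Corollary~\ref{corollaryminus}. The only difference is that you prove the forward implication directly (apply the equivalence of Lemma~\ref{pattern32} to $\Delta$ at a disagreeing cell, then its contrapositive to $\Delta'$ on the same configuration), whereas the paper argues by contraposition via a two-case analysis over all cells; your version is slightly cleaner and also makes explicit the witness configuration $(x_{i-1},x_i,x_{i+1})=(\1,\0,\1)$ that the paper's $(\Leftarrow)$ direction leaves implicit.

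One small correction to your closing remark: for $n=2$ rule $32$ \emph{does} output \1 (the neighborhood of cell $0$ in $x=\0\1$ is $(x_1,x_0,x_1)=(\1,\0,\1)$, so the image of cell $0$ is \1), and non-equivalent schedules there do yield different dynamics, so the two sides of the equivalence are not both vacuously false. This is harmless for your argument: when $n=2$ the cells $i-1$ and $i+1$ coincide but are both required to be in state \1, so the witness configuration still exists and your proof goes through verbatim; only $n=1$ is genuinely degenerate (the loopless interaction digraph has no arcs, there is a single equivalence class of update schedules, and both sides of the equivalence are indeed false).
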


\begin{proof}\mbox{}\\
\noindent $(\Leftarrow)$ WLOG, suppose that $\lab_{\Delta}((i+1,i))=\lab_{\Delta}((i-1,i))=\labplus$ and $\lab_{\Delta'}((i+1,i))=\labminus$ or $\lab_{\Delta'}((i-1,i))=\labminus$ (the other case is the same where $\Delta$ and $\Delta'$ are exchanged). 
Then, by Lemma \ref{pattern32} one can finds $f^{(\Delta)}_{32}(x)_i =\1$ and by Corollary~\ref{corollaryminus}, $f^{(\Delta')}_{32}(x)_i =\0$. Therefore, $D_{f^{(\Delta)}_{32,n}} \neq D_{f^{(\Delta')}_{32,n}}$.\\
\noindent $(\Rightarrow)$ Suppose that for every 
$i\in\intz{n}$ one of the following holds:
\begin{itemize}
\item[] (Case 1) $\lab_{\Delta}((i+1,i))=\lab_{\Delta}((i-1,i))=\lab_{\Delta'}((i+1,i))=\lab_{\Delta'}((i-1,i))=\labplus$
\item[] (Case 2) $(\lab_{\Delta}((i+1,i)), \lab_{\Delta}((i-1,i))) \neq (\labplus,\labplus) \neq (\lab_{\Delta'}((i+1,i)), \lab_{\Delta'}((i-1,i)))$.
\end{itemize}
We will show that in both cases $D_{f^{(\Delta)}_{32,n}} = D_{f^{(\Delta')}_{32,n}}$.
Let $j\in\intz{n}$ and consider a configuration $x\in\zu^n$ such that:
$(x_{j-1},x_j,x_{j+1}) \neq (\1,\0,\1)$, then by Lemma~\ref{pattern32}, $f^{(\Delta)}_{32}(x)_j = f^{(\Delta')}(x)_j =\0$.
Now suppose $(x_{j-1},x_j,x_{j+1})=(\1,\0,\1)$. 
If we are in Case $1$, then $f^{(\Delta)}_{32}(x)_j = f^{(\Delta')}_{32}(x)_j =\1$.
If we are in Case $2$, then $f^{(\Delta)}_{32}(x)_j = f^{(\Delta')}_{32}(x)_j =\0$.
By the generality of $j$, $f^{(\Delta)}_{32}(x) = f^{(\Delta')}_{32}(x)$ and by the generality of $x$, $D_{f^{(\Delta)}_{32,n}} = D_{f^{(\Delta')}_{32,n}}$.
\end{proof}
%


For the ECA rules $28, 44$ and $140$ we are going to develop a similar construction as the one for 
ECA rule $32$ but before let us recall the Boolean functions which they are based on.
We start with ECA rule $44$ which is based on the Boolean function 
$r_{44}(x_1,x_2,x_3)= (\neg x_1 \wedge x_2 ) \vee (x_1\wedge \neg x_2\wedge x_3)$ which implies that  
$r_{44}(\1,\0,\1)=r_{44}(\0,\1,\1)=r_{44}(\0,\1,\0)=\1$.


\begin{notation}
Let us call $\star$ a possible value of a cell in the configuration that has no effect on the result of the update procedure over the cells under consideration.
At the same time, we will use a letter to represent the value of a cell in the configuration that is unknown but which has an impact on the result of the update procedure over the cells under consideration.
\end{notation}

\begin{lemma}
\label{eq44}
Given two update schedules $\Delta$ and $\Delta'$, if there exists $i \in \intz{n}$ such that:
\begin{itemize}
\item $\lab_{\Delta}((i,i-1))=\labminus \wedge\lab_{\Delta'}((i,i-1))=\labplus $
\item $\lab_{\Delta}((i-1,i))=\lab_{\Delta'}((i-1,i))=\labplus$ 
\item $\lab_{\Delta}((j,j-1))=\lab_{\Delta'}((j,j-1)) \wedge\lab_{\Delta}((j-1,j))=\lab_{\Delta'}((j-1,j))$ for each $j \neq i, j \in \intz{n}$
\end{itemize}
then $D_{f^{(\Delta)}_{44,n}} = D_{f^{(\Delta')}_{44,n}}$.
\end{lemma}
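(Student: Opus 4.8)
The plan is to prove the stronger pointwise statement that $f^{(\Delta)}_{44}(x)_j=f^{(\Delta')}_{44}(x)_j$ for every configuration $x\in\zu^n$ and every cell $j\in\intz{n}$, which yields $D_{f^{(\Delta)}_{44,n}}=D_{f^{(\Delta')}_{44,n}}$. First I would reformulate the three hypotheses: they say exactly that $\Delta$ and $\Delta'$ induce the same labeling of $\loopless{G_{f_{44,n}}}$ on every arc except the single leftward arc $(i,i-1)$, which is $\labminus$ under $\Delta$ and $\labplus$ under $\Delta'$. In particular every rightward arc $(j-1,j)$ and every leftward arc other than $(i,i-1)$ carries an identical label. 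Since an arc of the form $(i,i-1)$ can only affect the right-chain lengths $\dright{\Delta}$ and $\dright{\Delta'}$, never the left-chain lengths $\dleft{\Delta}$ and $\dleft{\Delta'}$, the chains of influence of the two schedules coincide at every cell whose $\dright$-chain does not reach $(i,i-1)$; for all such cells Lemma~\ref{lemma:di} already gives equal images. It therefore remains to treat the cells $j\leq i-1$ whose $\dright$-chain reaches the differing arc, i.e. those $j$ for which the leftward arcs $(j+1,j),\dots,(i-1,i-2)$ are all labeled $\labminus$ (these labels being common to both schedules).

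For such a cell $j$ I would unfold the right-hand nesting of Equation~\ref{eq:d}. Writing $r=r_{44}$, the image has the shape $f^{(\Delta)}_{44}(x)_j=r(A,x_j,B_{j+1})$, where $A$ is the schedule-independent left-nested value and the right-nested value $B_{j+1}$ is obtained by the recursion $B_{m}=r(x_{m-1},x_m,B_{m+1})$ for $m$ running from $i-1$ down to $j+1$. The two schedules can differ only through the bottom of this recursion at cell $i-1$: under $\Delta$ the arc $(i,i-1)$ is $\labminus$, so $B_{i-1}^{(\Delta)}=r(x_{i-2},x_{i-1},B_i^{(\Delta)})$ with $B_i^{(\Delta)}=r(x_{i-1},x_i,B_{i+1}^{(\Delta)})$ the effective (new) value of cell $i$; under $\Delta'$ the arc $(i,i-1)$ is $\labplus$, so the right input of cell $i-1$ is simply the old value $x_i$ and $B_{i-1}^{(\Delta')}=r(x_{i-2},x_{i-1},x_i)$. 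Because $A$ and all the common arcs agree, it suffices to establish $B_{i-1}^{(\Delta)}=B_{i-1}^{(\Delta')}$: equality then propagates upward through the identical recursion, giving $B_{j+1}^{(\Delta)}=B_{j+1}^{(\Delta')}$ and hence equal images at $j$.

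The crux is a case analysis on $x_{i-1}$, exploiting the two identities $r(\0,b,c)=b$ and $r(a,\1,c)=\neg a$ that follow directly from the definition of $r_{44}$. If $x_{i-1}=\0$, then $B_i^{(\Delta)}=r(\0,x_i,B_{i+1}^{(\Delta)})=x_i=B_i^{(\Delta')}$, so the effective value of cell $i$ already coincides for the two schedules. If $x_{i-1}=\1$, then irrespective of $B_i$ we get $B_{i-1}=r(x_{i-2},\1,B_i)=\neg x_{i-2}$ under both schedules, so the discrepancy at cell $i$ is absorbed at cell $i-1$. In either case $B_{i-1}^{(\Delta)}=B_{i-1}^{(\Delta')}$, which completes the argument. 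The main obstacle I anticipate is precisely this propagation step: the differing arc may sit deep inside the $\dright$-chains of many cells at once, and one must argue that rule $44$ ``forgets'' the difference locally at cell $i-1$ before it can reach the image of any cell. A secondary technical point is to invoke Theorem~\ref{theorem:lab_valid} to ensure that the chain of $\labminus$ leftward arcs cannot wrap around the whole cycle, so that the recursion defining the $B_m$ is well-founded.
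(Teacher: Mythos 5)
Your proof is correct, and its local core is the same absorption argument as the paper's, but you organize and complete it differently. The paper's proof of Lemma~\ref{eq44} is a four-case analysis on $(x_{i-1},x_i)$, carried out pictorially, which only verifies that the images of cells $i-1$ and $i$ agree under $\Delta$ and $\Delta'$; the step from ``equality at cells $i-1$ and $i$'' to ``equality of the two dynamics'' (\ie at every cell, including cells far to the left that read the new value of cell $i-1$ through a chain of $\labminus$ arcs) is left implicit. Your route makes exactly this propagation explicit: you isolate the affected cells as those whose right chain of influences crosses the arc $(i,i-1)$ (Lemma~\ref{lemma:di} disposing of all others), unfold the nesting of Equation~\ref{eq:d} as the recursion $B_m=r(x_{m-1},x_m,B_{m+1})$, and show the two schedules' nestings merge already at cell $i-1$, with Theorem~\ref{theorem:lab_valid} guaranteeing well-foundedness. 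Your two identities $r_{44}(\0,b,c)=b$ and $r_{44}(a,\1,c)=\neg a$ are precisely what powers the paper's four cases (its cases $(\0,\0),(\0,\1)$ are your first identity, its cases $(\1,\0),(\1,\1)$ your second), so the local computation is the same; what your version buys is rigor on the global step, at the cost of the notational overhead of the $B_m$. One small imprecision: for the cell $j=i-1$ itself the left input is the schedule-independent nested value $A$, not $x_{i-2}$, so what must be shown there is $r(A,x_{i-1},B_i^{(\Delta)})=r(A,x_{i-1},B_i^{(\Delta')})$ rather than literally $B_{i-1}^{(\Delta)}=B_{i-1}^{(\Delta')}$; your case analysis on $x_{i-1}$ proves both statements at once, so nothing breaks.
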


\begin{proof}
Given two update schedules $\Delta$ and $\Delta'$, we prove that $(f^{(\Delta)}(x)_{i-1} = f^{(\Delta')}(x)_{i-1}) \wedge (f^{(\Delta)}(x)_{i} = f^{(\Delta')}(x)_{i})$ for every possible starting configuration $x$. \\
  Starting from the case with $x_{i-1}=x_{i}=\1$ (see Figure \ref{figure44case11}), one obtains cells $i-1$ and $i$ updated to states $r_{44}(y,\1,\0)$ and $\0$ (respectively) according to the $\Delta$ update schedule and to states $r_{44}(y,\1,\1)$ and $\0$ (respectively) according to the $\Delta'$ update schedule.
According to the rule, we know that $r_{44}(\0,\1,\0)=r_{44}(\0,\1,\1)=\1$ and $r_{44}(\1,\1,\0)=r_{44}(\1,\1,\1)=\0$ consequently the equivalence holds in the case of $x_{i-1}=x_{i}=\1$.\\
\begin{figure}
\centering
 \centerline{\includegraphics[width=\textwidth]{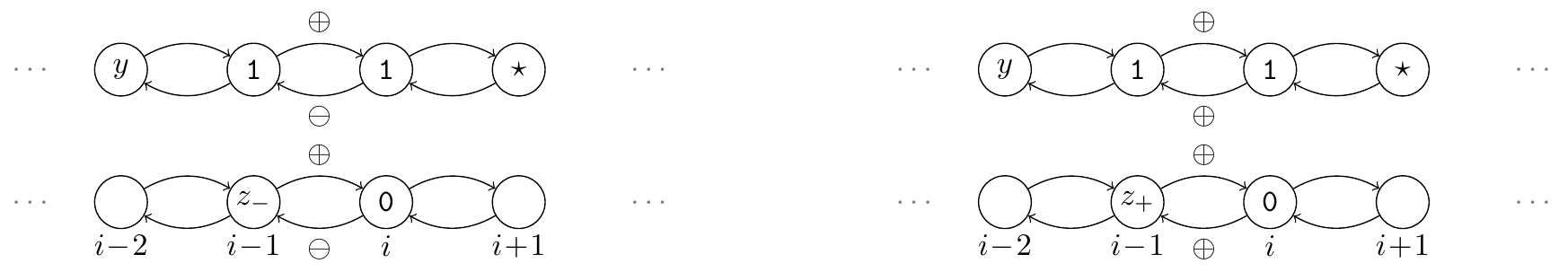}}
 	\caption{on the left, starting from a configuration $(y,\1,\1,\star)$ the schedule $\Delta$ updates before the cell $i$ which becomes \0, after the rule is applied on the cell $i-1$, therefore $z_-=r_{44}(y,\1,\0)$; on the right, the two cells are updated at the same time or after and $z_+=r_{44}(y,\1,\1)$.}
	\label{figure44case11}
\end{figure}
\begin{figure}
\centering
\centerline{\includegraphics[width=\textwidth]{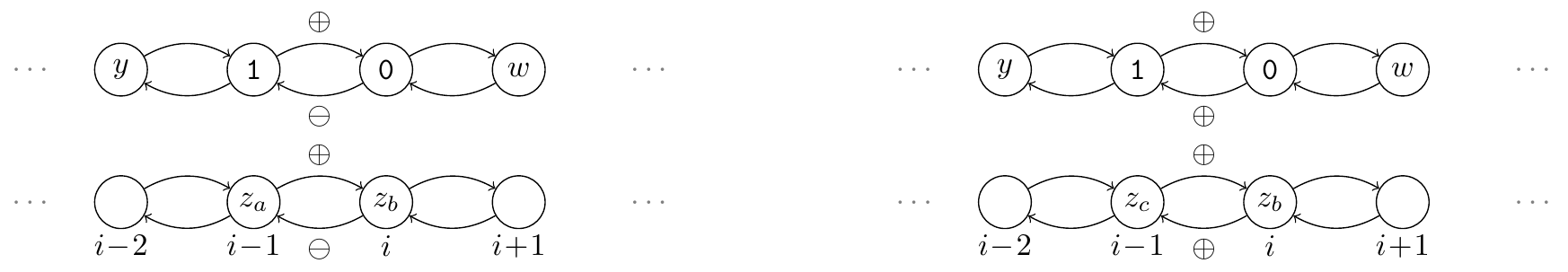}}
	\caption{on the left, starting from a configuration $(y,\1,\0,w)$ the schedule $\Delta$ updates before the cell $i$ to $z_b=r_{44}(\1,\0,w)$, after the rule is applied on the cell $i-1$ to obtain $z_a=r_{44}(y,\1,z_b)$; on the right, the two cells are updated at the same time or after, therefore $z_c=r_{44}(y,\1,\0)$ and $z_b=r_{44}(\1,\0,w)$.}
	\label{figure44case10}
\end{figure}
If we consider $x_{i-1}=\1$ and $x_{i}=\0$ (see Figure \ref{figure44case10}), one obtains cells $i-1$ and $i$ updated to states  $r_{44}(y,\1,r_{44}(\1,\0,w))$ and $r_{44}(\1,\0,w)$ (respectively) according to the $\Delta$ update schedule and to states $r_{44}(y,\1,\0)$ and $r_{44}(\1,\0,w)$ (respectively) according to the $\Delta'$ update schedule.
Like in the previous case, the result of the update procedure depends only on the $y$ value which will be the same in $\Delta$ and in $\Delta'$ consequently the equivalence holds in this case.
\begin{figure}
\centering
 \centerline{\includegraphics[width=\textwidth]{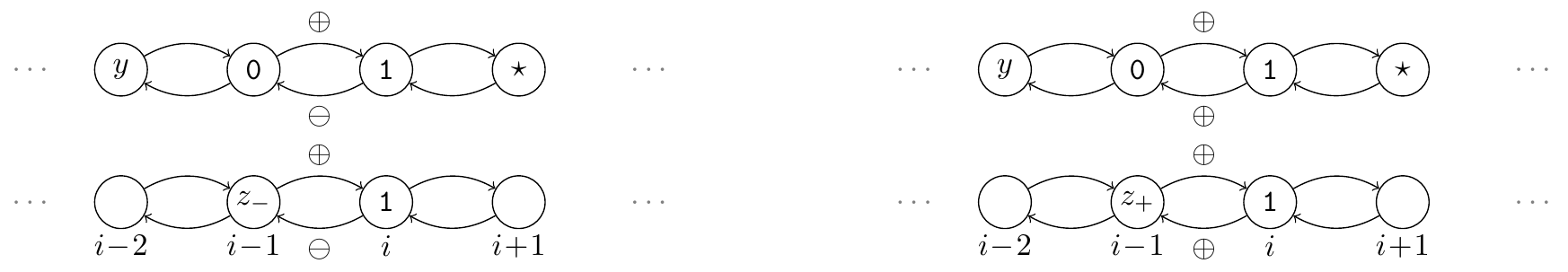}}	
 \caption{on the left, starting from a configuration $(y,\0,\1,\star)$ the schedule $\Delta$ updates before the cell $i$ which becomes $r_{44}=(\0,\1,\star)=\1$, after the rule is applied on the cell $i-1$, where $z_-=r_{44}=(y,\0,\1)$; on the right, the two cells are updated at the same time or after, we can obtain $x_i=\1$ and $x_{i-1}=z_+=z_-=r_{44}=(y,\0,\1)$.}
	\label{44case01}
\end{figure}
If we consider the opposite case $x_{i-1}=\0$ and $x_{i}=\1$ (see Figure \ref{44case01}), one obtains cells $i-1$ and $i$ updated to states $r_{44}(y,\0,\1)$ and $\1$ (respectively) according to the $\Delta$ update schedule and to states $r_{44}(y,\0,\1)$ and $\1$  (respectively) according to the $\Delta'$ update schedule, consequently the equivalence holds also in this case.
\begin{figure}
\centering
\centerline{\includegraphics[width=\textwidth]{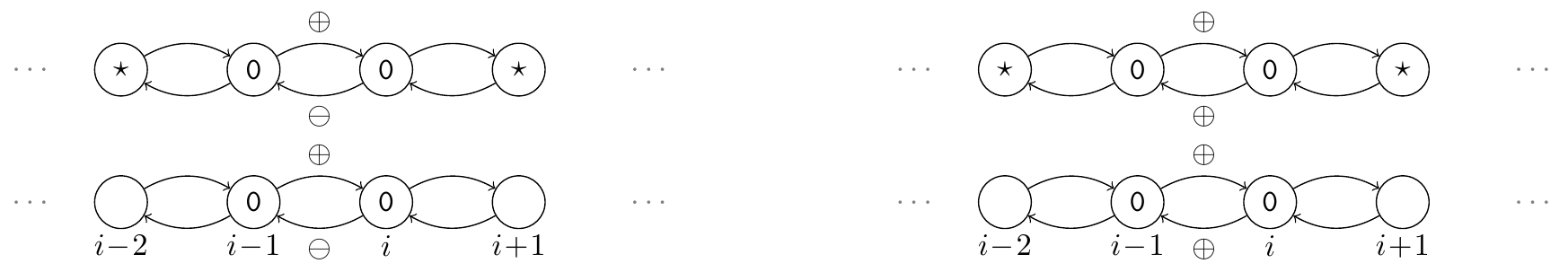}}
	\caption{on the left, starting from a configuration $(\star,\0,\0,\star)$ the schedule $\Delta$ updates before the cell $i$ which becomes \0, after the rule is applied on the cell $i-1$; on the right, the two cells are updated at the same time or after. Consider that $r_{44}=(\0,\0,\star)=r_{44}=(\star,\0,\0)=\0$.}
	\label{case00}
\end{figure}
The last case corresponds to $x_{i-1}=x_{i}=\0$ (see Figure \ref{case00}), one obtains cells $i-1$ and $i$ updated to states $\0$ and $\0$ according to $\Delta$ and $\Delta'$, consequently the equivalence holds. The two different update schedules give the same configurations independently from the initial configuration, in other words $D_{f^{(\Delta)}_{44,n}} = D_{f^{(\Delta')}_{44,n}}$.
\end{proof}

\begin{lemma}
\label{noteq44}
Given two update schedules $\Delta$ and $\Delta'$, $D_{f^{(\Delta)}_{44,n}} \neq D_{f^{(\Delta')}_{44,n}} \iff \exists i,i \in \intz{n}$ such that $ lab_{\Delta}((i-1,i))=\labminus \wedge lab_{\Delta'}((i-1,i))=\labplus $.
\end{lemma}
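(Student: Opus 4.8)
The plan is to reduce this biconditional to the cleaner statement that $D_{f^{(\Delta)}_{44,n}} = D_{f^{(\Delta')}_{44,n}}$ if and only if $\lab_{\Delta}$ and $\lab_{\Delta'}$ agree on every \emph{left arc} $(i-1,i)$, $i\in\intz{n}$. Since inequality of dynamics is symmetric, on any left arc where the two labelings differ we may assume, after possibly exchanging $\Delta$ and $\Delta'$, that it carries $\labminus$ under $\Delta$ and $\labplus$ under $\Delta'$; this recovers the orientation written in the statement (which must be read up to this exchange, as the literal asymmetric form is not preserved by swapping $\Delta,\Delta'$). So it suffices to prove (i) if the left-arc labels coincide everywhere then the dynamics coincide, and (ii) if some left arc is $\labminus$ under $\Delta$ and $\labplus$ under $\Delta'$ then the dynamics differ.

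For (i) I would carry out an influence analysis of $r_{44}$ showing that, for every $\Delta$, configuration $x$ and cell $i$, the image $f^{(\Delta)}_{44,n}(x)_i$ is a function of $x$ and of the left-arc labels alone. Write $f^{(\Delta)}_{44,n}(x)_i=r_{44}(a_i,x_i,c_i)$, where $a_i$ is the value read from cell $i-1$ and $c_i$ the value read from cell $i+1$ at the moment $i$ is updated. Inspecting $r_{44}$, the right input $c_i$ is relevant only when $(a_i,x_i)=(\1,\0)$, where $r_{44}(\1,\0,c_i)=c_i$. The crucial point is that when $x_i=\0$ the right input collapses: if $\lab_{\Delta}((i+1,i))=\labminus$ then $i+1$ is updated before $i$, hence reads the \emph{old} value $x_i=\0$ on its left, and $r_{44}(\0,x_{i+1},\cdot)=x_{i+1}$; if $\lab_{\Delta}((i+1,i))=\labplus$ then $c_i=x_{i+1}$ directly. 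Either way $c_i=x_{i+1}$, so right arcs never affect the output. Symmetrically, along the maximal $\labminus$-chain to the left each cell reads the old value of its right neighbour (updated later), so $a_i$ is obtained by nesting $r_{44}$ on the old values $x_{i-\dleft{\Delta}(i)},\dots,x_i$, i.e. it depends only on $x$ and on $\dleft{\Delta}(i)$. Hence $f^{(\Delta)}_{44,n}(x)_i$ depends only on $x$ and the left-arc labels, and (i) follows at once. This is the global counterpart of the single-flip statement of Lemma~\ref{eq44}.

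For (ii) I would exhibit a distinguishing configuration. Let $(i-1,i)$ be $\labminus$ under $\Delta$ and $\labplus$ under $\Delta'$, and let $L=\dleft{\Delta}(i)\geq 2$ be the length of the maximal $\labminus$-chain ending at $i$ under $\Delta$, so $(i-L,i-L+1)$ is $\labplus$. Setting $x_i=\1$ isolates the left input, since $r_{44}(a,\1,\cdot)=\neg a$: under $\Delta'$ the cell reads $a_i=x_{i-1}$, giving $f^{(\Delta')}_{44,n}(x)_i=\neg x_{i-1}$, whereas under $\Delta$ it reads the already-updated value $u_{i-1}$ of $i-1$, giving $\neg u_{i-1}$. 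It remains to force $u_{i-1}\neq x_{i-1}$. Putting $x_{i-L+1}=\dots=x_{i-1}=x_i=\1$, the updated values along the chain satisfy $u_{i-L+1}=r_{44}(x_{i-L},\1,\1)=\neg x_{i-L}$ and $u_m=r_{44}(u_{m-1},\1,\1)=\neg u_{m-1}$ for $m>i-L+1$, so they alternate; choosing $x_{i-L}\in\zu$ fixes the starting parity and guarantees $u_{i-1}=\0\neq\1=x_{i-1}$. This yields $f^{(\Delta)}_{44,n}(x)_i\neq f^{(\Delta')}_{44,n}(x)_i$, so the dynamics differ.

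The main obstacle is the influence analysis in (i): one must argue carefully that the right neighbour's contribution always collapses to the old value, which is exactly the phenomenon ($x_i=\0$ forcing $i+1$ to update to $x_{i+1}$) that makes only the left arcs relevant and ultimately pins the count of distinct dynamics at $2^n-1$. The witness of (ii) is then routine once the alternation along the chain is set up, the only care being the parity adjustment through the boundary cell $x_{i-L}$.
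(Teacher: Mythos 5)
Your reduction of the lemma to the symmetric statement ``$D_{f^{(\Delta)}_{44,n}} = D_{f^{(\Delta')}_{44,n}}$ iff the labels of all arcs $(i-1,i)$ agree'' is the right reading (the literal asymmetric form cannot hold for both orderings of $\Delta,\Delta'$), and it matches the paper's overall plan: your part (i) plays the role of Lemma~\ref{eq44}, and your part (ii) is the substance of Lemma~\ref{noteq44}. Part (i) is correct and in fact tighter than the paper's version: the collapse argument (the right input matters only when $(a_i,x_i)=(\1,\0)$, and in that case it always equals the old value $x_{i+1}$, since a right neighbour updated earlier reads the old $x_i=\0$ and hence copies itself) proves globally that the dynamics is a function of $x$ and the left-arc labels, where the paper only treats one right-arc change at a time and appeals informally to transitivity. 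Part (ii) uses the same witness idea as the paper (a parity-adjusted alternation along the maximal $\labminus$-chain).

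However, part (ii) has a genuine gap: the case $\dleft{\Delta}(i)=n$, i.e.\ every arc $(j-1,j)$ except $(i,i+1)$ is labeled $\labminus$ in $\Delta$. Then the cell $i-L$ you reserve for the parity adjustment \emph{is} cell $i$, whose state you already fixed to $\1$; the configuration is forced to be $\1^n$, and for odd $n$ the alternation gives $u_{i-1}=\1=x_{i-1}$, so nothing is separated. This case genuinely occurs and your construction genuinely fails there: take $n=5$, $i=0$, $\Delta=(\{1\},\{2\},\{3\},\{4\},\{0\})$ and $\Delta'=(\{1\},\{2\},\{3\},\{0,4\})$. These differ exactly as in the hypothesis, on arc $(4,0)$ only, yet both schedules map $\1^5$ to $\0\0\1\0\1$, so the forced witness shows nothing (the dynamics do differ, e.g.\ on $\1\0\1\1\1$). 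The repair is cheap: when $L=n$, take $x=\1^n$ for even $n$, and for odd $n$ take $x_{i+1}=\0$ with all other cells $\1$; since $r_{44}(\1,\0,\1)=\1$, the single $\0$ shifts the start of the alternation by one cell and restores the parity, giving $u_{i-1}=\0\neq\1=x_{i-1}$ in both cases, hence $f^{(\Delta)}_{44,n}(x)_i=\1\neq\0=f^{(\Delta')}_{44,n}(x)_i$. (This is in effect what the paper's witness does by inserting a $\0$ near cell $i$.) With that case added, your proof is complete.
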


\begin{proof}
We can consider $lab_{\Delta}((i,i-1))=lab_{\Delta'}((i,i-1))=\labplus$ because according to Lemma \ref{eq44} the value of $lab_{\Delta'}((i,i-1))$ cannot  change the dynamics that we are considering and the value of $lab_{\Delta}((i,i-1))$ must be $\labplus$ given the $\labminus$ in the opposite sense.\\
We can consider equal labelings over the other transitions.\\
Let $j$ be a cell such that $lab_{\Delta'}((j,j+1))=\labplus$ and $lab_{\Delta'}((j+k,j+k+1))=\labminus$ for all $1\leq k \leq i-j-1$. 
Such a $j$ must exist since otherwise we would have a $\labminus$ cycle of length $n$.
Now, let $x\in\{\0,\1\}^n$ be any configuration of length $n$ such that
\begin{equation}
\label{redblue}
x_{[j,i+1]}=
\begin{cases}
\1(\1)^{(i-1)-j-1}\0\1\1,\mbox{ if }(i-1)-j-1 \mod2=0\mbox{ or }j=i-2\\
\0(\1)^{(i-1)-j-1}\0\1\1, \mbox{ otherwise }
\end{cases}
\end{equation}
Then we have
$$
\left(f^{\Delta}(x)_{[j+1,i]}\right)=
\begin{cases}
\0(\1\0)^{\lfloor \frac{(i-1)-j-2}{2} \rfloor}\1\1\0,\mbox{ if }(i-1)-j-1 \mod2=0\\
\1\0,\mbox{ if }j=i-2\\
\1(\0\1)^{\lfloor \frac{(i-1)-j-2}{2} \rfloor}\1\0, \mbox{ otherwise }
\end{cases}
.
$$
In general we can always obtain $f^{\Delta}(x)_{i}=\0$ (see Figure \ref{noteqfigure44}). 
The update schedule $\Delta'$ gives $f^{\Delta'}(x)_{i}=\1$. 
Therefore, $f^{\Delta}(x)_{i} \neq f^{\Delta'}(x)_{i}$ and $D_{f^{(\Delta)}_{44,n}} \neq D_{f^{(\Delta')}_{44,n}}$.
\end{proof}

\begin{figure}
\centering
\centerline{\includegraphics[scale=0.7]{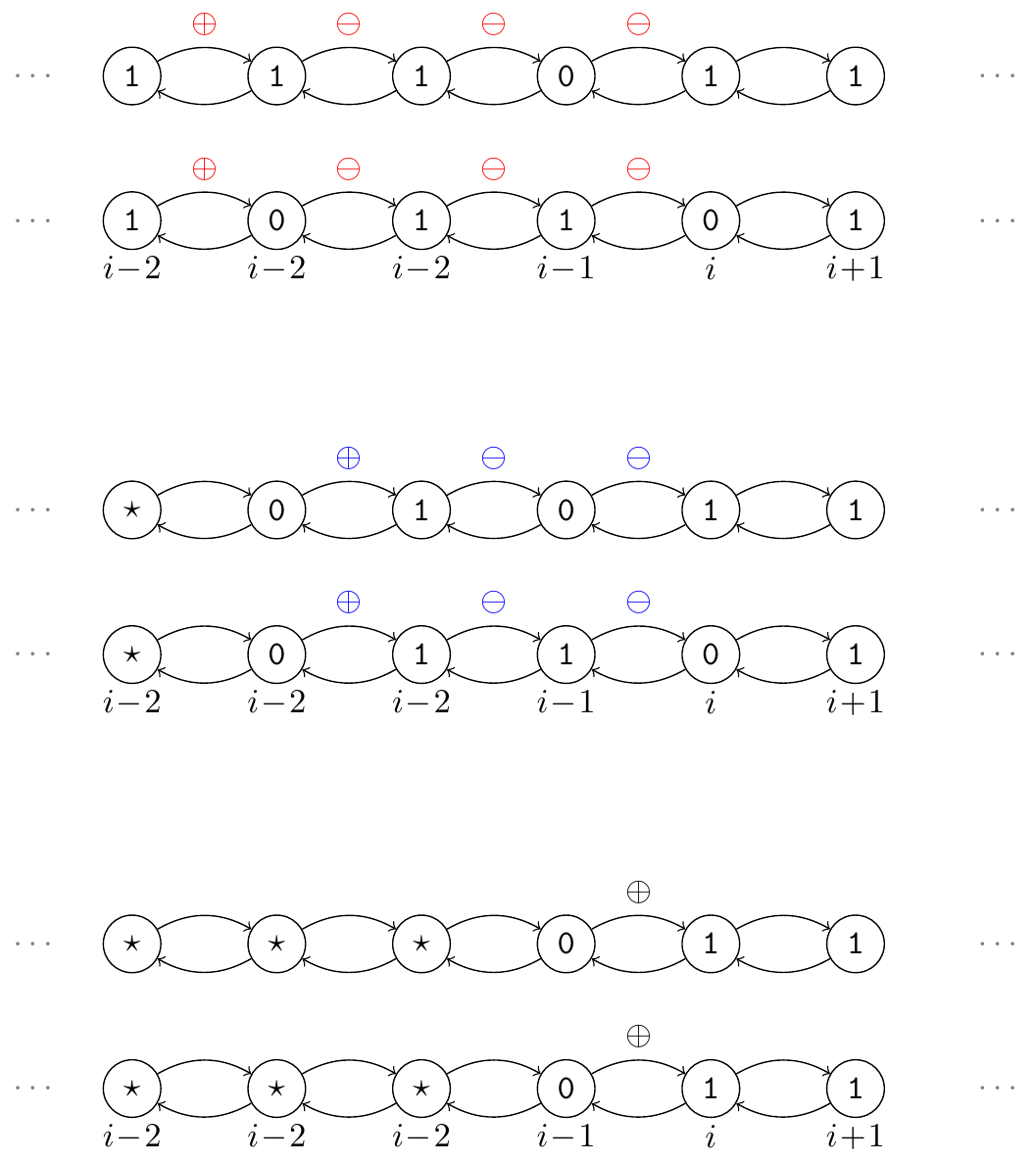}}
	\caption{since we cannot have a $\labminus$ cycle in the labeled interaction digraph of update schedule $\Delta'$ there must be a cell $j$ such that $lab_{\Delta'}((j,j+1))=\labplus$ and $lab_{\Delta'}((j+k,j+k+1))=\labminus$ for all $1\leq k \leq i-j-1$. The blue and the red updates shows the different cases of equation \ref{redblue} and the black one represent $\Delta'$.}
	\label{noteqfigure44}
	\end{figure}
	
Consider that the previous lemma is sufficient to determine that two update schedules that differ 
in at least one cell i such that $lab_{\Delta}((i-1,i))=\labminus \wedge lab_{\Delta'}((i-1,i))=\labplus$ 
generate two different dynamics. In fact, we can focus on one of these cells to build a configuration in
which the cell updated in different values.
\smallskip	
	

Consider now the ECA rule $28$, it is based on 
$r_{28}(x_1,x_2,x_3)= (\neg x_1\wedge x_2) \vee ( x_1 \wedge \neg x_2 \wedge \neg x_3)$
and hence $r_{28}(\1,\0,\0)=r_{28}(\0,\1,\1)=r_{28}(\0,\1,\0)=\1$. 
Remark that the Lemma \ref{eq44} holds also for this rule.
The only difference is in the proof, for completeness we show the 
equivalence that holds for every possible starting configuration 
(see Figure~\ref{figure2generalcase}).\\\\
For this rule also the Lemma \ref{noteq44} can be applied. The main idea is the same. 
In fact, let $x\in\set{\0,\1}^n$ be any configuration of length $n$ such that
\begin{equation}
x_{[j,i+1]}=
\begin{cases}
\1(\1)^{(i-1)-j-1}\1\0\0,\mbox{ if }(i-1)-j-1 \mod2=0\mbox{ or }j=i-2\\
\0(\1)^{(i-1)-j-1}\1\0\0, \mbox{ otherwise }
\end{cases}
\end{equation}
Then we have
$$
\left(f^{\Delta}(x_{[j+1,i]})\right)=
\begin{cases}
(\0\1)^{\lfloor \frac{(i-1)-j}{2} \rfloor}\0\0,\mbox{ if }(i-1)-j-1 \mod2=0\\
\0\0,\mbox{ if }j=i-2\\
\1(\0\1)^{\lfloor \frac{(i-1)-j-2}{2} \rfloor}\0\0, \mbox{ otherwise }
\end{cases}
.
$$
In general we obtain $f^{\Delta}(x)_{i}=\0$. 
The update schedule $\Delta'$ gives $f^{\Delta'}(x)_{i}=\1$. 
Therefore, $f^{\Delta}(x)_{i} \neq f^{\Delta'}(x)_{i}$ and $D_{f^{(\Delta)}_{28,n}} \neq D_{f^{(\Delta')}_{28,n}}$.

\begin{figure}
	\centering
	\centerline{\includegraphics[scale=0.7]{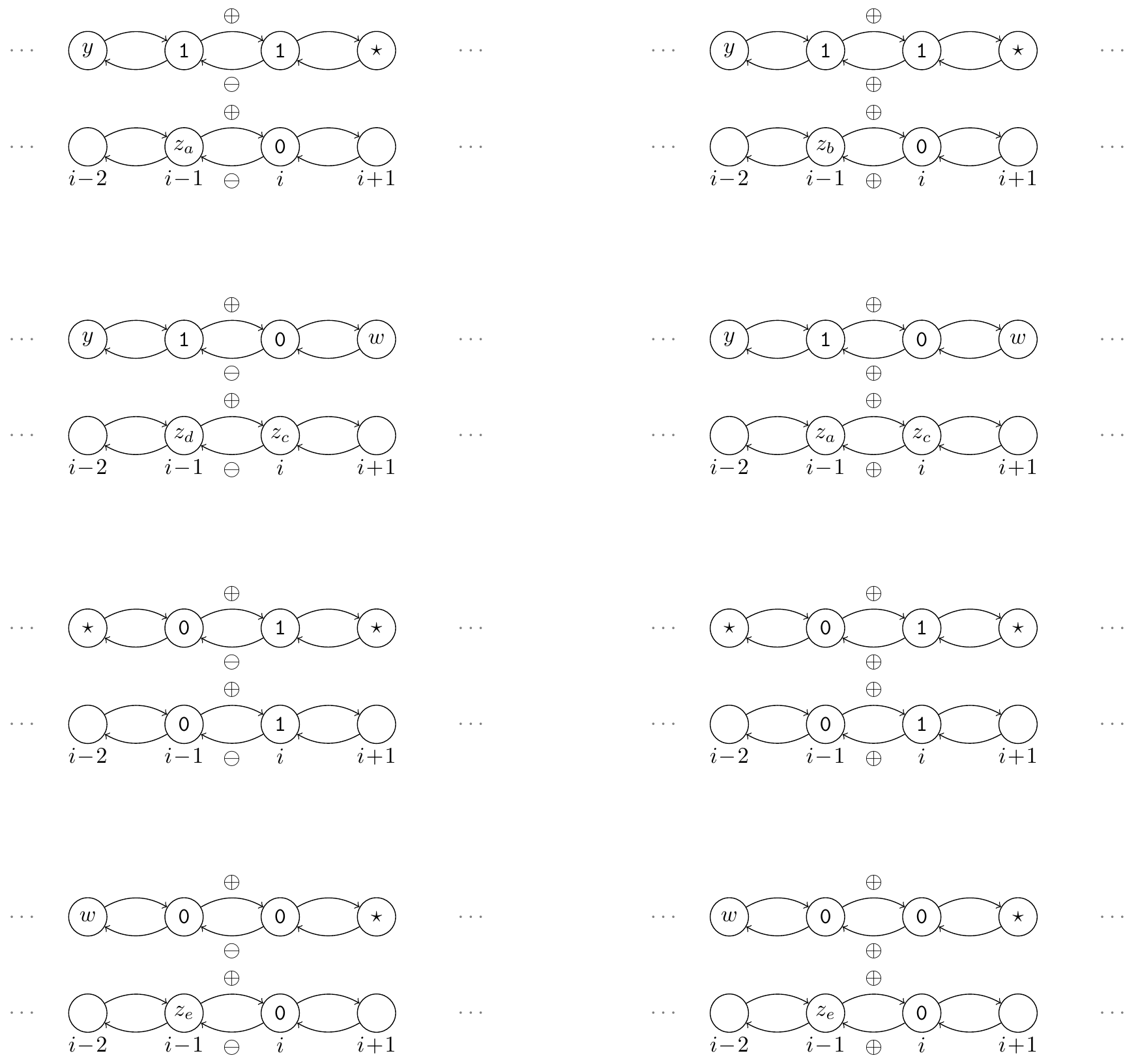}}
	\caption{starting from every possible configuration, only a difference in a label over the edge between $i+1$ and $i$ is not sufficient in order to obtain different final configurations for rule 28. In the figure: $z_a=r_{28}(y,\1,\0)$, $z_b=r_{28}(y,\1,\1)$, $z_c=r_{28}(\1,\0,w)$, $z_d=r_{28}(y,\1,z_c)$ and $z_e=r_{28}(w,\0,\0)$. We need also to consider that $z_a=z_b=y$ and $z_c=z_d=\0$ if $y=\1$, $z_c=z_d=\1$ otherwise.}
	\label{figure2generalcase}
\end{figure}

Consider that the previous lemma is sufficient to determine that two update schedules that differ 
in at least one cell i such that $lab_{\Delta}((i-1,i))=\labminus \wedge lab_{\Delta'}((i-1,i))=\labplus$ 
generate two different dynamics. In fact, we can focus on one of these cells to build a configuration in
which the cell updated in different values.
\smallskip

Let us now focus our attention on ECA rule $140$ which is based on the Boolean function
$r_{140}(x_1,x_2,x_3)= ( \neg x_1 \lor x_3) \wedge x_2$, 
that is to say  $r_{140}(\1,\1,\1)=r_{140}(\0,\1,\1)=r_{140}(\0,\1,\0)=\1$.

\begin{lemma}
\label{eq140}
For any $n>3$, given two update schedules $\Delta, \Delta'\in\Pn$, if there exists $i \in \intz{n}$ such that
\begin{itemize}
\item $\lab_{\Delta}((i,i+1))=\labminus$ and $\lab_{\Delta'}((i,i+1))=\labplus $
\item $\lab_{\Delta}((i+1,i))=\lab_{\Delta'}((i+1,i))=\labplus$ 
\item $\lab_{\Delta}((j,j-1))=\lab_{\Delta'}((j,j-1))$ and $\lab_{\Delta}((j-1,j))=\lab_{\Delta'}((j-1,j))$ for each $j \neq i+1, j \in \intz{n}$
\end{itemize}
then $D_{f^{(\Delta)}_{140,n}} = D_{f^{(\Delta')}_{140,n}}$.

\end{lemma}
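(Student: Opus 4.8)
The plan is to prove the stronger pointwise statement that $f^{(\Delta)}_{140,n}(x)=f^{(\Delta')}_{140,n}(x)$ for every configuration $x\in\zu^n$, which immediately gives $D_{f^{(\Delta)}_{140,n}}=D_{f^{(\Delta')}_{140,n}}$. The three hypotheses together say that $\lab_\Delta$ and $\lab_{\Delta'}$ coincide on every arc of $\GECA{n}$ except the single arc $(i,i+1)$, which is $\labminus$ under $\Delta$ and $\labplus$ under $\Delta'$; combined with $\lab_\Delta((i+1,i))=\lab_{\Delta'}((i+1,i))=\labplus$, this means that cell $i$ is updated strictly before $i+1$ under $\Delta$, but simultaneously with $i+1$ under $\Delta'$. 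This is exactly the reflection of the situation treated in Lemma~\ref{eq44}, so a reflected argument should work; I would however make the propagation to distant cells fully explicit using the chains of influence and Lemma~\ref{lemma:di}.

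First I would pin down which cells can possibly change image. Since the only modified arc $(i,i+1)$ points rightward, it never occurs in any $\dright{\Delta}(m)$, and it occurs in $\dleft{\Delta}(m)$ only for $m\geq i+1$. Writing $M\geq i+1$ for the largest index with $\lab_\Delta((i+1,i+2))=\dots=\lab_\Delta((M-1,M))=\labminus$ (finite because an all-$\labminus$ cycle is forbidden by Theorem~\ref{theorem:lab_valid}, the hypothesis $n>3$ ruling out wrap-around of the relevant segment), the cells whose chains differ between $\Delta$ and $\Delta'$ are exactly $i+1,\dots,M$; every other cell has identical chains and hence identical image by Lemma~\ref{lemma:di}. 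In particular cell $i$ is chain-stable. I would then record the identity that drives the rule: because $\lab_\Delta((i+1,i))=\lab_{\Delta'}((i+1,i))=\labplus$, cell $i$ reads the \emph{old} value $x_{i+1}$ on its right, so whenever $x_{i+1}=\1$ one has $f^{(\Delta)}_{140}(x)_i=r_{140}(\,\cdot\,,x_i,\1)=(\neg\,\cdot\,\vee\1)\wedge x_i=x_i$. Thus cell $i$ is left unchanged precisely in the situation where $i+1$ would otherwise notice the change of schedule.

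The heart of the proof is the base case $m=i+1$. Under $\Delta$ this cell reads its left neighbour as $f^{(\Delta)}_{140}(x)_i$, but under $\Delta'$ it reads the old value $x_i$; its right contribution $\rho$ is the same under both (if $\lab_\Delta((i+2,i+1))=\labplus$ then $\rho=x_{i+2}$, and if it is $\labminus$ then necessarily $\lab_\Delta((i+1,i+2))=\labplus$, so $i+2$ is chain-stable and $\rho$ agrees by Lemma~\ref{lemma:di}). A three-case analysis on $(x_{i+1},\rho)$ then closes it: if $x_{i+1}=\0$ both images equal $\0$; if $x_{i+1}=\1$ and $\rho=\1$ both equal $\1$; and if $x_{i+1}=\1$ and $\rho=\0$ the two images are $\neg f^{(\Delta)}_{140}(x)_i$ and $\neg x_i$, which coincide by the identity above. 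Hence $f^{(\Delta)}_{140}(x)_{i+1}=f^{(\Delta')}_{140}(x)_{i+1}$.

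Finally I would propagate along $i+2,\dots,M$ by induction on $m$: for such $m$ the arc $(m-1,m)$ is $\labminus$, so cell $m$ reads the freshly updated image of $m-1$, which agrees by the induction hypothesis, while its right contribution agrees by the same dichotomy used at $i+1$ (an incoming $\labminus$ on the right forces the feeding neighbour to be chain-stable). Thus all cells $i+1,\dots,M$ agree, and by Lemma~\ref{lemma:di} so do all remaining cells, giving $f^{(\Delta)}_{140,n}(x)=f^{(\Delta')}_{140,n}(x)$ for every $x$. The main obstacle is not the arithmetic of $r_{140}$ but this bookkeeping of influence-chains: one must verify that an incoming $\labminus$ arc always forces the adjacent feeding cell to be chain-stable, so that the single local identity at cell $i$ suffices to keep the whole image synchronised.
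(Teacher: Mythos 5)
Your proposal is correct, and at its heart it coincides with the paper's argument: the decisive observation in both is that since $\lab_\Delta((i+1,i))=\lab_{\Delta'}((i+1,i))=\labplus$, cell $i$ reads the old value $x_{i+1}$, so that $x_{i+1}=\1$ forces $f^{(\Delta)}_{140}(x)_i=x_i$ (making it irrelevant whether cell $i+1$ reads the old or the new value of cell $i$), while $x_{i+1}=\0$ freezes cell $i+1$ at $\0$. The paper carries this out as a four-case analysis on $(x_i,x_{i+1})$, presented through Figures~\ref{figure140case11}--\ref{figure140case00} and an appeal to ``the same reasoning as Lemma~\ref{eq44}''; your three-case analysis on $(x_{i+1},\rho)$ is the same computation reorganized. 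Where you genuinely go beyond the paper is in the bookkeeping that justifies why checking cells $i$ and $i+1$ suffices: the paper verifies equality of the images only at these two cells and then asserts the global equality, leaving implicit the fact that a $\labminus$ label on $(i+1,i+2), (i+2,i+3),\dots$ lets the perturbation propagate rightward. You make this explicit by (a) using $\dleft{}$, $\dright{}$ and Lemma~\ref{lemma:di} to isolate the affected cells as exactly the segment $i+1,\dots,M$ along the maximal run of $\labminus$ rightward arcs, (b) running an induction along that segment, and (c) noting that an incoming $\labminus$ arc from the right forces, via Theorem~\ref{theorem:lab_valid}, the feeding neighbour to be chain-stable, so right-hand reads always agree. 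This yields a self-contained proof of the pointwise equality $f^{(\Delta)}_{140,n}(x)=f^{(\Delta')}_{140,n}(x)$, at the cost of more machinery than the paper's pictorial argument. One nitpick: the wrap-around of the segment $i+1,\dots,M$ past cell $i$ is excluded not by $n>3$ but by the validity of $\Delta$ itself (it would force all $n$ rightward arcs to be $\labminus$ in $\Delta$, a forbidden cycle); your parenthetical invoking Theorem~\ref{theorem:lab_valid} is the right reason, and the mention of $n>3$ there is superfluous.
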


\begin{proof}
Given the two update schedules $\Delta, \Delta'\in\Pn$, using the same reasoning as for Lemma~\ref{eq44}, one can prove that $f^\Delta(x)_{i} = f^{\Delta'}(x)_{i}$ and  $f^\Delta(x)_{i+1} = f^{\Delta'}(x)_{i+1}$ for every possible starting configuration $x\in\zu^n$.
It is easy to see from the Figures~\ref{figure140case11}, \ref{figure140case10}, \ref{figure140case01} and \ref{figure140case00} that the equivalence holds for every possible initial configuration.
The two different update schedules give the same configurations independently from the initial configuration, in other words $D_{f^{(\Delta)}_{140,n}} = D_{f^{(\Delta')}_{140,n}}$.
\end{proof}

\begin{figure}
\centering
 \centerline{\includegraphics[width=\textwidth]{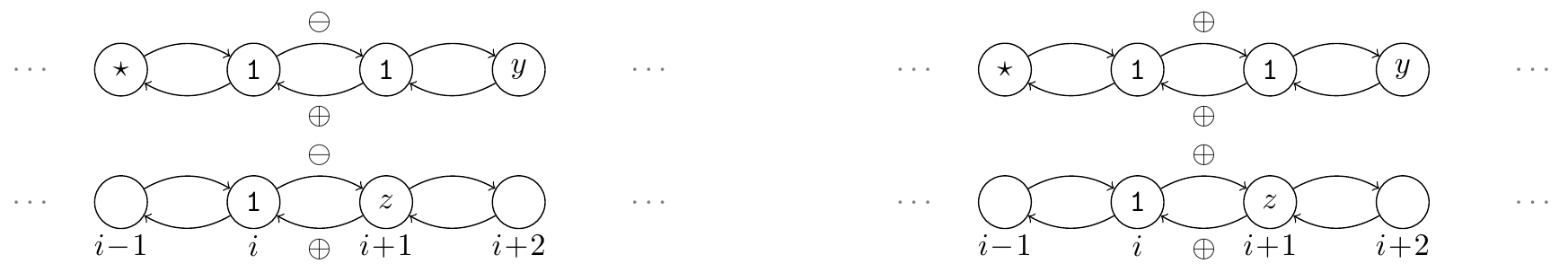}}
	\caption{on the left, starting from a configuration $(\star,\1,\1,y)$ the $\Delta$ update updates before the cell $i$ which becomes $r_{140}=(\star,\1,\1)=\1$, after the rule is applied on the cell $i+1$, where $z=r_{140}(\1,\1,y)$; on the right, the two cells are updated at the same time or after.}
	\label{figure140case11}
\end{figure}

\begin{figure}[htb]
\centering
 \centerline{\includegraphics[width=\textwidth]{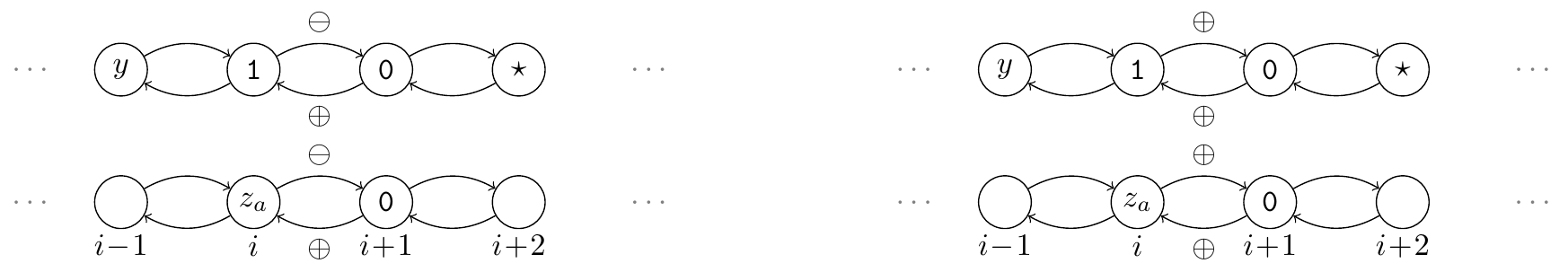}}
 	\caption{on the left, starting from a configuration $(y,\1,\0,\star)$ the $\Delta$ update updates before the cell $i$, $z_a=r_{140}(y,\1,\0)$, after the rule is applied on the cell $i+1$, therefore $r_{140}(r_{140}(y,\1,\0),\0,\star)=\0$; on the right, the two cells are updated at the same time or after. Remember that $r_{140}(\1,\0,\star)=\0$.}
	\label{figure140case10}
\end{figure}

\begin{figure}[htb]
\centering
 \centerline{\includegraphics[width=\textwidth]{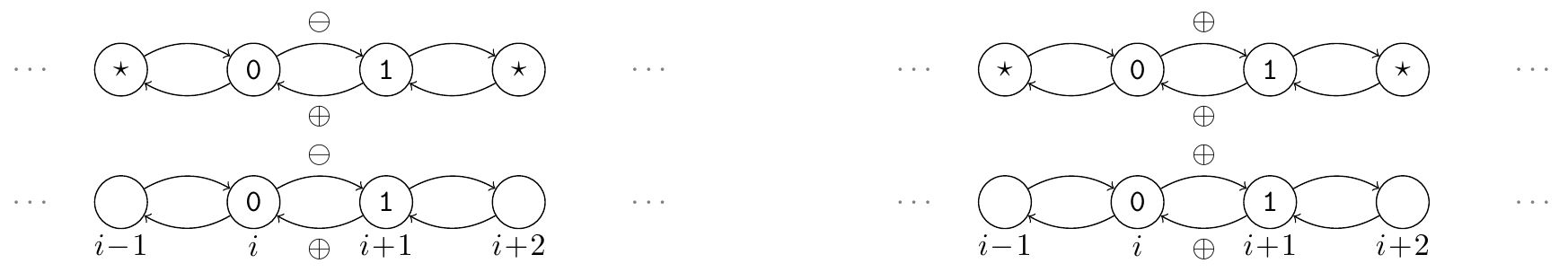}}
 	\caption{on the left, starting from a configuration $(\star,\0,\1,\star)$ the $\Delta$ update updates before the cell $i$ which becomes \0 (in fact $r_{140}(\star,\0,\1)=\0$), after the rule is applied on the cell $i+1$ updated to $r_{140}(\0,\1,\star)=\1$; on the right, the two cells are updated at the same time or after.}
	\label{figure140case01}
\end{figure}

\begin{figure}[htb]
\centering
\centerline{\includegraphics[width=\textwidth]{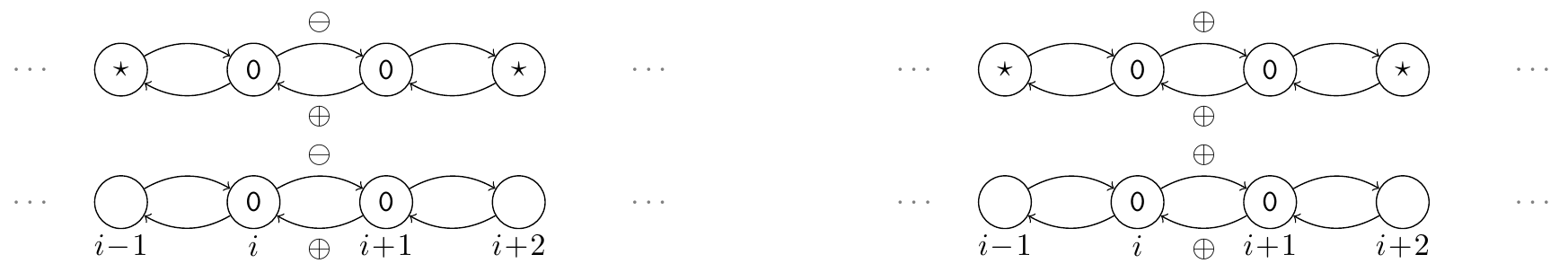}}
	\caption{on the left, starting from a configuration $(\star,\0,\0,\star)$ the $\Delta$ update updates before the cell $i$ which becomes \0, after the rule is applied on the cell $i+1$; on the right, the two cells are updated at the same time or after. In fact, we know that $r_{140}(\star,\0,\0)=\0$ and $r_{140}(\0,\0,\star)=\0$.}
	\label{figure140case00}
\end{figure}

\begin{remark}
	\label{rem:zerorule140}
	The ECA rule $140$ is such that $r_{140}(x_1,\0,x_2)=\0$ for any $x_1,x_2 \in \bool$, hence for any given update schedule $\Delta$ a cell that is in state \0 will remain in such a state throughout the whole evolution.
\end{remark}

\begin{lemma}
\label{noteq140}
For any $n>3$, given two update schedules $\Delta, \Delta'\in\Pn$, it holds
\[
D_{f^{(\Delta)}_{140,n}} \neq D_{f^{(\Delta')}_{140,n}} \iff \exists i\in \intz{n}\text{ s. t. } \lab_{\Delta}((i+1,i))=\labminus \text{ and }
\lab_{\Delta'}((i+1,i))=\labplus\enspace.
\]

\end{lemma}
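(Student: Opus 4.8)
The plan is to prove the equivalence by reducing the dynamics of rule $140$ to a single combinatorial invariant of the update schedule, the set $S_{\Delta}=\set{i\in\intz{n}\mid \lab_{\Delta}((i+1,i))=\labminus}$ of cells that read their right neighbour \emph{updated}. Rule $140$ is the mirror image of rule $44$ (the relevant arc is $(i+1,i)$ rather than $(i-1,i)$), so the whole argument mirrors the one for Lemma~\ref{noteq44}, with Lemma~\ref{eq140} playing the role of Lemma~\ref{eq44}. First I would record two structural facts valid for $n>3$. By Theorem~\ref{theorem:lab_valid} a valid labeling cannot have both $(i+1,i)$ and $(i,i+1)$ labeled $\labminus$ (this is a forbidden $2$-cycle), so $i\in S_{\Delta}$ forces $\lab_{\Delta}((i,i+1))=\labplus$; and by Lemma~\ref{eq140}, whenever $\lab_{\Delta}((i+1,i))=\labplus$ the label of the companion arc $(i,i+1)$ is irrelevant to the dynamics. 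Together these say that $D_{f^{(\Delta)}_{140,n}}$ is a function of $S_{\Delta}$ alone; equivalently, since $\dright{\Delta}(i)$ is exactly determined by the maximal run of consecutive cells of $S_{\Delta}$ starting at $i$, it is a function of the profile $(\dright{\Delta}(i))_{i\in\intz{n}}$.

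For the $(\Leftarrow)$ direction I would fix $i$ with $\lab_{\Delta}((i+1,i))=\labminus$ and $\lab_{\Delta'}((i+1,i))=\labplus$ and build a configuration separating the two images at cell $i$. Using Lemma~\ref{eq140} I first normalise the companion and left arcs so that the left reading of cell $i$ is the same under $\Delta$ and $\Delta'$ (this changes neither dynamics). Since an all-$\labminus$ run $(i+1,i),(i+2,i+1),\dots$ of length $n$ would be a forbidden cycle, there is a minimal $\ell\geq 1$ with $\lab_{\Delta}((i+\ell+1,i+\ell))=\labplus$, i.e. $\dright{\Delta}(i)=\ell$. Thus under $\Delta$ cell $i$ reads the freshly updated values propagating leftward from $i+\ell$ along $i+\ell,\dots,i+1$, whereas under $\Delta'$ it reads the old $x_{i+1}$. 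I would then specify $x$ on the window $\set{i,\dots,i+\ell+1}$ by an explicit parity-dependent pattern (mirroring the rule~$28$ construction displayed above), padding the remaining cells with \0, which stay \0 for all time by Remark~\ref{rem:zerorule140} and hence cannot interfere. A short finite induction on the nested applications of $r_{140}$ along the chain then gives $f^{(\Delta)}_{140}(x)_i=\0$ while $f^{(\Delta')}_{140}(x)_i=\1$, and Formula~\ref{eq:max-sensitive} yields $D_{f^{(\Delta)}_{140,n}}\neq D_{f^{(\Delta')}_{140,n}}$.

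For the $(\Rightarrow)$ direction I would argue the contrapositive through the invariant $S_{\Delta}$. If no witnessing $i$ exists in either direction, i.e. $S_{\Delta}=S_{\Delta'}$, then $\dright{\Delta}(i)=\dright{\Delta'}(i)$ for every $i$, and by the structural reduction of the first paragraph the two schedules induce the same dynamics. Concretely this is obtained by transforming $\Delta$ into $\Delta'$ through a sequence of single companion-arc flips, each justified by Lemma~\ref{eq140} and kept inside valid labelings; once the companion arcs agree and the $(i+1,i)$-arcs already agree, the two update digraphs coincide and Theorem~\ref{theorem:lab} closes the case. Here I should be careful about the asymmetry of the stated equivalence: the inequality $D_{f^{(\Delta)}_{140,n}}\neq D_{f^{(\Delta')}_{140,n}}$ is symmetric in $\Delta,\Delta'$, so the honest reading of the right-hand side is ``up to exchanging $\Delta$ and $\Delta'$''; when $S_{\Delta}\neq S_{\Delta'}$ one of the two sets strictly contains an index absent from the other, and relabelling that role produces the asserted witness.

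I expect the main obstacle to be the explicit configuration of the $(\Leftarrow)$ direction: choosing a right-propagating pattern whose nested $r_{140}$-evaluation forces a \0 at cell $i$ under $\Delta$ independently of the parity of $\ell$, while the terminal value $x_{i+\ell+1}$ and the \0-padding keep the $\Delta'$-image equal to \1. As in the rule~$28$ and rule~$44$ computations this is a finite but delicate case analysis, and making the seed values $x_i,x_{i+1}$ and the boundary value $x_{i+\ell+1}$ simultaneously compatible with both the $\Delta$- and the $\Delta'$-evaluation is the crux. A secondary technical point is checking that the companion-arc flips used in the $(\Rightarrow)$ reduction stay within valid labelings, which I would resolve by always flipping toward the canonical representative determined by $S_{\Delta}$ rather than in an arbitrary order.
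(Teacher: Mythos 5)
Your overall strategy is the same as the paper's: Lemma~\ref{eq140}, together with the forbidden cycles of Theorem~\ref{theorem:lab_valid}, shows that the dynamics of rule $140$ is determined by the labels of the arcs $(i+1,i)$ alone, and an explicit configuration separates two schedules that disagree on such an arc. Your treatment of the $(\Rightarrow)$ direction (the invariant $S_\Delta$, transitive single-arc flips justified by Lemma~\ref{eq140}, and the observation that $\labminus$ on $(i+1,i)$ forces $\labplus$ on $(i,i+1)$) is in fact more explicit than the paper, which leaves that direction implicit, and your ``up to exchanging $\Delta,\Delta'$'' reading of the statement is the intended one. One warning on the point you flag about intermediate validity: the canonical representative cannot be ``all free clockwise arcs set to $\labplus$''. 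If some $(i+1,i)$ is $\labminus$ while every arc $(j,j+1)$ is $\labplus$, the ring becomes a forbidden cycle of length $n$ (use the reversed arc $(i+1,i)$ and the clockwise $\labplus$ arcs). So you must flip towards the all-$\labminus$ representative, first growing the set of $\labminus$ clockwise arcs and then shrinking it; every intermediate labeling is then valid.

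The genuine gap is in the $(\Leftarrow)$ direction: you never produce the separating configuration, and the route you sketch for it --- a parity-dependent pattern propagated along the whole $\labminus$ chain, as for rules $28$ and $44$ --- misjudges the rule. Rule $140$ satisfies $r_{140}(x_1,\0,x_3)=\0$ (Remark~\ref{rem:zerorule140}: zeros are frozen), $r_{140}(y,\1,\1)=\1$ for every $y$, and $r_{140}(\1,\1,z)=z$; these identities collapse the chain entirely. The paper simply takes $(x_{i-1},x_i,x_{i+1},x_{i+2})=(\1,\1,\1,\0)$, all other cells arbitrary (the four cells are distinct since $n>3$). Under $\Delta$: validity forces $\lab_\Delta((i,i+1))=\labplus$, so cell $i+1$ is updated reading the old $x_i=\1$ on its left and a frozen $\0$ on its right (whatever the label of $(i+2,i+1)$), giving $r_{140}(\1,\1,\0)=\0$; then cell $i$ reads this $\0$ on its right, and its left reading equals $\1$ in all cases --- either the old $x_{i-1}=\1$, or the updated value of cell $i-1$, which is $r_{140}(\cdot,\1,\1)=\1$ because $(i,i-1)$ is then forced $\labplus$ --- so $f^{(\Delta)}_{140}(x)_i=r_{140}(\1,\1,\0)=\0$. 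Under $\Delta'$: cell $i$ reads the old $x_{i+1}=\1$, hence $f^{(\Delta')}_{140}(x)_i=r_{140}(y,\1,\1)=\1$ for any left reading $y$. This is a four-cell local check with no induction, no parity, and in fact no appeal to Lemma~\ref{eq140} at all (the paper invokes it only as a convenience). Even inside your chain formulation, an all-$\1$ window terminated by a $\0$ propagates that $\0$ back to cell $i+1$ in one line via $r_{140}(\1,\1,z)=z$, so the ``delicate case analysis'' you anticipate never materializes. Filling in this configuration is what is needed to make your proposal a complete proof.
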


\begin{proof}
Choose $n, \Delta$ and $\Delta'$ as in the hypothesis. We are going to prove that there exists a configuration such that 
$f^{\Delta}(x)_{i} \neq f^{\Delta'}(x)_{i}$. 
Consider the following initial configuration $(x_{i-1},x_i, x_{i+1},x_{i+2})=(\1,\1,\1,\0)$ and assume that $\lab_{\Delta}((i-1,i))= \oplus$ 
(according to Lemma~\ref{eq140}, this is not changing the dynamics).
Moreover, assume that $\lab_{\Delta}((i+1,i))\neq\lab_{\Delta'}((i+1,i))$ is the only difference between the two update schedules.
According to $\Delta'$, $i$ and $i+1$ are updated together, therefore the final configuration is $(\1,\1,\0,\0)$.
In the case of $\Delta$, the cell $i+1$ is updated before than $i$ holding $r_{140}(\1,\1,\0)=\0$. In a second moment, the cell $i$ is updated and $r_{140}(\1,\1,\0)=\0$. It follows that $f^{\Delta}(x)_{i} \neq f^{\Delta'}(x)_{i}$ and $D_{f^{(\Delta)}_{140,n}} \neq D_{f^{(\Delta')}_{140,n}}$.
Remark that the cell $i+1$ can be influenced from a $\ominus$ chain, but a cell with value $\0$ is frozen at this state. 
\end{proof}

The previous lemma is sufficient to determine that two update schedules that differ 
in at least one cell $i$ such that $\lab_{\Delta}((i+1,i))=\labminus$ and 
$\lab_{\Delta'}((i+1,i))=\labplus$ generate two different dynamics. Indeed, one 
can focus on one of these cells to build a configuration in
which the cell updated produces different values.

\begin{theorem}
  \label{theorem:28-44-140}
  $\sens{f_{\alpha,n}}=\frac{2^n-1}{3^n-2^{n+1}+2}$
  for any $n>3$ and for all ECA rules $\alpha \in \set{28,32, 44,140}$. 
\end{theorem}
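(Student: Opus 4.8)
The plan is to show that $|\dynamics{f_{\alpha,n}}|=2^n-1$ for each of the four rules; the statement then follows at once, since $\sens{f_{\alpha,n}}=|\dynamics{f_{\alpha,n}}|/(3^n-2^{n+1}+2)$ by definition. In each case the idea is the same: the previous lemmas show that the dynamics is governed by a single combinatorial \emph{signature} extracted from $\lab_\Delta$, so that counting dynamics amounts to counting realizable signatures, and I argue that exactly $2^n-1$ of them are realizable by a valid ordered partition.

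For rule $140$, Lemma~\ref{noteq140} says that $D_{f^{(\Delta)}_{140,n}}$ is determined by the set $S_\Delta=\set{i\in\intz{n}\mid\lab_\Delta((i+1,i))=\labminus}$: if $S_\Delta=S_{\Delta'}$ the ``exists $i$'' clause fails and the dynamics coincide, whereas if $S_\Delta\neq S_{\Delta'}$ the clause holds after possibly exchanging $\Delta,\Delta'$, so the dynamics differ. Thus $|\dynamics{f_{140,n}}|$ is the number of $S\subseteq\intz{n}$ realizable as some $S_\Delta$, and I claim these are precisely the $2^n-1$ subsets with $S\neq\intz{n}$. The full set $\intz{n}$ is excluded: $\lab_\Delta((i+1,i))=\labminus$ for every $i$ means each $i+1$ is updated strictly before $i$, a cyclic chain of strict precedences, forbidden by Theorem~\ref{theorem:lab_valid}. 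Conversely, given $S\subsetneq\intz{n}$ I realize it explicitly by an update time function $t$: fix $j^\star\notin S$, set $t(j^\star+1)=0$, and sweep $j^\star+1,j^\star+2,\dots,j^\star$, putting $t(i+1)=t(i)-1$ across each edge with $i\in S$ and $t(i+1)=t(i)$ across each edge with $i\notin S$; the total drop is $-|S|\leq 0$, so closing the cycle at the edge from $j^\star$ to $j^\star+1$, which is a non-$S$ edge, is consistent, and the level sets of $t$ ordered by value give an ordered partition whose labeling satisfies $\lab((i+1,i))=\labminus\iff i\in S$. Rules $44$ and $28$ are identical via Lemma~\ref{noteq44}, with the mirrored signature $\set{i\mid\lab_\Delta((i-1,i))=\labminus}$ on the opposite arcs; again the full set alone is excluded, giving $2^n-1$.

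Rule $32$ uses a different signature. The lemma characterising $D_{f^{(\Delta)}_{32,n}}\neq D_{f^{(\Delta')}_{32,n}}$ (stated after Corollary~\ref{corollaryminus}) shows that the dynamics depends only on $P_\Delta=\set{i\in\intz{n}\mid\lab_\Delta((i+1,i))=\lab_\Delta((i-1,i))=\labplus}$. In terms of the update time function $t$ of $\Delta$, a cell $i$ lies in $P_\Delta$ exactly when $t(i)\leq t(i-1)$ and $t(i)\leq t(i+1)$, i.e.\ when $i$ is a weak local minimum of $t$ around the cycle. Hence $|\dynamics{f_{32,n}}|$ counts the subsets of $\intz{n}$ that arise as the set of weak local minima of a cyclic integer sequence, and I claim these are exactly the $2^n-1$ \emph{nonempty} subsets. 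Nonemptiness is forced because any cyclic sequence attains its global minimum, which is a weak local minimum. Conversely any nonempty $P$ is realized by taking $t(i)=0$ for $i\in P$ and making $t$ strictly increasing along each maximal run of cells outside $P$: the cells of $P$ then satisfy $t\leq$ both neighbours and are weak local minima, while in a strictly increasing run every cell exceeds its left neighbour and so is not a local minimum. This gives $|\dynamics{f_{32,n}}|=2^n-1$.

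Combining the four cases yields $|\dynamics{f_{\alpha,n}}|=2^n-1$ and hence $\sens{f_{\alpha,n}}=\frac{2^n-1}{3^n-2^{n+1}+2}$. The delicate point throughout is the realizability half of each count: one must produce genuine ordered partitions attaining every admissible signature and verify that the induced labeling matches it exactly, and dually show that the one excluded signature is either forbidden by a cycle (rules $140,44,28$) or empty (rule $32$). Validity of the constructed schedules is free, since every ordered partition yields a valid update digraph, so the only real work is the bookkeeping of the time assignments and the two boundary cases $S=\intz{n}$ and $P=\emptyset$.
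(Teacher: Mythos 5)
Your proof is correct and follows essentially the same route as the paper's: both use the characterization lemmas (the rule-$32$ lemma following Corollary~\ref{corollaryminus}, Lemma~\ref{noteq44} for rules $44$ and $28$, and Lemma~\ref{noteq140} for rule $140$) to identify dynamics with pattern sets on the arcs, and then count $2^n-1$ admissible sets. You are in fact more careful than the paper, whose one-line proof merely asserts that the pattern ``may be present in $k$ cells out of $n$ with $1\leq k\leq n$''; your explicit time-function constructions realizing every admissible signature, together with the exclusion of the single forbidden one (the all-$\labminus$ cycle for rules $140$, $44$, $28$, and the empty local-minima set for rule $32$), supply exactly the realizability argument that the paper leaves implicit.
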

\begin{proof}
	Given a configuration of length $n>3$, the patterns in Lemma~\ref{pattern32} (ECA rule $32$) (resp., Lemma~\ref{noteq44} for ECA rule $44$ and Lemma~\ref{noteq140} for ECA rule $140$) may be present in $k$ cells out of $n$ with $ 1\leq k \leq n$ (it must be present in at least one cell because otherwise we would have a $\labminus$ cycle).
	Therefore, there are $\sum_{k=1}^n \binom{n}{i}=2^n -1$ different dynamics.  
\end{proof}

\subsection{Class III: Medium sensitivity rules}
\label{s:8}

This subsection is concerned uniquely with ECA Rule $8$ which is based on the following
Boolean function $r_8(x_1,x_2,x_3)=\neg x_1 \wedge x_2 \wedge x_3$.
As we will see, finding the expression of sensitivity function for this rule is somewhat peculiar
and require to develop specific techniques.
The sensitivity function obtained tends to $\frac{1+\phi}{3}$, where $\phi$ is the golden ratio.

\begin{remark}
  \label{rem:8-0}
  For any $x_1,x_3 \in \bool$, it holds $r_8(x_1,\0,x_3)=\0$. Hence, for any
  update schedule a cell that is in state \0 will remain in state \0 forever.
\end{remark}

We will first see in Lemma~\ref{lemma:counterclockwise} that as soon as two update
schedules differ on the labeling of an arc $(i,i-1)$, then the two dynamics are
different.
Then, given two update schedules $\Delta,\Delta'$ such that
$\lab_{\Delta}((i,i-1))=\lab_{\Delta'}((i,i-1))$ for all $i \in \intz{n}$,
Lemmas~\ref{lemma:8eq} and~\ref{lemma:8neq} will respectively
give sufficient and necessary conditions for the equality of the two dynamics.

\begin{lemma}
  \label{lemma:counterclockwise}
  Consider two update schedules $\Delta, \Delta'\in\Pn$ for $n\geq 3$.
  If there exists $i \in \intz{n}$ such that $\lab_{\Delta}((i,i-1)) \neq \lab_{\Delta'}((i,i-1))$, then
  $D_{f^{(\Delta)}_{8,n}} \neq D_{f^{(\Delta')}_{8,n}}$.
\end{lemma}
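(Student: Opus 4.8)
The plan is to exhibit, for a single arc $(i,i-1)$ on which the two schedules disagree, one configuration $x$ whose images under $\Delta$ and $\Delta'$ differ at cell $i-1$; since each transition digraph is a functional graph, a single such $x$ already forces $D_{f^{(\Delta)}_{8,n}} \neq D_{f^{(\Delta')}_{8,n}}$. Without loss of generality I take $\lab_{\Delta}((i,i-1)) = \labplus$ and $\lab_{\Delta'}((i,i-1)) = \labminus$ (otherwise swap the roles of $\Delta$ and $\Delta'$). The point to keep in mind is that the arc $(i,i-1)$ is exactly the dependence of cell $i-1$ on its right neighbour $i$, which enters $r_8(x_{i-2},x_{i-1},x_i) = \neg x_{i-2} \wedge x_{i-1} \wedge x_i$ positively, as the last argument.

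I would then fix the configuration $x$ with $x_{i-2} = \0$, $x_{i-1} = \1$, $x_i = \1$, and set the remaining cells arbitrarily (say to $\0$); for $n \geq 3$ the indices $i-2,i-1,i$ are pairwise distinct. The computation at cell $i-1$ splits into two local cases. Under $\Delta$, the label $\labplus$ means cell $i$ is updated no earlier than cell $i-1$, so cell $i-1$ reads the old value $x_i = \1$ on its right; moreover its left neighbour $i-2$ is read as $\0$ whatever its update order, because $x_{i-2} = \0$ stays $\0$ (Remark~\ref{rem:8-0}). Hence $f^{(\Delta)}_{8,n}(x)_{i-1} = r_8(\0,\1,\1) = \1$. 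Under $\Delta'$, the label $\labminus$ forces cell $i$ to be updated strictly before cell $i-1$; consequently, when cell $i$ is updated it still reads the old value $x_{i-1} = \1$ of its left neighbour, and since $r_8(\1,\cdot,\cdot) = \0$ its new state is $\0$. Cell $i-1$ therefore reads $\0$ on its right and $\0$ on its left (frozen as before), giving $f^{(\Delta')}_{8,n}(x)_{i-1} = r_8(\0,\1,\0) = \0$. The two images disagree at $i-1$, which is the desired conclusion.

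The crux is the second case: one must argue that the new value of cell $i$ equals $\0$ using only the ordering information forced by the single disagreeing arc, and independently of how $\Delta$ and $\Delta'$ label all the other arcs (no equality of labels elsewhere is available here). The key is that the $\neg x_{i-2}$ left-neighbour term of $r_8$ lets a cell be switched off by a left neighbour in state $\1$: the ordering ``$i$ before $i-1$'' forced by $\labminus$ guarantees cell $i$ sees $x_{i-1} = \1$, and this alone pins its new value to $\0$ regardless of cell $i$'s right neighbour and of the rest of the schedule. Symmetrically, fixing $x_{i-2} = \0$ and invoking Remark~\ref{rem:8-0} neutralises the otherwise uncontrolled left-hand chain of influences of cell $i-1$, so the whole argument stays strictly local to $\{i-2,i-1,i\}$ and never needs any global hypothesis on the labelings.
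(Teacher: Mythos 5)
Your proposal is correct and is essentially the paper's own proof: the same WLOG labeling, the same witness configuration $(x_{i-2},x_{i-1},x_i)=(\0,\1,\1)$, the same use of Remark~\ref{rem:8-0} to freeze cell $i-2$, and the same two-case computation yielding $f^{(\Delta)}_{8,n}(x)_{i-1}=\1 \neq \0 = f^{(\Delta')}_{8,n}(x)_{i-1}$. Your additional remarks (pairwise distinctness of $i-2,i-1,i$ for $n\geq 3$, and that cell $i$'s new value is pinned to $\0$ independently of its right neighbour and of all other labels) only make explicit what the paper leaves implicit.
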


\begin{proof}
  Choose $n\geq 3$ and fix some $i\in\intz{n}$.
  WLOG, assume that $\lab_\Delta((i,i-1))=\labplus$ and
  $\lab_{\Delta'}((i,i-1))=\labminus$ and take $x\in\{\0,\1\}^n$ such that
  ${(x_{i-2},x_{i-1},x_i)=(\0,\1,\1)}$.
  Cell $i-2$ will not change its state, hence when it is time for cell $i-1$ to
  be updated it will have a $\0$ at its left (cell $i-2$) in both cases.
  For $\Delta$, when cell $i-1$ is to be updated, its neighborhood will
  be $(\0,\1,\1)$, and its state will become $\1$ after the iteration.
  As for $\Delta'$,
  when cell $i$ is to be updated, cell $i-1$ is still in state $\1$,
  therefore its state will become $\0$ and when its time for cell $i-1$ to be
  updated, it will have a $\0$ at its right (cell $i$) and its state will become
  $\0$ after the iteration.
  We conclude that $f_{8,n}^{(\Delta)}(x)_{i-1} \neq f_{8,n}^{(\Delta')}(x)_{i-1}$
  and the result follows.
\end{proof}

Now consider two update schedules $\Delta,\Delta'$ whose labelings are equal on
all \emph{counter-clockwise} arcs (\ie of the form $(i,i-1)$).
Lemma~\ref{lemma:8eq} states that, if $\Delta$ and $\Delta'$ differ
only on one arc $(i-1,i)$ such that
$\lab_\Delta((i+1,i))=\lab_{\Delta'}((i+1,i))=\labminus$, then the two dynamics are
identical. By transitivity, if there are more differences but only on arcs of
this form, then the dynamics are also identical.

\begin{lemma}
  \label{lemma:8eq}
  Suppose $\Delta$ and $\Delta'$ are two update schedules over a configuration
  of length $n\geq 3$ and there is $i \in \intz{n}$ such that
  \begin{itemize}
    \item $\lab_\Delta((i+1,i)) = \lab_{\Delta'}((i+1,i))=\labminus$;
    \item $\lab_\Delta((i-1,i)) \neq \lab_{\Delta'}((i-1,i))$;
    \item $\lab_\Delta((j_1,j_2)) = \lab_{\Delta'}((j_1,j_2))$, for all $(j_1,j_2)\neq (i-1,i)$.
  \end{itemize}
  Then $D_{f^{(\Delta)}_{8,n}} = D_{f^{(\Delta')}_{8,n}}$.
\end{lemma}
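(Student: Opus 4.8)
The plan is to show that the single label change on the clockwise arc $(i-1,i)$ leaves the image of every cell unchanged, so that $f^{(\Delta)}_{8,n}(x)=f^{(\Delta')}_{8,n}(x)$ for every $x\in\zu^n$, whence the two transition digraphs coincide by definition. Without loss of generality I would take $\lab_\Delta((i-1,i))=\labplus$ and $\lab_{\Delta'}((i-1,i))=\labminus$. The first structural fact to record is that, since $\Delta$ and $\Delta'$ share the label of every other arc and since $\lab_\Delta((i+1,i))=\lab_{\Delta'}((i+1,i))=\labminus$, Theorem~\ref{theorem:lab_valid} forces $\lab_\Delta((i,i+1))=\lab_{\Delta'}((i,i+1))=\labplus$: were this arc labeled $\labminus$ as well, reversing the two $\labminus$ arcs between $i$ and $i+1$ would produce a forbidden $2$-cycle. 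This $\labplus$ label is exactly what confines the effect of the perturbation.

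Next I would argue that the modified arc alters only $\dleft{\Delta}(i)$ among all the chain quantities. No right-chain $\dright{\Delta}(m)$ can change, since these read only counter-clockwise arcs $(m+j,m+j-1)$, whereas $(i-1,i)$ is clockwise. For a left-chain, note that scanning leftward from a cell $m\neq i$ the arc $(i,i+1)$ is encountered exactly one step before $(i-1,i)$; as $(i,i+1)$ is labeled $\labplus$, the defining $\labminus$-run of $\dleft{\Delta}(m)$ already breaks there, so it never reads $(i-1,i)$ (this remains true with wrap-around, because all positions up to the break lie within a single turn of the cycle). Hence $\dleft{\Delta}(m)=\dleft{\Delta'}(m)$ and $\dright{\Delta}(m)=\dright{\Delta'}(m)$ for every $m\neq i$, and Lemma~\ref{lemma:di} gives $f^{(\Delta)}_{8,n}(x)_m=f^{(\Delta')}_{8,n}(x)_m$ for all $x$.

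It remains to treat cell $i$, where the chains genuinely differ; this is the only place where the algebra of rule $8$ enters, and the key observation is that the image there is \emph{constantly} $\0$ for both schedules. Fix $\Xi\in\set{\Delta,\Delta'}$. Because $\lab_\Xi((i+1,i))=\labminus$, cell $i$ reads the updated value of $i+1$, so $f^{(\Xi)}_{8,n}(x)_i=r_8(a,x_i,f^{(\Xi)}_{8,n}(x)_{i+1})$ for some $a\in\zu$. Because $\lab_\Xi((i,i+1))=\labplus$, cell $i+1$ still sees the old value $x_i$ of its left neighbour when it updates, so $f^{(\Xi)}_{8,n}(x)_{i+1}=r_8(x_i,x_{i+1},\cdot)$, which is $\0$ whenever $x_i=\1$ since $r_8$ carries $\neg(\text{first argument})$ as a factor. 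Therefore $f^{(\Xi)}_{8,n}(x)_i=\0$ in all cases: it is $\0$ for $x_i=\0$ because $x_i$ is the middle factor of $r_8$, and $\0$ for $x_i=\1$ because its third argument $f^{(\Xi)}_{8,n}(x)_{i+1}$ is then $\0$. As this uses only the two labels that $\Delta$ and $\Delta'$ share, both images at $i$ equal $\0$ and hence agree, which together with the previous paragraph yields $D_{f^{(\Delta)}_{8,n}}=D_{f^{(\Delta')}_{8,n}}$. I expect the main obstacle to be the chain bookkeeping of the second paragraph, namely rigorously ruling out that the cyclic wrap-around lets the perturbed arc re-enter the left-chain of some cell other than $i$, rather than the short rule-$8$ computation that settles cell $i$.
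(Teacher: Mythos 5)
Your proposal is correct and follows essentially the same route as the paper's proof: the same WLOG normalization, the same use of Theorem~\ref{theorem:lab_valid} to force $\lab((i,i+1))=\labplus$ in both schedules, the same reduction via identical chains of influence and Lemma~\ref{lemma:di} for all cells $j\neq i$, and the same rule-$8$ computation showing the image at cell $i$ is $\0$ under both schedules (splitting on $x_i=\0$ versus $x_i=\1$ with cell $i+1$ updating to $\0$ first). The only difference is that you spell out the bookkeeping showing the perturbed arc $(i-1,i)$ cannot enter any chain $\dleft{}(m)$ for $m\neq i$ because the scan hits the $\labplus$-labeled arc $(i,i+1)$ one step earlier, a point the paper asserts without detail.
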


\begin{proof}
  Fix $n\geq 3$ and choose $i\in\intz{n}$
  WLOG suppose that $\lab_\Delta((i-1,i))=\labplus$
  and $\lab_{\Delta'}((i-1,i))=\labminus$.
  By Theorem~\ref{theorem:lab_valid} and the fact that
  $\lab_{\Delta}((i+1,i))=\lab_{\Delta'}((i+1,i))=\labminus$,
  it follows that $\lab_\Delta((i,i+1))=\lab_{\Delta'}((i,i+1))=\labplus$, 
  otherwise a forbidden cycle of length two is created.
  See Figure~\ref{fig:8eq} for an illustration of the setting.

  \begin{figure}
    \centerline{\includegraphics{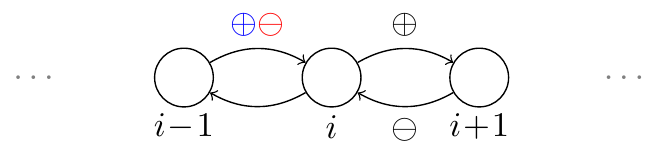}}
    \caption{
      illustration of $\lab_\Delta$ in blue/black and $\lab_{\Delta'}$ in red/black,
      in Lemma~\ref{lemma:8eq}.
      All other labels are equal (the label of arc $(i+1,i)$ is
      $\labminus$ in both update schedules by hypothesis).
    }
    \label{fig:8eq}
  \end{figure}

  The two update schedules $\Delta$ and $\Delta'$ are very similar.   Indeed,
  for any cell $j \in \intz{n} \setminus \{i\}$ the chain of influences are
  identical, \ie $\dleft{\Delta}(j)=\dleft{\Delta'}(j)$ and 
  $\dright{\Delta}(j)=\dright{\Delta'}(j)$. We deduce from
  Lemma~\ref{lemma:di} that for any configuration $x \in \bool^n$ and
  any $j \neq i$ the images under update schedules $\Delta$ and $\Delta'$,
  \ie $f_{8,n}^{(\Delta)}(x)_j=f_{8,n}^{(\Delta')}(x)_j$. As a consequence,
  it only remains to consider cell $i$.
  Let $x \in \bool^n$ be any configuration 
  (if $n \leq 2$ then $i-1=i+1$, but $\lab_{\Delta}((i-1,i))=\labplus$ whereas
  $\lab_{\Delta}((i+1,i))=\labminus$).
  
  By Remark \ref{rem:8-0}, 
  if $x_i=\0$, then $f_{8,n}^{(\Delta)}(x)_i=f_{8,n}^{(\Delta')}(x)_i=\0$.
  Now suppose $x_i=\1$. Since
  $\lab_\Delta((i,i+1))=\lab_{\Delta'}((i,i+1))=\labplus$, by the time cell
  $i+1$ is updated, there is a $\1$ at its left (cell $i$) in both cases, hence
  $f_{8,n}^{(\Delta)}(x)_{i+1}=f_{8,n}^{(\Delta')}(x)_{i+1}=\0$.
  Then, when cell $i$ is updated in both cases, there will be a $\0$ at its right
  (cell $i+1$), therefore
  $f_{8,n}^{(\Delta)}(x)_{i}=f_{8,n}^{(\Delta')}(x)_{i}=\0$.

  We conclude that for all $x \in \bool^n$ and all $j \in \intz{n}$
  we have $f_{8,n}^{(\Delta)}(x)_{j}=f_{8,n}^{(\Delta')}(x)_{j}$,
  \ie $D_{f^{(\Delta)}_{8,n}} = D_{f^{(\Delta')}_{8,n}}$.
\end{proof}

Lemma~\ref{lemma:8neq} states that, as soon as $\Delta$ and $\Delta'$
differ on arcs of the form $(i-1,i)$ such that 
$\lab_\Delta((i+1,i))=\lab_{\Delta'}((i+1,i))=\labplus$, then the two dynamics are
different (remark that in this case we must have
$\lab_\Delta((i,i-1))=\lab_{\Delta'}((i,i-1))=\labplus$ otherwise one of $\Delta$ or
$\Delta'$ has an invalid cycle of length two between the nodes $i-1$ and $i$). This lemma can 
be applied if at least one cell of the configuration contains the pattern.

\begin{lemma}
  \label{lemma:8neq}
  For $n\geq 5$, consider two update schedules $\Delta, \Delta'\in\Pn$.
  If there exists (at least one cell) $i \in \intz{n}$ such that
  \begin{itemize}
    \item $\lab_\Delta((i+1,i))=\lab_{\Delta'}((i+1,i))=\labplus$;
    \item $\lab_\Delta((i-1,i))\neq \lab_{\Delta'}((i-1,i))$;
    \item $\lab_\Delta((j,j-1))=\lab_{\Delta'}((j,j-1))$, for all $j \in \intz{n}$;
  \end{itemize}
  then $D_{f^{(\Delta)}_{8,n}} \neq D_{f^{(\Delta')}_{8,n}}$.
\end{lemma}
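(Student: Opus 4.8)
The plan is to construct a single configuration $x\in\bool^n$ for which $f_{8,n}^{(\Delta)}(x)_i\neq f_{8,n}^{(\Delta')}(x)_i$; since the transition digraphs then disagree on the out-arc of $x$, this already gives $D_{f^{(\Delta)}_{8,n}}\neq D_{f^{(\Delta')}_{8,n}}$, and comparing cell $i$ alone suffices. Without loss of generality assume $\lab_\Delta((i-1,i))=\labplus$ and $\lab_{\Delta'}((i-1,i))=\labminus$ (otherwise swap $\Delta$ and $\Delta'$). The first step is to record the labels that are thereby forced: because $\lab_{\Delta'}((i-1,i))=\labminus$ schedules $i-1$ strictly before $i$, Theorem~\ref{theorem:lab_valid} forbids the reverse arc, so $\lab_{\Delta'}((i,i-1))=\lab_\Delta((i,i-1))=\labplus$. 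Together with the hypothesis $\lab_\Delta((i+1,i))=\lab_{\Delta'}((i+1,i))=\labplus$, this means that in both schedules cell $i$ reads the initial value of $x_{i+1}$, and cell $i-1$ reads the initial value of $x_i$.

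The mechanism I would exploit is the only point where the two schedules diverge at cell $i$: under $\Delta$ cell $i$ reads the old state of $x_{i-1}$, whereas under $\Delta'$ it reads the already-updated state $f_{8,n}^{(\Delta')}(x)_{i-1}$. I therefore fix $x_{i-1}=x_i=x_{i+1}=\1$, so that under $\Delta$ cell $i$ evaluates $r_8(\1,\1,\1)=\0$, and aim to force the updated value of cell $i-1$ to be $\0$ under $\Delta'$, giving $r_8(\0,\1,\1)=\1$ there. Remark~\ref{rem:8-0} shows this is the only possible direction of discrepancy, since a cell that starts at $\0$ can never be read as $\1$; concretely I must make cell $i-1$ flip from $\1$ to $\0$ under $\Delta'$ before $i$ is updated. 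As cell $i-1$ reads the old $x_i=\1$ on its right, it flips to $\0$ precisely when the value it reads at its left is $\1$.

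The main obstacle is that this left value may itself be an updated value: if $\lab_{\Delta'}((i-2,i-1))=\labminus$ then cell $i-1$ depends on the updated state of $i-2$, and iterating this ignites the whole $\labminus$ chain counted by $\dleft{\Delta'}(i)$. A naive all-$\1$ configuration then makes the chain alternate, creating a parity obstruction, and for long chains the window even wraps around the cycle. I would defuse this with the frozen-zero property of Remark~\ref{rem:8-0}: planting a single $\0$ pins to $\0$ every cell that reads it through a $\labminus$ arc, truncating the relevant dependency to a window of five cells. Concretely I take $x_{i-3}=\0$ and $x_{i-2}=x_{i-1}=x_i=x_{i+1}=\1$ (all other cells arbitrary) and verify the flip in two cases. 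If $\lab_{\Delta'}((i-2,i-1))=\labplus$, cell $i-1$ reads the old $x_{i-2}=\1$ and flips directly. If $\lab_{\Delta'}((i-2,i-1))=\labminus$, then whatever $\lab_{\Delta'}((i-3,i-2))$ is, cell $i-2$ reads the frozen $\0$ at its left and the old $x_{i-1}=\1$ at its right, so $r_8(\0,\1,\1)=\1$ keeps it alive, and cell $i-1$ then reads $\1$ and flips. The forced counter-clockwise labels justify all the ``reads the old value'' claims, and the five cells $i-3,\dots,i+1$ are distinct exactly because $n\geq 5$, which is where the hypothesis is used.

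The step I expect to be delicate is precisely this localization: arguing that the planted zero makes $f_{8,n}^{(\Delta')}(x)_i$ depend only on the five-cell window, so that neither the remainder of the $\dleft{\Delta'}(i)$ chain nor the wrap-around can interfere. Once localization is granted, the two cases above are short evaluations of the truth table of $r_8$, and the resulting discrepancy $f_{8,n}^{(\Delta)}(x)_i=\0\neq\1=f_{8,n}^{(\Delta')}(x)_i$ separates the two dynamics.
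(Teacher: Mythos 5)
Your proof is correct and follows essentially the same route as the paper's: the same WLOG normalization, the same witness configuration $(x_{i-3},x_{i-2},x_{i-1},x_i,x_{i+1})=(\0,\1,\1,\1,\1)$, and the same use of Remark~\ref{rem:8-0} to freeze cell $i-3$ and cut off the $\labminus$ chain, with $n\geq 5$ used exactly as in the paper to keep these five cells distinct. Your case split on $\lab_{\Delta'}((i-2,i-1))$ (with the sub-cases on $\lab_{\Delta'}((i-3,i-2))$) is precisely the paper's disjunction $\dleft{\Delta'}(i)=2$, $\dleft{\Delta'}(i)=3$, $\dleft{\Delta'}(i)\geq 4$, merely organized by labels instead of chain length, and your explicit derivation of $\lab_{\Delta'}((i,i-1))=\labplus$ from Theorem~\ref{theorem:lab_valid} makes explicit a step the paper leaves implicit.
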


\begin{proof}
  Choose $n, \Delta$ and $\Delta'$ as in the hypothesis.
  WLOG, assume that $\lab_\Delta((i-1,i))=\labplus$
  and $\lab_{\Delta'}((i-1,i))=\labminus$ for $i\in\intz{n}$.
  See Figure~\ref{fig:8neq} for an illustration of the setting.
  We are going to construct a configuration $x \in \bool^n$ such 
  that $f_{8,n}^{(\Delta)}(x)_i=\0$ whereas $f_{8,n}^{(\Delta')}(x)_i=\1$,
  \ie such that the two dynamics differ in the image of cell $i$.

  \begin{figure}
    \centerline{\includegraphics{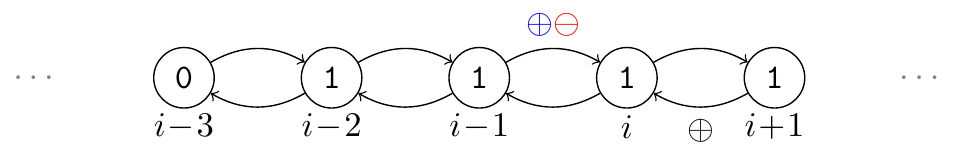}}
    \caption{
      illustration of $\lab_\Delta$ in blue/black, $\lab_{\Delta'}$ in
      red/black, in Lemma~\ref{lemma:8neq}.
      Other labels on arcs of the form $(j-1,j)$ are {\em a priori} unknown
      (they may be equal or different in $\Delta$ and $\Delta'$),
      however labels on arcs of the form $(j,j-1)$ are equal by hypothesis.
      States inside the nodes correspond to configuration $x$
      such that the image of cell $i$ under update schedule $\Delta$ is $\0$,
      whereas under update schedule $\Delta'$ it is $\1$.
    }
    \label{fig:8neq}
  \end{figure}
  
  The construction of $x \in \bool^n$ 
  only requires
  to set the pattern 
  $(x_{i-3},x_{i-2},x_{i-1},x_i,x_{i+1})=(\0,\1,\1,\1,\1)$.
  Regarding $\Delta$, from the $\labplus$ labels of arcs $(i-1,i)$
  and $(i+1,i)$ we have
  $f_{8,n}^{(\Delta)}(x)_i = r_8(x_{i-1},x_i,x_{i+1}) = r_8(\1,\1,\1)=\0$.
  Regarding $\Delta'$, let us deduce by denoting $y$ the image of $x$
  (\ie $y_i = f_{8,n}^{(\Delta')}(x)_i$) that whatever the value of $\dleft{\Delta'}(i)$
  we have $f_{8,n}^{(\Delta')}(x)_i=\1$ (\ie $y_i = \1$).
  \begin{itemize}
    \item If $\dleft{\Delta'}(i) = 2$ then cell $i-1$ is updated and then cell $i$,
      \begin{itemize}
        \item $y_{i-1} = r_8(x_{i-2},x_{i-1},x_{i}) = r_8(\1,\1,\1) = \0$,
        \item $y_i = r_8(y_{i-1},x_i,x_{i+1}) = r_8(\0,\1,\1) = \1$.
      \end{itemize}
    \item if $\dleft{\Delta'}(i) = 3$ then cell $i-2$ is updated then cell $i-1$
      and then cell $i$,
      \begin{itemize}
        \item $y_{i-2} = r_8(x_{i-3},x_{i-2},x_{i-1}) = r_8(\0,\1,\1) = \1$,
        \item $y_{i-1} = r_8(y_{i-2},x_{i-1},x_{i}) = r_8(\1,\1,\1) = \0$,
        \item $y_i = r_8(y_{i-1},x_i,x_{i+1}) = r_8(\0,\1,\1) = \1$.
      \end{itemize}
    \item if $\dleft{\Delta'}(i) \geq 4$ then cell $i-3$ is updated then cell $i-2$
      then cell $i-1$ and then cell $i$,
      \begin{itemize}
        \item $y_{i-3} = \0$ by Remark~\ref{rem:8-0} since $x_{i-3} = \0$,
        \item $y_{i-2} = r_8(y_{i-3},x_{i-2},x_{i-1}) = r_8(\0,\1,\1) = \1$,
        \item $y_{i-1} = r_8(y_{i-2},x_{i-1},x_{i}) = r_8(\1,\1,\1) = \0$,
        \item $y_i = r_8(y_{i-1},x_i,x_{i+1}) = r_8(\0,\1,\1) = \1$.
      \end{itemize}
  \end{itemize}
  Remark that $n \geq 5$ is required by the consideration of cells $i-3$ to $i+1$
  in the third case.
\end{proof}

Lemmas~\ref{lemma:counterclockwise}, \ref{lemma:8eq}
and~\ref{lemma:8neq} characterize completely for rule $8$ the cases
when two update schedules $\Delta,\Delta'$ lead to
\begin{itemize}
  \item the same dynamics,
    \ie $\dynamics{f^{(\Delta)}_{8,n}} = \dynamics{f^{(\Delta')}_{8,n}}$, or
  \item different dynamics,
    \ie $\dynamics{f^{(\Delta)}_{8,n}} \neq \dynamics{f^{(\Delta')}_{8,n}}$.
\end{itemize}
Indeed, Lemma~\ref{lemma:counterclockwise} shows that counting
$|\dynamics{f_{8,n}}|$ can be partitioned according to the word given
by $\lab_\Delta((i,i-1))$ for $i \in \intz{n}$, and then for each labeling of
the $n$ arcs of the form $(i,i-1)$, Lemmas~\ref{lemma:8eq}
and~\ref{lemma:8neq} provide a way of counting the number of dynamics.
We first give an example of application, and then the general counting result
establishing a relation to the bisection of Lucas numbers.

\begin{example}
  \label{example:8}
  Consider the set of non-equivalent update schedules
  $\Delta \in \Px{10}$ such that
  $$\big(\lab_\Delta((i,i-1))\big)_{i \in \intz{10}}=(\labplus,\labminus,\labplus,\labplus,\labplus,\labminus,\labplus,\labminus,\labminus,\labplus).$$
  We have the following disjunction for $i \in \intz{n}$
  (see Figure~\ref{fig:8example}):
  \begin{itemize}
    \item[\textbf{A-}] if $\lab((i,i-1))=\labminus$ then
      $\lab((i-1,i))=\labplus$ according to Theorem~\ref{theorem:lab_valid},
      else
    \begin{itemize}
      \item[\textbf{B-}] if $\lab((i+1,i))=\labminus$ then $\lab((i-1,i))$ does
        not change the dynamics according to Lemma~\ref{lemma:8eq},
      \item[\textbf{C-}] if $\lab((i+1,i))=\labplus$ then the two possibilities
        for $\lab((i-1,i))$ each lead to different dynamics according to
        Lemma~\ref{lemma:8neq}.
    \end{itemize}
  \end{itemize}
  Therefore, on overall, there are $2^{3}=8$ different dynamics for such update schedules.
\end{example}

\begin{figure}
  \centerline{\includegraphics{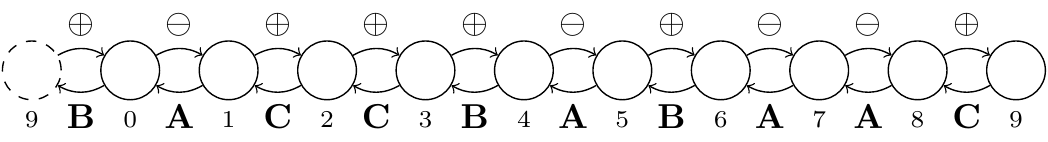}}
  \caption{
     counting the number of different dynamics for ECA rule $8$
     when the labeling of arcs $(i-1,i)$ for $i \in\intz{10}$ is
     $(\labplus,\labminus,\labplus,\labplus,\labplus,\labminus,\labplus,\labminus,\labminus,\labplus)$.
%
    Labels \textbf{A} is enforced to be $\labplus$ by
    Theorem~\ref{theorem:lab_valid}, labels \textbf{B} have no influence
    according to Lemma~\ref{lemma:8eq}, and any combination of labels
    \textbf{C} gives a different dynamics according to
    Lemma~\ref{lemma:8neq}.
  }
  \label{fig:8example}
\end{figure}

\begin{theorem}
  \label{theorem:8}
  $\sens{f_{8,n}}=\frac{\phi^{2n}+\phi^{-2n}-2^n}{3^n-2^{n+1}+2}$ for any $n \geq 5$,
  with $\phi=\frac{1+\sqrt{5}}{2}$ the golden ratio.
\end{theorem}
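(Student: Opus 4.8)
The plan is to compute $|\dynamics{f_{8,n}}|$ and divide by $|\updatesECA{n}|=3^n-2^{n+1}+2$. Lemmas~\ref{lemma:counterclockwise}, \ref{lemma:8eq} and~\ref{lemma:8neq} reduce this to a purely combinatorial count on the \emph{counter-clockwise} labeling. First I would partition the valid update digraphs according to the word $w=\big(\lab_\Delta((i,i-1))\big)_{i\in\intz{n}}\in\{\labplus,\labminus\}^n$ read on the arcs $(i,i-1)$: by Lemma~\ref{lemma:counterclockwise}, two valid labelings with distinct words $w$ always give distinct dynamics, so the dynamics split cleanly over the values of $w$, and $|\dynamics{f_{8,n}}|=\sum_w N(w)$, where $N(w)$ denotes the number of distinct dynamics among valid labelings with counter-clockwise word $w$.

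For a fixed $w$ I would classify each cell $i$ exactly as in Example~\ref{example:8}: type~A when $w_i=\labminus$ (the arc $(i-1,i)$ is forced to $\labplus$ by Theorem~\ref{theorem:lab_valid}); type~B when $w_i=\labplus$ and $w_{i+1}=\labminus$ (the label of $(i-1,i)$ is free but irrelevant to the dynamics by Lemma~\ref{lemma:8eq}); type~C when $w_i=w_{i+1}=\labplus$ (the two labels of $(i-1,i)$ give distinct dynamics by Lemma~\ref{lemma:8neq}). Writing $p(w)$ for the number of type-C cells, the claim is $N(w)=2^{p(w)}$ provided every assignment of the type-C arcs is realized by at least one valid labeling. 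This realizability is the delicate point. Since the interaction digraph is a ring, the only candidate forbidden cycles are the adjacent $2$-cycles and the two full cycles. The type-A forcing removes the forbidden $2$-cycles; the presence of at least one $\labminus$ in $w$ breaks the full counter-clockwise cycle at every type-A cell; and the full clockwise cycle is avoided as soon as one type-B or type-C arc carries a $\labminus$. Because a word containing both symbols contains, cyclically, a $\labplus$ immediately followed by a $\labminus$, i.e. a type-B cell, any type-C assignment can be completed to a valid labeling by setting a type-B arc to $\labminus$ when needed; hence $N(w)=2^{p(w)}$ for every $w\notin\{\labplus^n,\labminus^n\}$. The two degenerate words are special: $w=\labminus^n$ admits no valid labeling, so $N=0$ (against $2^{p}=1$), while $w=\labplus^n$ forces all clockwise arcs to $\labplus$ and yields a single dynamics, so $N=1$ (against $2^{p}=2^n$).

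Next I would evaluate $\sum_{w\in\{\labplus,\labminus\}^n}2^{p(w)}$, where $p(w)$ counts cyclically adjacent $\labplus\labplus$ pairs. This is a transfer-matrix computation: with $T=\left(\begin{smallmatrix}2&1\\1&1\end{smallmatrix}\right)$ (the entry $T_{\labplus\labplus}=2$ accounting for the factor contributed by each $\labplus\labplus$ adjacency), the cyclic sum equals $\mathrm{tr}(T^n)=\lambda_+^n+\lambda_-^n$ with $\lambda_\pm=\frac{3\pm\sqrt5}{2}=\phi^{\pm2}$. Thus $\sum_w 2^{p(w)}=\phi^{2n}+\phi^{-2n}$, which is exactly the bisection $L_{2n}$ of the Lucas numbers; equivalently the sequence obeys $a_n=3a_{n-1}-a_{n-2}$, the characteristic polynomial of $T$ being $\lambda^2-3\lambda+1$.

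Finally I would subtract the two corrections coming from the degenerate words: $2^n-1$ for $\labplus^n$ (overcounted as $2^n$ instead of $1$) and $1$ for $\labminus^n$ (overcounted as $1$ instead of $0$), giving
\[
  |\dynamics{f_{8,n}}|=\big(\phi^{2n}+\phi^{-2n}\big)-(2^n-1)-1=\phi^{2n}+\phi^{-2n}-2^n,
\]
and divide by $3^n-2^{n+1}+2$. The main obstacle is the realizability argument of the second paragraph: one must check that the freedom of the type-B arcs always suffices to repair validity for every type-C assignment, and that the only genuinely exceptional words are $\labplus^n$ and $\labminus^n$. The hypothesis $n\geq5$ enters precisely here and through Lemma~\ref{lemma:8neq}.
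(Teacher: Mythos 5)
Your proposal is correct, and its combinatorial core coincides with the paper's proof: the same partition of the dynamics by the counter-clockwise word (Lemma~\ref{lemma:counterclockwise}), the same A/B/C classification of cells as in Example~\ref{example:8}, the same per-word count (your $p(w)$ equals the paper's exponent, since $\countplus{w}-\countplusminus{w}$ counts exactly the cyclic $\labplus\labplus$ adjacencies), and the same $-2^n$ correction for the two degenerate words. You are in fact more explicit than the paper on the realizability point --- that every type-C assignment extends to a valid labeling because a type-B arc can always be set to $\labminus$ to break the clockwise winding cycle, the type-A cells having already broken the counter-clockwise one --- where the paper simply points to Example~\ref{example:8}. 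The genuine difference lies in the evaluation of $S(n)=\sum_{u}2^{\countplus{u}-\countplusminus{u}}$. The paper splits words by their first and last letters, derives four coupled recurrences for the quantities $\sumplusminus{\sigma}{\sigma'}(n)$, deduces $S(n)=3S(n-1)-S(n-2)$, and identifies $S(n)$ with the bisection of the Lucas numbers (sequence \texttt{A005248}), invoking Binet's formula for the closed form. Your transfer-matrix computation $S(n)=\mathrm{tr}(T^n)$, $T=\left(\begin{smallmatrix}2&1\\1&1\end{smallmatrix}\right)$, with eigenvalues $\frac{3\pm\sqrt{5}}{2}=\phi^{\pm 2}$, yields $\phi^{2n}+\phi^{-2n}$ in one step. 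This buys you two things: the trace accounts for the cyclic wrap-around adjacency automatically, which the paper's recurrences must thread through the first/last-letter bookkeeping, and no initial conditions are needed --- a real advantage here, since the paper's stated seeds $S(1)=2$, $S(2)=3$ are in fact off (the periodic count gives $S(1)=3$ and $S(2)=7=\mathrm{tr}(T^2)$), a harmless slip that your argument cannot make. What the paper's route buys in exchange is the explicit second-order recurrence and the named connection to Lucas sequences advertised in the abstract.
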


\begin{proof}
  According to Lemma~\ref{lemma:counterclockwise},
  the set $\dynamics{f_{8,n}}$ can be partitioned as
  $$\dynamicsu{f_{8,n}}{u} = \big\{ D_{f^{(\Delta)}} \mid
      \Delta \in \Pn
      \text{ and }
      \big( \lab_\Delta((i,i-1)) \big)_{i \in \intz{n}} = u
    \big\} \text{ for } u \in \{\labplus,\labminus\}^n.$$
  That is, in $\dynamicsu{f_{8,n}}{u}$ the labels of arcs of the form
  $(i,i-1)$ for $i \in \intz{n}$ are fixed according to some word
  $u \in \{\labplus,\labminus\}^n$.
  Therefore we have
  $$
    |\dynamics{f_{8,n}}| = \sum_{u \in \{\labplus,\labminus\}^n} |\dynamicsu{f_{8,n}}{u}|.
  $$
  Then, given some word $u \in \{\labplus,\labminus\}^n \setminus
  \{\labplus^n,\labminus^n\}$, according to Theorem~\ref{theorem:lab_valid} and
  Lemmas~\ref{lemma:8eq} and~\ref{lemma:8neq} we have (see
  Example~\ref{example:8} for details)
  $$
    |\dynamicsu{f_{8,n}}{u}| = 2^{\countplus{u}-\countplusminus{u}}
  $$
  where $\countplus{u}$ is the number of $\labplus$ in word $u$,
  and $\countplusminus{u}$ is the number of $\labplus\labminus$ factors
  in word $u$ considered periodically,
  \ie $\countplusminus{u}=|\{ i \in \intz{n} \mid u_i=\labplus \text{ and } u_{i+1}=\labminus \}|$.
  
  According to Theorem~\ref{theorem:lab_valid},
  the cases $u \in \{\labplus^n,\labminus^n\}$ are particular. Indeed, 
  for any $n$:
  \begin{itemize}
    \item all labels $\labminus$ (\ie, $u=\labminus^n$) is an invalid cycle
      hence $|\dynamicsu{f_{8,n}}{\labminus^n}|=0$,
    \item all labels $\labplus$ (\ie, $u=\labplus^n$) forces the labels of
      all arcs of the form $(i-1,i)$ for $i \in \intz{n}$ to be also labeled
      $\labplus$ otherwise a forbidden cycle is created, hence
      $|\dynamicsu{f_{8,n}}{\labminus^n}|=1$.
  \end{itemize}
  Given that $2^{\countplus{\labminus^n}-\countplusminus{\labminus^n}}=2^0=1$
  (instead of $0$) and $2^{\countplus{\labplus^n}-\countplusminus{\labplus^n}}=2^n$
  (instead of $1$), we deduce that
  \begin{equation}
    \label{eq:8}
    |\dynamics{f_{8,n}}| = \left( \sum_{u \in \{\labplus,\labminus\}^n} 2^{\countplus{u}-\countplusminus{u}} \right) - 2^n.
  \end{equation}
  
  In order to study the summation term in Equation~\ref{eq:8}, let us 
  denote it $S(n)$. 
  We will consider recurrence relations according to the following
  partition of the set $\set{\labplus,\labminus}^n$:
  for $\sigma,\sigma' \in \set{\labplus,\labminus}$ let
  $\partplusminus{\sigma}{\sigma'}(n)$ be the set of words beginning with label
  $\sigma$ and ending with label $\sigma'$, \ie
  $\partplusminus{\sigma}{\sigma'}(n)=\{ u=u_0\ldots u_{n-1} \in
  \set{\labplus,\labminus}^n \mid u_0=\sigma \text{ and } u_{n-1}=\sigma' \}$.
  Denoting
  $$
    \sumplusminus{\sigma}{\sigma'}(n)=
    \sum_{u \in \partplusminus{\sigma}{\sigma'}}
    2^{\countplus{u}-\countplusminus{u}}
  $$
  the recurrence relations are, for all $n \geq 1$
  (although Equation~\ref{eq:8} holds only for $n \geq 5$,
  the value starting from which Lemmas~\ref{lemma:counterclockwise},
  \ref{lemma:8eq} and~\ref{lemma:8neq} hold),
  \begin{itemize}
    \item $\sumplusminus{\labplus}{\labplus}(n+1) =
        2 \sumplusminus{\labplus}{\labplus}(n)
      + 2 \sumplusminus{\labplus}{\labminus}(n)$,
    \item $\sumplusminus{\labplus}{\labminus}(n+1) =
      \frac{1}{2} \sumplusminus{\labplus}{\labplus}(n)
      + \sumplusminus{\labplus}{\labminus}(n)$,
    \item $\sumplusminus{\labminus}{\labplus}(n+1) =
        2 \sumplusminus{\labminus}{\labplus}(n)
      + \sumplusminus{\labminus}{\labminus}(n)$,
    \item $\sumplusminus{\labminus}{\labminus}(n+1) =
        \sumplusminus{\labminus}{\labplus}(n)
      + \sumplusminus{\labminus}{\labminus}(n)$,
  \end{itemize}
  and we have $S(n)=\sumplusminus{\labplus}{\labplus}(n)
  +\sumplusminus{\labplus}{\labminus}(n)
  +\sumplusminus{\labminus}{\labplus}(n)
  +\sumplusminus{\labminus}{\labminus}(n)$.
  Indeed, for example regarding $\sumplusminus{\labplus}{\labplus}(n+1)$,
  consider a word $u=u_0\ldots u_{n-1} \in \set{\labplus,\labminus}^n$
  and the concatenation of a label $\sigma \in \set{\labplus,\labminus}$
  at the end of $u$, then $u'=u_0\ldots u_{n-1}\sigma) \in 
  \partplusminus{\labplus}{\labplus}(n+1)$ if and only if $\sigma=\labplus$
  and $u_0=\labplus$, \ie $\sigma=\labplus$ and
  ($u \in \partplusminus{\labplus}{\labplus}(n)$ or 
  $u \in \partplusminus{\labplus}{\labminus}(n)$).
  It follows that,
  \begin{itemize}
    \item if $u \in \partplusminus{\labplus}{\labplus}(n)$ then
      $\countplus{u'}=\countplus{u}+1$ and
      $\countplusminus{u'}=\countplusminus{u}$,
    \item if $u \in \partplusminus{\labplus}{\labminus}(n)$ then
      $\countplus{u'}=\countplus{u}+1$ and
      $\countplusminus{u'}=\countplusminus{u}$,
  \end{itemize}
  which gives the first recurrence.
  A similar reasoning lead to the three other recurrence relations.
  Also remark that by symmetry we always have
  $\sumplusminus{\labplus}{\labminus}(n)=\sumplusminus{\labminus}{\labplus}(n)$,
  though this fact will not be used in the coming proof.

  In order to solve the recurrence, we establish a relation to known formulas
  by remarking that $S(n)=3S(n-1)-S(n-2)$, which corresponds to the bisection
  of Fibonacci-like integer sequences ({\em aka} Lucas sequences):
  \begin{align*}
    S(n) &= \sumplusminus{\labplus}{\labplus}(n)
          + \sumplusminus{\labplus}{\labminus}(n)
          + \sumplusminus{\labminus}{\labplus}(n)
          + \sumplusminus{\labminus}{\labminus}(n)\\
         &= 2 \sumplusminus{\labplus}{\labplus}(n-1)
          + 2 \sumplusminus{\labplus}{\labminus}(n-1)
          + \frac{1}{2} \sumplusminus{\labplus}{\labplus}(n-1)
          + \sumplusminus{\labplus}{\labminus}(n-1)\\
          &\quad
          + 2 \sumplusminus{\labminus}{\labplus}(n-1)
          + \sumplusminus{\labminus}{\labminus}(n-1)
          + \sumplusminus{\labminus}{\labplus}(n-1)
          + \sumplusminus{\labminus}{\labminus}(n-1)\\[.5em]
         &= 3 \sumplusminus{\labplus}{\labplus}(n-1)
          + 3 \sumplusminus{\labplus}{\labminus}(n-1)
          + 3 \sumplusminus{\labminus}{\labplus}(n-1)
          + 3 \sumplusminus{\labminus}{\labminus}(n-1)\\
          &\quad
          - \frac{1}{2} \sumplusminus{\labplus}{\labplus}(n-1)
          - \sumplusminus{\labminus}{\labminus}(n-1)\\
         &= 3 S(n-1)
          - \sumplusminus{\labplus}{\labplus}(n-2)
          - \sumplusminus{\labplus}{\labminus}(n-2)
          - \sumplusminus{\labminus}{\labplus}(n-2)
          - \sumplusminus{\labminus}{\labminus}(n-2)\\
         &= 3 S(n-1) - S(n-2)
  \end{align*}
  Finally, since we have $S(1)=2$ and $S(2)=3$ we deduce that $S(n)$ is the
  bisection of Lucas numbers, sequence \texttt{A005248} of
  OEIS~\cite{oeisA005248}.
  The nice closed form involving the golden ratio is a folklore adaptation
  of Binet's formula to Lucas numbers,
  and Equation~\ref{eq:8} gives the result.
\end{proof}

\subsection{Class IV: Almost max-sensitive rules}
\label{s:128_160_162}

This last class contains three ECA rules, namely $128$, $160$ and $162$,
for which the sensitivity function tends to $1$.
The study of sensitivity to synchronism for these rules
is based on the characterization of pairs of update schedule leading to the
same dynamics.
A pair of update schedules $\Delta, \Delta'\in\Pn$ is \emph{special for rule $\alpha$}
if $\Delta\not\equiv\Delta'$
but $D_{f_{\alpha,n}^{(\Delta)}} = D_{f_{\alpha,n}^{(\Delta')}}$.
We will count the special pairs for rules $128$, $160$ and $162$.

Given an update schedule $\Delta\in\Pn$,
define the \emph{left rotation} 
$\sigma(\Delta)$ and the \emph{left/right exchange} $\rho(\Delta)$
,such that, $\forall i\in\intz{n}$ it holds that $\lab_{\sigma(\Delta)}((i,j))=\lab_{\Delta}((i+1,j+1))$ and
$\lab_{\rho(\Delta)}((i,j))=\lab_{\Delta}((j,i))$.
It is clear that if a pair of update schedules $\Delta, \Delta'\in\Pn$ is
special then $\sigma(\Delta), \sigma(\Delta')$
is also special.
Furthermore, when rule $\alpha$ is left/right symmetric
(meaning that $\forall x_1,x_2,x_3 \in \bool$ we have
$r_\alpha(x_1,x_2,x_3)=r_\alpha(x_3,x_2,x_1)$,
which is the case of rules $128$ and $162$, but not $160$)
then $\rho(\Delta),\rho(\Delta')$ is also special.
We say that special pairs in a set $S$ are {\em disjoint} when no update schedule belongs to
more than one pair \ie, if three update schedules $\Delta, \Delta', \Delta''\in S$ are such that both $(\Delta, \Delta')$ and $(\Delta, \Delta'')$
are special pairs then $\Delta'=\Delta''$.
When it is clear from the context,
we will omit to mention the rule relative to which some pairs are special.

\subsubsection{ECA rule 128}

The Boolean function associated with the ECA rule $128$ is 
$r_{128}(x_1,x_2,x_3)=x_1 \wedge x_2 \wedge x_3$. Its simple
definition will allow us to better illustrate the role played by special pairs.

\begin{remark}\label{rem:dset-eq-intzn}
When $\dset{\Delta}(i) = \intz{n}$ for some cell $i$, 
the only possibility to get $f_{128}^{(\Delta)}(x)_i=\1$ is $x=\1^n$. However
for $x=\1^n$ we have $f_{128}^{(\Delta)}(x)_i=\1$ for any $\Delta$. 
\end{remark}

The previous remark combined with an observation in the spirit of 
Lemma~\ref{lemma:dset}, gives the next characterization.
Let us introduce the notation $\dset{\Delta}=\dset{\Delta'}$ for cases in which 
$\dset{\Delta}(i)=\dset{\Delta'}(i)$ holds in every cell $i \in \intz{n}$.

\begin{lemma}
  \label{lemma:128d}
  For any $n\in\N$, choose $\Delta, \Delta'\in\Pn$ such that
  $\Delta \not\equiv \Delta'$. Then,
  $\dset{\Delta}=\dset{\Delta'}$
  if and only if 
  $D_{f^{(\Delta)}_{128,n}}=D_{f^{(\Delta')}_{128,n}}$.
\end{lemma}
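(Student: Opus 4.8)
The plan is to first pin down, for rule $128$, an explicit description of the image of a cell in terms of its dependency set, and then read off both implications from it. Concretely, I would establish as a preliminary that for every $\Delta\in\Pn$, every $x\in\bool^n$ and every $i\in\intz{n}$,
\[
  f^{(\Delta)}_{128,n}(x)_i=\bigwedge_{k\in\dset{\Delta}(i)}x_k.
\]
This follows from the nesting described in Equation~\ref{eq:d}: since $r_{128}(x_1,x_2,x_3)=x_1\wedge x_2\wedge x_3$, the nested applications of $r_{128}$ along the two chains of influence reduce to the conjunction of their leaf inputs, which are exactly the states indexed by the left chain $\set{i-\dleft{\Delta}(i),\dots,i}$ and the right chain $\set{i,\dots,i+\dright{\Delta}(i)}$, that is, by $\dset{\Delta}(i)$. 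The only delicate point is the regime where the two chains overlap or wrap around (the case $\dset{\Delta}(i)=\intz{n}$): some cells then appear several times as leaves, but since $\wedge$ is idempotent the expression still collapses to $\bigwedge_{k\in\dset{\Delta}(i)}x_k$, in agreement with Remark~\ref{rem:dset-eq-intzn}.

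For the implication $\dset{\Delta}=\dset{\Delta'}\Rightarrow D_{f^{(\Delta)}_{128,n}}=D_{f^{(\Delta')}_{128,n}}$, I would argue cell by cell, in the spirit of Lemma~\ref{lemma:dset}. Fix $i$ and suppose $\dset{\Delta}(i)=\dset{\Delta'}(i)$. If this common value is $\neq\intz{n}$, then exactly as in the proof of Lemma~\ref{lemma:dset} the quantities $\dleft{\Delta}(i)$ and $\dright{\Delta}(i)$ can be recovered from the set, so $\dleft{\Delta}(i)=\dleft{\Delta'}(i)$ and $\dright{\Delta}(i)=\dright{\Delta'}(i)$, and Lemma~\ref{lemma:di} yields $f^{(\Delta)}_{128,n}(x)_i=f^{(\Delta')}_{128,n}(x)_i$ for all $x$. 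If instead the common value is $\intz{n}$, then by Remark~\ref{rem:dset-eq-intzn} both images equal $\1$ precisely when $x=\1^n$, hence they agree on every $x$. Collecting all cells gives $D_{f^{(\Delta)}_{128,n}}=D_{f^{(\Delta')}_{128,n}}$; alternatively this is immediate from the preliminary conjunction formula.

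For the converse I would prove the contrapositive: assume $\dset{\Delta}\neq\dset{\Delta'}$ and exhibit a distinguishing configuration. Pick a cell $i$ with $\dset{\Delta}(i)\neq\dset{\Delta'}(i)$; without loss of generality there is an index $j\in\dset{\Delta}(i)\setminus\dset{\Delta'}(i)$ (otherwise swap the roles of $\Delta$ and $\Delta'$, using the symmetry of the conclusion). Since $j\in\intz{n}$ is missing from $\dset{\Delta'}(i)$, we have $\dset{\Delta'}(i)\neq\intz{n}$. Let $x\in\bool^n$ be the configuration with $x_j=\0$ and $x_k=\1$ for all $k\neq j$. The conjunction formula then gives $f^{(\Delta)}_{128,n}(x)_i=\0$ (because $j\in\dset{\Delta}(i)$ and $x_j=\0$), while $f^{(\Delta')}_{128,n}(x)_i=\1$ (because every cell of $\dset{\Delta'}(i)$ is in state $\1$). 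Hence the two dynamics differ at $(x,i)$, so $D_{f^{(\Delta)}_{128,n}}\neq D_{f^{(\Delta')}_{128,n}}$, which is the contrapositive.

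The routine parts are the two evaluations in the last paragraph and the cell-by-cell bookkeeping in the second. The one genuinely careful step is the preliminary conjunction formula in the overlapping/wrap-around regime $\dset{\Delta}(i)=\intz{n}$: this is exactly where equality of the dependency sets no longer determines the chains $\dleft{\Delta}$ and $\dright{\Delta}$ individually, and it is precisely there that the degeneracy of rule $128$ recorded in Remark~\ref{rem:dset-eq-intzn} makes the two dynamics coincide regardless.
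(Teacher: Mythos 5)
Your proof is correct and follows essentially the same route as the paper's: the forward direction is the same case split on whether the common dependency set equals $\intz{n}$ (using the recovery of $\dleft{\Delta}(i),\dright{\Delta}(i)$ from $\dset{\Delta}(i)$ in the non-degenerate case, and the degeneracy of rule $128$ recorded in Remark~\ref{rem:dset-eq-intzn} in the other), and the converse is the same contrapositive with the same distinguishing configuration ($x_j=\0$, all other cells $\1$). Your explicit conjunction formula $f^{(\Delta)}_{128,n}(x)_i=\bigwedge_{k\in\dset{\Delta}(i)}x_k$ just makes precise what the paper uses implicitly via Formula~\ref{eq:d}, with the minor bonus that it handles the sizes $n\leq 3$ uniformly, where the paper instead appeals to the computations of Section~\ref{s:exp}.
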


\begin{proof}
  For $n=1,2,3$ we have $\dset{\Delta}(i)=\dset{\Delta'}(i)=\intz{n}$
  for all $i \in \intz{n}$, and only one dynamics (see Section~\ref{s:exp}),
  therefore the result holds. Now, consider $n \geq 4$.
\begin{itemize}
\item[$(\Rightarrow)$] 
Assume $\dset{\Delta}=\dset{\Delta'}$.
  Given $i\in\intz{n}$, two cases are possible:
  \begin{itemize}
    \item $\dset{\Delta}(i) \neq \intz{n}$. In this case,
      one can deduce from $\dset{\Delta}(i)=\dset{\Delta'}(i)$ that
      $\dright{\Delta}(i)=\dright{\Delta'}(i)$ and
      $\dleft{\Delta}(i)=\dleft{\Delta'}(i)$ (see the proof of
      Lemma~\ref{lemma:dset}). From Formula~\ref{eq:d} it follows
      $f^{(\Delta)}_{128}(x)_i=f^{(\Delta')}_{128}(x)_i$ for any $x \in \bool^n$.
    \item 
      $\dset{\Delta}(i)=\dset{\Delta'}(i)=\intz{n}$. In this case, in order to have
      $f^{(\Delta)}_{128}(x)_i \neq f^{(\Delta')}_{128}(x)_i$ one must have one
      of them equal to $\1$ and the other equal to $\0$. WLOG 
      assume $f^{(\Delta)}_{128}(x)_i=\1$. From the
      definition of the ECA rule $128$ and Formula~\ref{eq:d}, since
      $\dset{\Delta}(i)=\intz{n}$ the only possibility is $x=\1^n$, but this
      also implies $f^{(\Delta')}_{128}(x)_i=\1$.
  \end{itemize}
  We conclude that $f^{(\Delta)}_{128}(x)_i=f^{(\Delta')}_{128}(x)_i$ for any
  $x \in \bool^n$ and $i \in \intz{n}$, which is equivalently formulated as
  $D_{f^{(\Delta)}_{128,n}}=D_{f^{(\Delta')}_{128,n}}$.
\end{itemize} %
\item[$(\Leftarrow)$]
  Assume $\dset{\Delta}(i) \neq \dset{\Delta'}(i)$ for some $i \in \intz{n}$.
  Then, WLOG there exists $j \in \intz{n}$
  such that $j \in \dset{\Delta}(i) \setminus \dset{\Delta'}(i)$.
  The configuration $x$ with $x_i=\0 \iff i=j$ gives,
  again by Formula~\ref{eq:d}, that
  $f^{(\Delta)}_{128}(x)_i \neq f^{(\Delta')}_{128}(x)_i$. Indeed,
  \begin{itemize}
    \item the image of $i$ under the update schedule $\Delta$ depends on
    $x_j=\0$ which, by the definition of the ECA rule $128$, ensures that
    $f^{(\Delta)}_{128}(x)_i = \0$, and
  \item the image of $i$ under the update schedule $\Delta'$ depends only on
    cells in state $\1$ which, by the definition of the ECA rule $128$, ensures that
    $f^{(\Delta')}_{128}(x)_i = \1$.
  \end{itemize}
  Consequently, $D_{f^{(\Delta)}_{128,n}} \neq D_{f^{(\Delta')}_{128,n}}$.
\end{proof}

Lemma~\ref{lemma:128d} characterizes exactly the pairs of non-equivalent
update schedules for which the dynamics of rule $128$ differ, \ie,
the set of special pairs for rule $128$, which are the set pairs
$\Delta,\Delta' \in \Pn$ such that $\Delta \not\equiv \Delta'$
but $\dset{\Delta}=\dset{\Delta'}$.
Computing $\sens{f_{128,n}}$ is now a combinatorial problem
of computing the number of possible $\dset{\Delta}$ for $\Delta \in \Pn$.

\begin{remark}
  Lemma~\ref{lemma:128d} does not hold for all rules, since some of them are
  max-sensitive even though there exist $\Delta \not\equiv \Delta'$
  with $\dset{\Delta}(i)=\dset{\Delta'}(i)$ for all $i \in \intz{n}$.
\end{remark}

We are going to prove that for any $n>6$,
there exist $10n$ disjoint special pairs of schedules of size $n$
(Lemma~\ref{lemma:counting-special-pairs}).
We will first argue that special pairs differ in the labeling of exactly one arc
(Lemma~\ref{lemma:special-onearc}), then exhibit $10n$ special pairs of schedules of
size $n$ (which come down to five cases up to rotation and left/right exchange)
and finally argue that these pairs are disjoint. This will lead to
Theorem~\ref{theorem:128}.
The coming proofs will make heavy use of the following lemma
(see Figure~\ref{fig:special-labminus}).

\begin{lemma}
  \label{lemma:special-labminus}
  For any $n \geq 4$, consider a special pair $\Delta,\Delta' \in \Pn$ for rule $128$
  such that
  $\lab_\Delta((i+1,i))=\labplus$ and $\lab_{\Delta'}((i+1,i))=\labminus$
  for some $i\in\intz{n}$.
  For all $j\in\intz{n}\setminus\set{i,i+1,i+2}$, it holds that
  $\lab_\Delta((j,j+1))=\labminus$ and $\lab_\Delta((j+1,j))=\labplus$.
\end{lemma}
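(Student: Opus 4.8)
The plan is to exploit the characterization of special pairs for rule $128$ provided by Lemma~\ref{lemma:128d}: since $(\Delta,\Delta')$ is special we have $\Delta \not\equiv \Delta'$ together with $D_{f^{(\Delta)}_{128,n}} = D_{f^{(\Delta')}_{128,n}}$, and therefore $\dset{\Delta}(k) = \dset{\Delta'}(k)$ for every cell $k \in \intz{n}$. The entire argument is driven by what this equality forces at the single cell $i$ where the two schedules are known to differ on the arc $(i+1,i)$.

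First I would read off the immediate consequence of the hypothesis on the right-chains at cell $i$. Since $\lab_\Delta((i+1,i)) = \labplus$, the definition of $\dright{\Delta}$ gives $\dright{\Delta}(i) = 1$, whereas $\lab_{\Delta'}((i+1,i)) = \labminus$ gives $\dright{\Delta'}(i) \geq 2$; thus the right-chains at $i$ differ. Because $\dset{\Delta}(i) = \dset{\Delta'}(i)$, this is only possible if $\dset{\Delta}(i) = \intz{n}$. Indeed, if $\dset{\Delta}(i) \neq \intz{n}$, then (exactly as in the proof of Lemma~\ref{lemma:dset}) the left- and right-chains do not overlap and can be recovered from $\dset{\Delta}(i)$, which would force $\dright{\Delta}(i) = \dright{\Delta'}(i)$, a contradiction; equivalently, $\dset{\Delta}(i) = \dset{\Delta'}(i) \neq \intz{n}$ would yield $\Delta \equiv \Delta'$ via Lemma~\ref{lemma:dset}, contradicting that the pair is special.

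Next I would convert $\dset{\Delta}(i) = \intz{n}$ together with $\dright{\Delta}(i) = 1$ into a statement about the rightward arcs. As $\dset{\Delta}(i)$ is the contiguous arc of cells running from $i-\dleft{\Delta}(i)$ to $i+\dright{\Delta}(i)$, covering all $n$ cells forces $\dleft{\Delta}(i) + \dright{\Delta}(i) \geq n-1$, hence $\dleft{\Delta}(i) \geq n-2$. By the definition of $\dleft{\Delta}$, this means the arcs $(i-j,\,i-j+1)$ are labeled $\labminus$ for $1 \leq j \leq n-3$, and these are precisely the arcs $(j,j+1)$ for $j$ ranging over $\{i+3,i+4,\dots,i-1\} = \intz{n}\setminus\set{i,i+1,i+2}$. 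This establishes the first conclusion $\lab_\Delta((j,j+1)) = \labminus$.

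The second conclusion then follows from validity (Theorem~\ref{theorem:lab_valid}). For each $j \notin \set{i,i+1,i+2}$ we now know $\lab_\Delta((j,j+1)) = \labminus$; if in addition $\lab_\Delta((j+1,j)) = \labminus$ held, then reversing both $\labminus$-arcs would produce a forbidden cycle of length two between the nodes $j$ and $j+1$, contradicting the validity of $\Delta$. Hence $\lab_\Delta((j+1,j)) = \labplus$, as claimed (see Figure~\ref{fig:special-labminus}). I expect the only delicate point to be the index bookkeeping modulo $n$ in the passage from $\dleft{\Delta}(i) \geq n-2$ to the exact complement $\intz{n}\setminus\set{i,i+1,i+2}$, where an off-by-one in the range of $j$ must be ruled out carefully; every other step is a direct chain of implications from the definitions of $\dleft{\Delta},\dright{\Delta},\dset{\Delta}$ and the two cited lemmas.
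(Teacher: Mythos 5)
Your proof is correct and follows essentially the same route as the paper's: Lemma~\ref{lemma:128d} gives $\dset{\Delta}=\dset{\Delta'}$, the differing labels on $(i+1,i)$ force $\dset{\Delta}(i)=\dset{\Delta'}(i)=\intz{n}$ and hence $\dleft{\Delta}(i)\geq n-2$, which yields the $\labminus$ labels on the arcs $(j,j+1)$, and Theorem~\ref{theorem:lab_valid} (no forbidden cycle of length two) then forces $\lab_\Delta((j+1,j))=\labplus$. Your bookkeeping is in fact cleaner than the paper's own write-up, which contains two slips: it states $\dright{\Delta}(i)=0$ where the definition (with $k=1$ always admissible) gives $1$, and it swaps the two arc orientations in its final conclusion, whereas the lemma statement and Figure~\ref{fig:special-labminus} agree with your version.
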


\begin{figure}
  \centerline{\includegraphics{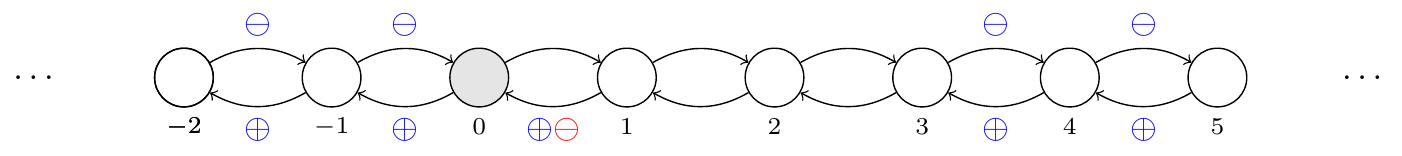}}
  \caption{
    Illustration of Lemma~\ref{lemma:special-labminus}, with $\Delta$ in blue
    and $\Delta'$ in red: hypothesis on the labelings of arc $(1,0)$
    imply many $\labminus$ labels on arcs of the form $(j,j+1)$, and
    $\labplus$ labels on arcs of the form $(j+1,j)$, for $\Delta$.
  }
  \label{fig:special-labminus}
\end{figure}

\begin{proof}
  From Lemma~\ref{lemma:128d}, we must have $\dset{\Delta}=\dset{\Delta'}$.
  Hence, in particular, $\dset{\Delta}(i)=\dset{\Delta'}(i)$. 
  However, from the hypothesis on the labelings of arc $(i+1,i)$, 
  the only possibility is that $\dset{\Delta}(i)=\dset{\Delta'}(i)=\intz{n}$. 
  Indeed, we have $i+2 \in \dset{\Delta'}(i)$, but on $\Delta$ to the right 
  we have $\dright{\Delta}(i)=0$ thus for the chain of influences of cell $i$ 
  to contain cell $i+2$ we must have $\dleft{\Delta}(i) \geq n-2$, which corresponds to
  $\lab_\Delta((j+1,j))=\labminus$ for all $j \in \intz{n}\setminus\set{i,i+1,i+2}$.
  It follows that for these $j$ we have $\lab_\Delta((j,j+1))=\labplus$ otherwise an invalid 
  cycle of length two is created (Theorem~\ref{theorem:lab_valid}).
\end{proof}

\begin{lemma}
  \label{lemma:special-onearc}
  For any $n > 6$, if $\Delta,\Delta' \in \Pn$ is a special pair for rule $128$
  then $\Delta$ and $\Delta'$ differ on the labeling of exactly one arc.
\end{lemma}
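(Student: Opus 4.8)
The plan is to fix a special pair $\Delta,\Delta'\in\Pn$ and bound the number of arcs on which they disagree; the lower bound (at least one arc) is immediate, since $\Delta\not\equiv\Delta'$ means by definition $\lab_\Delta\neq\lab_{\Delta'}$, so the work is the upper bound. First I would normalise: the pair disagrees on some arc, and after exchanging the names of $\Delta,\Delta'$ and, if needed, applying the left/right exchange $\rho$ (which sends special pairs to special pairs since rule $128$ is left/right symmetric, and which is a bijection on arcs, hence preserves the number of disagreements), I may assume the arc is counter-clockwise with $\lab_\Delta((i+1,i))=\labplus$ and $\lab_{\Delta'}((i+1,i))=\labminus$. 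Lemma~\ref{lemma:special-labminus} then forces $\Delta$ into the canonical ``almost sequential'' shape $\lab_\Delta((j+1,j))=\labplus$, $\lab_\Delta((j,j+1))=\labminus$ for all $j\notin\set{i,i+1,i+2}$, leaving only the six arcs incident to the window $\set{i,i+1,i+2}$ undetermined.

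The key tool is a local reconstruction of arc labels from dependency sets. One has $\lab_\Delta((j+1,j))=\labplus\iff\dright{\Delta}(j)=1$ and $\lab_\Delta((j,j+1))=\labplus\iff\dleft{\Delta}(j+1)=1$, while by the formulas in the proof of Lemma~\ref{lemma:dset} the quantities $\dleft{\Delta}(m),\dright{\Delta}(m)$ are determined by $\dset{\Delta}(m)$ whenever $\dset{\Delta}(m)\neq\intz{n}$. Since the pair is special, $\dset{\Delta}=\dset{\Delta'}$ by Lemma~\ref{lemma:128d}, so writing $Z=\set{m\in\intz{n}\mid\dset{\Delta}(m)=\intz{n}}$, the schedules can disagree on $(j+1,j)$ only if $j\in Z$ and on $(j,j+1)$ only if $j+1\in Z$. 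In particular, applying this to the triggering arc already gives $i\in Z$, and all disagreements are confined to arcs incident to $Z$.

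Next I would locate $Z$ for the canonical $\Delta$. Every cell $m\notin\set{i+1,i+2}$ has $\dright{\Delta}(m)=1$, so $m\in Z$ forces $\dleft{\Delta}(m)\geq n-2$; since the clockwise arcs labelled $\labplus$ all lie in the window and $n>6$ prevents the window from wrapping onto itself, this confines $Z$ to a block of at most a handful of consecutive cells around the window, each with $\dright{\Delta}(m)=1$ minimal and $\dleft{\Delta}(m)$ large. This is the decisive rigidity: at such a cell the dependency set $\intz{n}$ has no slack on the right, so any relabelling that shortens its left-chain (i.e.\ turns an incoming clockwise arc to $\labplus$) moves $\dset{\Delta'}(m)$ off $\intz{n}$ and breaks specialness. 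Hence the only $\dset$-preserving changes are those that lengthen a right-chain, namely relabelling an arc $(m+1,m)$ with $m\in Z$ from $\labplus$ to $\labminus$; and by Theorem~\ref{theorem:lab_valid} this is valid only if $\lab((m,m+1))=\labplus$ already, lest a forbidden cycle of length two appear between $m$ and $m+1$.

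Finally I would show exactly one candidate survives. The arc $(i+1,i)$ does: as $i\in Z$, lengthening the right-chain at $i$ keeps $\dset{\Delta'}(i)=\intz{n}$, and this is consistent and valid when $\lab_\Delta((i,i+1))=\labplus$. For any other candidate one checks that the induced modification either shortens the chain of some cell of $Z$ — contradicting specialness — or, by the length-two-cycle constraint, cannot be performed alone and drags in a second relabelling that again perturbs $\dset$ at a neighbouring cell. Running through the finitely many labellings of the six window arcs (and the symmetric situation under $\rho$) then yields that $\Delta$ and $\Delta'$ differ on the single arc $(i+1,i)$. I expect this last finite case analysis to be the main obstacle: because the window admits several valid completions, one must verify configuration by configuration that the coupling imposed by validity (which ties the two arcs joining adjacent cells) together with the rigidity of the dependency sets on $Z$ (no room to shorten a chain) leaves no room for a second disagreement.
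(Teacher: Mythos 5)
Your set-up is sound as far as it goes: the normalisation up to rotation and left/right exchange, the application of Lemma~\ref{lemma:special-labminus} to the triggering arc, and the observation that Lemma~\ref{lemma:128d} combined with the reconstruction formulas from the proof of Lemma~\ref{lemma:dset} confine every disagreement to arcs $(j+1,j)$ with $j\in Z$ and $(j,j+1)$ with $j+1\in Z$, where $Z$ is the set of cells whose dependency set is all of $\intz{n}$, are all correct, and your localisation of $Z$ near the window $\set{i,i+1,i+2}$ is essentially right. But the proof stops exactly where the difficulty begins. The one general argument you offer for excluding a second disagreement --- that any relabelling shortening the left chain of a cell $m\in Z$ ``moves $\dset{\Delta'}(m)$ off $\intz{n}$'' --- is false as stated: membership in $Z$ only requires $\dleft{\Delta'}(m)+\dright{\Delta'}(m)\geq n-1$, so a shortening of the left chain can in principle be compensated by lengthening the right chain (relabelling $(m+1,m)$, $(m+2,m+1)$, \dots{} to $\labminus$), and excluding such compensated modifications requires tracking how each relabelling perturbs the dependency sets of \emph{neighbouring} cells (in and out of $Z$) together with the validity constraint of Theorem~\ref{theorem:lab_valid}. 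You gesture at this (``drags in a second relabelling that again perturbs $\dset{}$ at a neighbouring cell'') but it is asserted, not proved, and you then explicitly defer the decisive step to a finite case analysis over the window labellings that you call the main obstacle and never carry out. Since that verification is the entire content of the lemma --- everything before it is preparation --- what you have is a correct reduction plan, not a proof.

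For contrast, the paper closes this gap without enumerating window completions: it applies Lemma~\ref{lemma:special-labminus} a \emph{second} time, to the hypothesised second difference arc, whatever its position and orientation. Each of the two differences then forces almost all labels of one of the two schedules, and a five-case disjunction on the orientation and sign of the second difference yields, for every $n>6$, either contradictory forced labels on a single arc or a forbidden cycle of length two or $n$. If you wish to complete your route instead, the missing work is to enumerate the valid completions of the undetermined window arcs of $\Delta$, compute $Z$ in each case, and check cell by cell that no compensated pair of chain modifications preserves all dependency sets; that check is comparable in size to the paper's case disjunction, so your approach would not come out shorter.
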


\begin{proof}
  First, by definition of special pair, we have $\Delta \not\equiv \Delta'$. 
  Hence, $\Delta$ and $\Delta'$ must differ on the labeling of at least one 
  arc. Up to rotation and right/left exchange, let us suppose WLOG that
  $\lab_\Delta((1,0))=\labplus$ and $\lab_{\Delta'}((1,0))=\labminus$.
  Now, for the sake of contradiction, assume that they also differ on 
  another arc, and consider the following cases disjunction
  (remark that the order of the case study is chosen so that
  cases make reference to previous cases).
  \begin{enumerate}[label=(\alph*)]
    \item If $\lab_\Delta((i,i+1))=\labplus$ and   
      $\lab_{\Delta'}((i,i+1))=\labminus$
      for some $i \in \intz{n}$, then by applying Lemma~\ref{lemma:special-labminus} to
      the two arcs where $\Delta$ and $\Delta'$ differ leads to a contradiction
      on the labeling of some arc according to $\Delta$.
      Indeed, Lemma~\ref{lemma:special-labminus} is applied to two arcs in 
      different directions, one application leaves three arcs of the form 
      $(j,j+1)$ not labeled $\labminus$ in $\Delta$ and three arcs of the form 
      $(j+1,j)$ not labeled $\labplus$ in $\Delta$, the converse for the other 
      application, hence starting from $n=7$ these labelings overlap in a contradictory fashion.
    \item If $\lab_\Delta((i+1,i))=\labminus$ and 
      $\lab_{\Delta'}((i+1,i))=\labplus$
      for some $i \in \intz{n}\setminus\set{0}$, then $i \in \set{2,3}$ otherwise there is a forbidden cycle of length two in $\Delta$ with some
      $\labminus$ label given by the application of
      Lemma~\ref{lemma:special-labminus} to the arc $(1,0)$. 
      However, for $i\in\set{2,3}$ the application of Lemma~\ref{lemma:special-labminus} to the arc $(i+1,i)$ gives $\lab_{\Delta'}((0,1))=\labminus$, creating a
      forbidden cycle of length two in $\Delta'$.
    \item\label{item:special-counterclockwiseplusminus2}
      If $\lab_\Delta((i+1,i))=\labplus$ and $\lab_{\Delta'}((i+1,i))=\labminus$
      for some $i \in \intz{n} \setminus\{1\}$, then applying
      Lemma~\ref{lemma:special-labminus} to the two arcs where $\Delta$ and $\Delta'$
      differ leads to a forbidden cycle of length $n$ in $\Delta$
      (contradiction Theorem~\ref{theorem:lab_valid}).
      Indeed, if $i \notin\set{1,2}$ then we have $\labminus$ labels on arcs
      of the form $(j,j+1)$ for all $j \in \intz{n}$, and if $i=2$ then the
      forbidden cycle contains the arc $(3,2)$ labeled $\labplus$.
      The case $i=0$ is not a second difference.
    \item\label{item:special-clockwiseminusplus2}
      If $\lab_\Delta((i,i+1))=\labminus$ and $\lab_{\Delta'}((i,i+1))=\labplus$
      for some $i \in \intz{n}$, then applying Lemma~\ref{lemma:special-labminus} to
      arc $(1,0)$ gives $\lab_\Delta((j+1,j))=\labplus$ for all $j \in
      \intz{n} \setminus \{0,1,2\}$, and applying
      Lemma~\ref{lemma:special-labminus} to arc $(i,i+1)$ gives
      $\lab_{\Delta'}((j+1,j))=\labminus$ for all $j \in \intz{n} \setminus
      \set{i,i-1,i-2}$. Starting from $n=7$ we have $(\intz{n} \setminus \set{0,1,2}) 
      \cap (\intz{n}\setminus\set{i,i-1,i-2}) \neq \emptyset$, and as a consequence
      there is an arc $((j+1,j))$ in the case of
      Item~\ref{item:special-counterclockwiseplusminus2}.
    \item 
      If $\lab_\Delta((2,1))=\labplus$ and $\lab_{\Delta'}((2,1))=\labminus$,
      then applying Lemma~\ref{lemma:special-labminus} to arc $(2,1)$
      gives $\lab_\Delta((0,1))=\labminus$, however since by hypothesis
      $\lab_{\Delta'}((1,0))=\labminus$ we also have $\lab_{\Delta'}((0,1))=\labplus$
      otherwise there is a forbidden cycle of length two in $\Delta'$
      (Theorem~\ref{theorem:lab_valid}). As a consequence, the arc $(0,1)$ is in
      the case of Item~\ref{item:special-clockwiseminusplus2}.
  \end{enumerate}
  We conclude that in any case a second difference leads to a contradiction,
  either because an invalid cycle is created, or because repeated 
  applications of Lemma~\ref{lemma:special-labminus} give contradictory 
  labels (both $\labplus$ and $\labminus$) to some arc for some update 
  schedule.
\end{proof}

\begin{lemma}
  \label{lemma:counting-special-pairs}
  For any $n > 6$, there exist $10n$ disjoint special pairs of schedules of size $n$
  for rule $128$.
\end{lemma}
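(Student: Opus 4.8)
The plan is to produce the $10n$ pairs explicitly as five ``canonical'' special pairs together with their images under the left rotation $\sigma$ and the left/right exchange $\rho$, and then to prove disjointness. Since rule $128$ is left/right symmetric, $\rho$ sends special pairs to special pairs (and $\sigma$ always does), so it suffices to exhibit five special pairs and take their orbit. By Lemma~\ref{lemma:special-onearc} a special pair differs on exactly one arc, and up to $\sigma$ and $\rho$ I may assume this arc is the counterclockwise arc $(1,0)$, with $\lab_\Delta((1,0))=\labplus$ and $\lab_{\Delta'}((1,0))=\labminus$.

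First I would pin down the forced part of such a pair. Applying Lemma~\ref{lemma:special-labminus} at $i=0$ fixes every arc outside the window of cells $\set{0,1,2,3}$: clockwise arcs $(j,j+1)$ are $\labminus$ and counterclockwise arcs $(j+1,j)$ are $\labplus$ for $j\in\intz{n}\setminus\set{0,1,2}$. This already gives $\dleft{\Delta}(0)\geq n-2$, hence $\dset{\Delta}(0)=\dset{\Delta'}(0)=\intz{n}$, so by Remark~\ref{rem:dset-eq-intzn} the flip of $(1,0)$ is invisible at cell $0$; and since the two schedules agree on every other arc, and arc $(1,0)$ influences only the right chain of cell $0$, one gets $\dset{\Delta}=\dset{\Delta'}$ automatically. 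Thus, by Lemma~\ref{lemma:128d}, \emph{any} valid completion yields a special pair, and the only remaining requirement is validity (Theorem~\ref{theorem:lab_valid}) of both $\Delta$ and $\Delta'$ on the six window arcs $(0,1),(1,2),(2,3),(1,0),(2,1),(3,2)$.

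Next I would count the valid completions. Absence of a length-two forbidden cycle on the pair $\big((0,1),(1,0)\big)$ in $\Delta'$ forces $\lab_{\Delta'}((0,1))=\labplus$, while the pairs $\big((1,2),(2,1)\big)$ and $\big((2,3),(3,2)\big)$ may not be both $\labminus$. In $\Delta'$ cell $0$ has in-degree zero in the reversed digraph, so $\Delta'$ is always valid; in $\Delta$ the forced ``back path'' $0\rightsquigarrow 3$ together with $\lab_\Delta((1,0))=\labplus$ (which points $1\to 0$) closes a forbidden cycle exactly when a directed path $3\rightsquigarrow 1$ exists inside the window. Excluding this leaves precisely the five labelings satisfying ``$\big((1,2),(2,1)\big)=(\labplus,\labminus)$ or $\big((2,3),(3,2)\big)=(\labplus,\labminus)$'' (with $(0,1)=\labplus$ and no pair both $\labminus$). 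These are the five canonical pairs, each genuinely special: non-equivalent, both valid, and with equal $\dset$.

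Finally I would take the orbit. The differing arc is intrinsic to a pair (Lemma~\ref{lemma:special-onearc}), so $\sigma^k$ places it at $n$ distinct positions and $\rho$ turns it into a clockwise arc, yielding five further pairs whose members carry the opposite generic background (clockwise $\labplus$, counterclockwise $\labminus$ off the window); for $n>6$ these cannot coincide with any counterclockwise pair. Hence the orbit consists of $5\cdot 2n=10n$ distinct pairs. The main obstacle is the last claim, disjointness: I must show no schedule lies in two of them. Two constructed pairs sharing a schedule must have distinct differing arcs (a common member and the same differing arc force the same pair), so the shared schedule would admit two independent ``special flips''; applying Lemma~\ref{lemma:special-labminus} to both would force the two implied size-four windows to impose contradictory labels outside their overlap once $n>6$, which is impossible. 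The five pairs that share a single differing arc (the five types at a fixed position) are trivially disjoint, since their members differ pairwise on a window arc other than the differing one. This yields $10n$ disjoint special pairs and feeds directly into Theorem~\ref{theorem:128}.
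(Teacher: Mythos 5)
Your proposal is correct and follows the same overall skeleton as the paper's proof: reduce, via Lemma~\ref{lemma:special-onearc} and the symmetries $\sigma,\rho$, to a single difference on arc $(1,0)$; use Lemma~\ref{lemma:special-labminus} to force the background labels; enumerate the completions of the four free window arcs to obtain five base pairs; and take the orbit of size $10n$. Two of your sub-arguments genuinely differ from the paper's. First, for specialness of the candidates: the paper verifies, for each of the five survivors, that $\dleft{\Delta}$ and $\dright{\Delta}$ agree with $\dleft{\Delta'}$ and $\dright{\Delta'}$ at every cell other than $0$ and that $\dset{\Delta}(0)=\dset{\Delta'}(0)=\intz{n}$; you instead observe once and for all that, given the forced background, the flipped arc $(1,0)$ can only enter the right chain of cell $0$ (every other cell's right chain is cut by a forced $\labplus$ before reaching it), so $\dset{\Delta}=\dset{\Delta'}$ holds for \emph{every} valid completion and validity is the only constraint. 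This is a genuine simplification, and your characterization of the valid completions (pair $(1,2),(2,1)$ or pair $(2,3),(3,2)$ equal to $(\labplus,\labminus)$, no pair doubly $\labminus$) is equivalent to the paper's exclusion of the $7+4$ invalid labelings --- both yield the same five. Second, disjointness: the paper uses explicit invariants (each base schedule's word of clockwise labels contains a unique factor $\labminus\labminus\labminus\labplus$, which pins down the rotation, and the count of clockwise $\labminus$ labels separates $\rho$-images from the rest), whereas you derive a contradiction from a schedule lying in two pairs with distinct differing arcs by applying Lemma~\ref{lemma:special-labminus} twice. Your argument does work, but note that as stated it is imprecise for two pairs of the \emph{same} orientation: there the two forced backgrounds (clockwise $\labminus$, counterclockwise $\labplus$ off each window) agree and never clash by themselves; the contradiction instead arises between one pair's background forcing and the \emph{other} pair's two-cycle constraint (Theorem~\ref{theorem:lab_valid}), which forces a $\labplus$ on the clockwise companion of its differing arc, an arc that the first pair's background forces to $\labminus$. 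With that ingredient made explicit your disjointness argument goes through in all cases; the paper's invariant argument buys a proof with no such case analysis, at the cost of being less conceptual.
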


\begin{proof}
  Fix $n\geq 6$ and
  consider the set of special pairs $\Delta,\Delta' \in \Pn$
  which have a difference between $\Delta$ and $\Delta'$
  on the labeling of arc $(1,0)$,
  with $\lab_\Delta((1,0))=\labplus$ and $\lab_{\Delta'}((1,0))=\labminus$.
  Lemma~\ref{lemma:special-labminus} fixes the labels of many arcs of $\Delta$,
  and from Lemma~\ref{lemma:special-onearc} the same labels hold for $\Delta'$
  since there is already a difference on arc $(1,0)$:
  \begin{align*}
    \text{for all } j \in \intz{n} \setminus\{0,1,2\} \text{ we have }
    \lab_\Delta((j,j+1)) &= \lab_{\Delta'}((j,j+1)) = \labminus\\
     \text{and } \lab_\Delta((j+1,j)) &= \lab_{\Delta'}((j+1,j)) = \labplus.
  \end{align*}
  Furthermore the labeling of arc $(1,0)$ is given by our hypothesis, and from
  Theorem~\ref{theorem:lab_valid} (to avoid a forbidden cycle of length two in
  $\Delta$) and Lemma~\ref{lemma:special-onearc} (equality of $\lab_\Delta$ and
  $\lab_{\Delta'}$ except for the arc $(1,0)$) we also have
  $\lab_\Delta((0,1)) = \lab_{\Delta'}((0,1)) = \labplus$.
  As a consequence it remains to consider $2^4$ possibilities for the labelings of arcs
  \[
    (1,2), (2,3), (2,1) \text{ and } (3,2)
  \]
  (which are equal on $\Delta$ and $\Delta'$, again by Lemma~\ref{lemma:special-onearc}).

  Among these, seven possibilities create a forbidden cycle of length two when
  the labels of the two arcs between cells 1 and 2, or 2 and 3, are both $\labminus$
  (see Figure~\ref{fig:special-cycle2}).
  
  \begin{figure}
    \centerline{\includegraphics{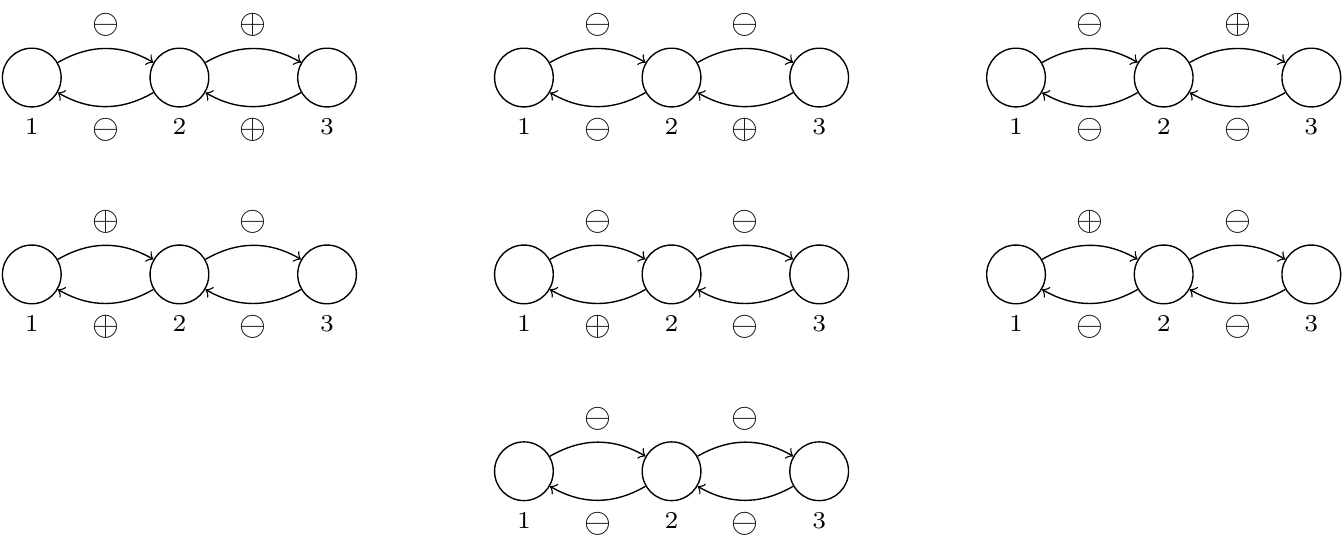}}
    \caption{
      seven labelings of arcs $(1,2)$, $(2,3)$, $(2,1)$ and $(3,2)$
      giving a forbidden cycle of length two, in the proof of
      Lemma~\ref{lemma:counting-special-pairs}.
    }
    \label{fig:special-cycle2}
  \end{figure}

  Among the remaining possibilities, four create a forbidden cycle of length $n$
  in $\Delta$, when the labels of arcs $(2,1)$ and $(3,2)$ are set to $\labplus$ 
  (see Figure~\ref{fig:special-cyclen}).

  \begin{figure}[htb]
    \centerline{\includegraphics{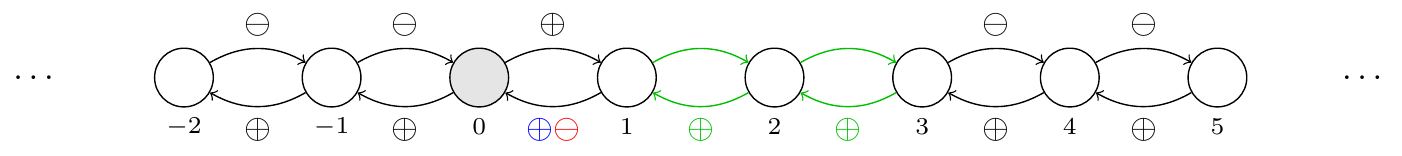}}
    \caption{
      four labelings of arcs $(1,2)$, $(2,3)$, $(2,1)$ and $(3,2)$
      giving a forbidden cycle of length $n$ in $\Delta$, in the proof of
      Lemma~\ref{lemma:counting-special-pairs},
      with $\lab_\Delta$ in blue, $\lab_{\Delta'}$ in
      red, in black the labels on which they are equal,
      and in green are highlighted the arcs on which we consider the
      $2^4$ possibilities.
      For any combination of $\labplus$ and $\labminus$ labels
      on arcs $(1,2)$ and $(2,3)$, the forbidden cycle is
      $0 \overset{\labminus}{\to}
      -1 \overset{\labminus}{\to}
      -2 \overset{\labminus}{\to}
      \dots \overset{\labminus}{\to}
      5 \overset{\labminus}{\to}
      4 \overset{\labminus}{\to}
      3 \overset{\labplus}{\to}
      2 \overset{\labplus}{\to}
      1 \overset{\labplus}{\to}
      0$
      (recall that the orientation of $\labminus$ arcs is reversed,
      Theorem~\ref{theorem:lab_valid}).
    }
    \label{fig:special-cyclen}
  \end{figure}

  The five remaining possibilities are presented on Figure~\ref{fig:special},
  one can easily check that they indeed correspond to special pairs:
  \begin{itemize}
    \item neither $\Delta$ nor $\Delta'$ contain a forbidden cycle. Hence, they are pairs
      of non-equivalent update schedule,
    \item for any $i \in \intz{n} \setminus \set{0}$, we have
      $\dleft{\Delta}(i)=\dleft{\Delta'}(i)$ and 
      $\dright{\Delta}(i)=\dright{\Delta'}(i)$. Hence,
      $\dset{\Delta}(i)=\dset{\Delta'}(i)$, and for cell $0$
      we have $\dset{\Delta}(0)=\dset{\Delta'}(0)=\intz{n}$.
  \end{itemize}

  \begin{figure}[htb]
    \centerline{\includegraphics{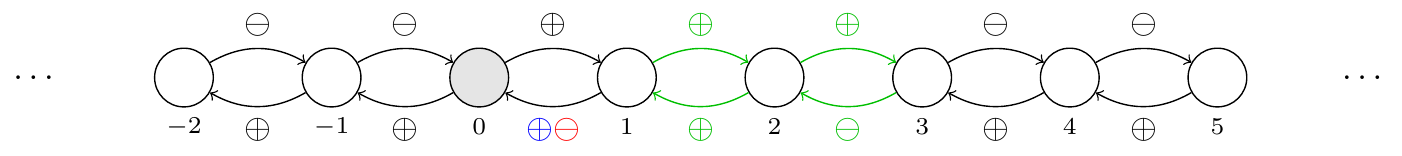}}
    \centerline{\includegraphics{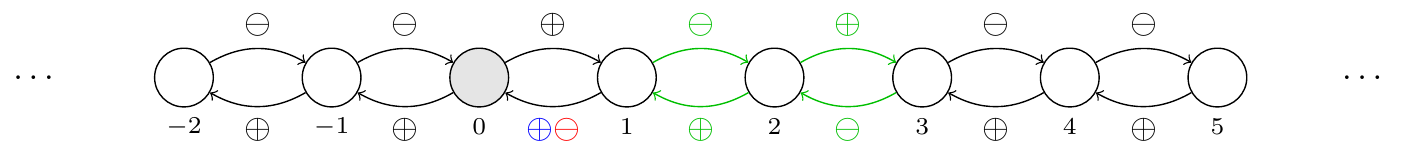}}
    \centerline{\includegraphics{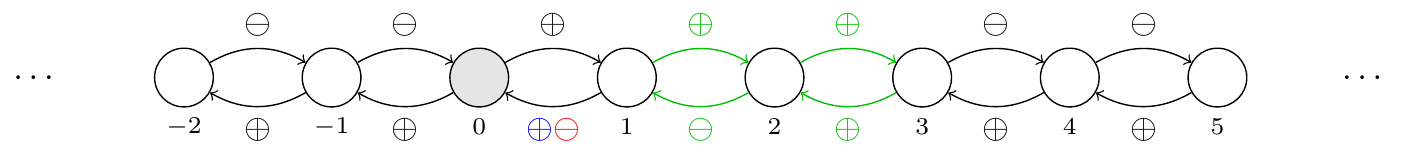}}
    \centerline{\includegraphics{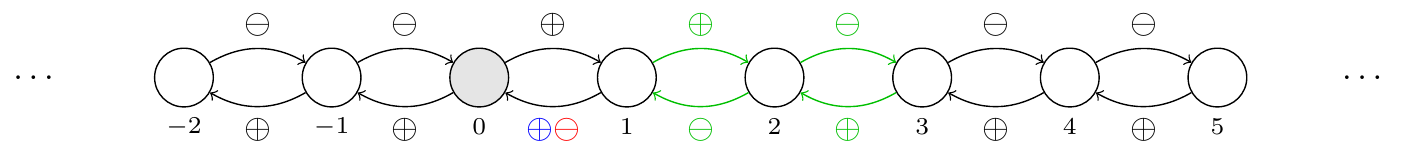}}
    \centerline{\includegraphics{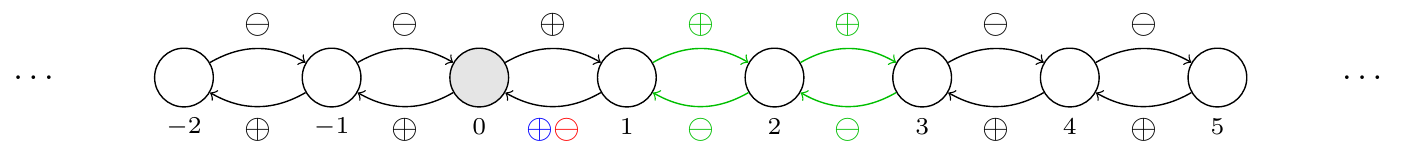}}
    \caption{
      five base special pairs $\Delta,\Delta'$ for rule $128$ in the proof of
      Lemma~\ref{lemma:counting-special-pairs},
      with $\lab_\Delta$ in blue, $\lab_{\Delta'}$ in
      red, in black the labels on which they are equal,
      and in green are highlighted the arcs on which we consider the
      five remaining possibilites.
    }
    \label{fig:special}
  \end{figure}

  \medskip

  We have seen so far that there are exactly five special pairs with their
  unique difference (Lemma~\ref{lemma:special-onearc}) on arc $(1,0)$.
  Let us call them the five {\em base} pairs and denote them as
  $\Delta^i,\Delta'^i$ for $i \in \intz{5}$.
  When we consider the $n$ rotations plus the left/right exchange
  (recall that rule $128$ is symmetric), we obtain
  $10n$ pairs:
  \begin{equation}
    \label{eq:special}
    \rho^k(\sigma^j(\Delta^i)),\rho^k(\sigma^j(\Delta'^i))
    \text{ for } i \in \intz{5}, j \in \intz{n}, k \in \intz{2}.
  \end{equation}
  Let us finally argue that these pairs are disjoint,
  \ie an update schedule belongs to at most one pair.

  First, one can straightforwardly check on Figure~\ref{fig:special} that the
  ten update schedules with a difference on arc $(1,0)$ are all distinct,
  hence the five base pairs are disjoint.

  Second, the $n$ rotations of these ten update schedules are all distinct
  when $n>6$, as can be noticed from $n$ letter words on alphabet
  $\{\labplus,\labminus\}$ given by
  \[
    \big(\lab_\Delta((i,i+1)))_{i \in \intz{n}}
    \text{ for some } \Delta.
  \]
  Indeed, each of these words contains a unique factor
  $\labminus\labminus\labminus\labplus$
  which allows to identify the number of left rotations applied to
  some $\Delta^i$ or $\Delta'^i$ with $i \in \intz{5}$
  in order to obtain $\Delta$.
  As a consequence, two distinct base update schedules remain distinct when
  some rotation is applied to one of them.

  Third, the left/right exchange of these $5n$ update schedules
  (base plus rotations) give $10n$ distinct update schedules,
  as can be noticed on the number of $\labminus$ labels on arcs of the form
  $(i,i+1)$ for $i \in \intz{n}$.
  Indeed, denoting
  \[
    \countminus{\Delta}=|\{ (i,i+1) \mid i \in \intz{n} \text{ and }
    \lab_\Delta((i,i+1))=\labminus \}|,
  \]
  we have for any $n>6$ that $\countminus{\Delta} > n-3$ when
  $\Delta$ is a base update schedule, the quantity is preserved
  by rotation, \ie $\countminus{\sigma(\Delta)}=\countminus{\Delta}$,
  but it holds that $\countminus{\Delta} > n-3$ if and only if
  $\countminus{\rho(\Delta)} < n-3$.
  As a consequence, two distinct update schedules
  (among the $5n$ update schedules $\sigma^j(\Delta^i),\sigma^j(\Delta^i)$
  for $i \in \intz{5}$ and $j \in \intz{n}$)
  remain distinct when the left/right exchange is applied to
  one of them. When the left/right exchange is applied to both of
  them then the situation is symmetric to the previous considerations.
  %

  We conclude that the $10n$ pairs given by Formula~\ref{eq:special}
  are special and disjoint.
\end{proof}

As a consequence of Lemma~\ref{lemma:counting-special-pairs} we have
$|\set{ \dset{\Delta} \mid \Delta\in\Pn}| = 3^n-2^{n+1}-10n+2$ for any $n>6$,
and the result follows from Lemma~\ref{lemma:128d}.
\begin{theorem}
  \label{theorem:128}
  $\sens{f_{128,n}}=\frac{3^n-2^{n+1}-10n+2}{3^n-2^{n+1}+2}$ for any $n > 6$.
\end{theorem}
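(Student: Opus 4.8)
The plan is to reduce the computation of $\sens{f_{128,n}}$ entirely to the combinatorial count of distinct $\dset{\Delta}$ values that the preceding lemmas have prepared. First I would observe that the interaction digraph of rule $128$ is the full $\GECA{n}$ (every cell genuinely depends on both of its neighbours), so the denominator of $\sens{f_{128,n}}$ is exactly $|\updates{f_{128,n}}| = 3^n-2^{n+1}+2$, the number of valid update digraphs. All the work therefore sits in the numerator $|\dynamics{f_{128,n}}|$.

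The key reduction is Lemma~\ref{lemma:128d}: for $\Delta \not\equiv \Delta'$ one has $D_{f^{(\Delta)}_{128,n}}=D_{f^{(\Delta')}_{128,n}}$ if and only if $\dset{\Delta}=\dset{\Delta'}$. Since equivalent schedules share their labelling and hence their $\dset$, the assignment $\Delta \mapsto D_{f^{(\Delta)}_{128,n}}$ factors through $\Delta \mapsto \dset{\Delta}$, and two distinct equivalence classes yield the same dynamics precisely when they yield the same $\dset$. Consequently $|\dynamics{f_{128,n}}| = |\set{\dset{\Delta}\mid\Delta\in\Pn}|$, and it suffices to count the distinct $\dset$ values.

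To carry out that count I would consider the surjection from the $3^n-2^{n+1}+2$ equivalence classes onto the set of $\dset$ values. By Lemma~\ref{lemma:128d}, a fibre of size at least two consists exactly of pairwise special classes. Lemma~\ref{lemma:counting-special-pairs} provides $10n$ disjoint special pairs, and the argument establishing it (via Lemma~\ref{lemma:special-onearc} and the five base pairs taken up to rotation and left/right exchange) shows these are all the special pairs. Disjointness then rules out any fibre of size three or more, since such a triple would generate overlapping special pairs; hence every fibre is either a singleton or one of the $10n$ pairs. Writing $S$ for the number of singleton fibres, we obtain $3^n-2^{n+1}+2 = S + 2\cdot 10n$ while $|\set{\dset{\Delta}\mid\Delta\in\Pn}| = S + 10n$, so the count equals $3^n-2^{n+1}-10n+2$. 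Dividing by $3^n-2^{n+1}+2$ gives the claimed formula for any $n>6$.

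The genuinely hard content has already been discharged in the earlier lemmas, so the main obstacle here is purely the bookkeeping: one must be sure both that the $10n$ special pairs are \emph{exhaustive} and that they are \emph{pairwise disjoint}, so that each collapses exactly two equivalence classes into one dynamics and thereby lowers the count by exactly one. Once those two facts are in hand, the theorem reduces to the single subtraction above.
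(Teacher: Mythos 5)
Your proposal is correct and follows essentially the same route as the paper: Lemma~\ref{lemma:128d} reduces counting dynamics to counting distinct values of $\dset{\Delta}$, and Lemma~\ref{lemma:counting-special-pairs} --- whose proof indeed establishes exhaustiveness (via Lemma~\ref{lemma:special-onearc} and the enumeration of the five base pairs) as well as disjointness of the $10n$ special pairs --- yields $|\dynamics{f_{128,n}}| = 3^n-2^{n+1}-10n+2$. Your explicit fibre bookkeeping merely spells out what the paper compresses into the single sentence preceding the theorem.
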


\subsubsection{ECA rule 162}

The ECA rule $162$ is based on the Boolean function 
$r_{162}(x_1,x_2,x_3)= (x_1 \vee \neg x_2) \wedge x_3$.
Let $f$ be a shorthand for $f_{162,n}$ when the context is clear.

The structure of the reasoning is to first prove that for any
special pair $\Delta,\Delta' \in \Pn$ for rule $162$,
the labelings of arcs of the form $(i+1,i)$ for all $i \in \intz{n}$ are
identical in $\Delta$ and $\Delta'$ (Lemma~\ref{lemma:162-counterclockwise}).
Second, given a difference on the labels of some arc $(i,i+1)$, prove
that it forces all other labels both in $\Delta$ and in $\Delta'$
(Lemma~\ref{lemma:162-clockwise}).
Third, for the remaining case, prove that it is indeed a special pair,
thus generating $n$ disjoint special pairs by rotation (for any $n \geq 5$),
leading to Theorem~\ref{theorem:162}.

\begin{lemma}
  \label{lemma:162-counterclockwise}
  For any $n \geq 2$, if $\Delta,\Delta' \in \Pn$ is a special pair for rule $162$,
  then for all $i \in \intz{n}$ we have
  $\lab_\Delta((i+1,i))=\lab_{\Delta'}((i+1,i))$.
\end{lemma}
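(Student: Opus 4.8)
The plan is to prove the contrapositive at the level of a single arc: I would show that if $\lab_\Delta((i+1,i)) \neq \lab_{\Delta'}((i+1,i))$ for even one $i \in \intz{n}$, then $D_{f^{(\Delta)}_{162,n}} \neq D_{f^{(\Delta')}_{162,n}}$, so such a pair cannot be special. This plays the same structural role for rule $162$ that Lemma~\ref{lemma:counterclockwise} plays for rule $8$, and the arcs $(i+1,i)$ (the influence of the right neighbor) are the decisive ones precisely because of the factor $x_3$ in $r_{162}(x_1,x_2,x_3)=(x_1 \vee \neg x_2)\wedge x_3$: the output collapses to $\0$ as soon as the right input is $\0$.

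First I would fix $i$ and assume WLOG that $\lab_\Delta((i+1,i))=\labplus$ and $\lab_{\Delta'}((i+1,i))=\labminus$. By Theorem~\ref{theorem:lab_valid} this forces $\lab_{\Delta'}((i,i+1))=\labplus$, since two $\labminus$ arcs between $i$ and $i+1$ would form a forbidden cycle of length two. Consequently, in $\Delta'$ cell $i+1$ is updated strictly before cell $i$, while cell $i$ is updated no earlier than $i+1$, so when $i+1$ is updated it still reads the input value of its left neighbor $i$. This who-updates-before-whom bookkeeping — and hence whether cell $i$ reads the old or the new value of $i+1$ — is the crux of the argument.

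Then I would exhibit a distinguishing configuration $x$ pinned down only by $x_i=\0$ and $x_{i+1}=\1$ (all other cells arbitrary). For $\Delta$ we have $\dright{\Delta}(i)=1$, so by Equation~\ref{eq:d} cell $i$ reads the input value $x_{i+1}=\1$ on its right; since $x_i=\0$ makes the factor $x_1 \vee \neg x_2$ evaluate to $\1$ regardless of the left chain, we get $f^{(\Delta)}_{162}(x)_i = \1$. For $\Delta'$ we have $\dright{\Delta'}(i)\geq 2$, so cell $i$ reads the \emph{updated} value of $i+1$; but this updated value equals $r_{162}(x_i,x_{i+1},\,\cdot\,)=(\0 \vee \neg\1)\wedge\,\cdot\, = \0$, where I use that $i+1$ sees the old value $x_i=\0$ of its left neighbor (the $\labplus$ on $(i,i+1)$ established above) and that the collapse to $\0$ is independent of everything further right. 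Hence $f^{(\Delta')}_{162}(x)_i = \0 \neq \1$, separating the dynamics.

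The observation that keeps the construction short is rule $162$'s two-fold collapse to $\0$: the output vanishes when the right neighbor is $\0$, and the updated value of $i+1$ is itself forced to $\0$ by the pattern $x_i=\0,x_{i+1}=\1$, so I never need to control the left chains. The step I expect to need the most care is the degenerate case $n=2$, where the two neighbors of cell $i$ both coincide with $i+1$ and the arcs $(i-1,i)$ and $(i+1,i)$ become the same arc; there I would simply verify the schedules directly on $x=\0\1$, confirming the image $\1$ under $\Delta$ and $\0$ under $\Delta'$, so the conclusion holds uniformly for all $n \geq 2$. Since a single differing counterclockwise arc always separates the dynamics, every special pair must agree on all arcs $(i+1,i)$.
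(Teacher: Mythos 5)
Your proof is correct and follows essentially the same route as the paper's: the same WLOG assumption on the arc $(i+1,i)$, the same use of Theorem~\ref{theorem:lab_valid} to force $\lab_{\Delta'}((i,i+1))=\labplus$, and the same distinguishing configuration $x_i=\0$, $x_{i+1}=\1$, whose image at cell $i$ is $\1$ under $\Delta$ but $\0$ under $\Delta'$ because the updated value of cell $i+1$ collapses to $\0$. Your explicit check of the degenerate case $n=2$ is a nice touch but not actually needed, since (as in the paper) the unknown neighboring values never affect the computed images.
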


\begin{proof}
  By contradiction,
  assume that there exists $i \in \intz{n}$
  such that, WLOG, $\lab_\Delta((i+1,i))=\labplus$ whereas
  $\lab_{\Delta'}((i+1,i))=\labminus$.
  This implies $\lab_{\Delta'}((i,i+1))=\labplus$ otherwise
  there is a forbidden cycle of length two in $\Delta'$
  (Theorem~\ref{theorem:lab_valid}).
  See Figure~\ref{fig:162-counterclockwise} for an illustration of the setting.

  \begin{figure}
    \centerline{\includegraphics{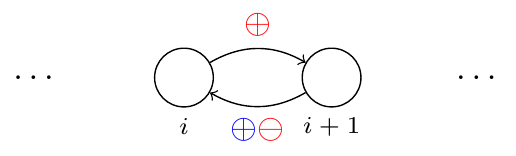}}
    \caption{
      illustration of the setting for the contradiction
      in Lemma~\ref{lemma:162-counterclockwise}, with $\lab_\Delta$ in blue
      and $\lab_{\Delta'}$ in red.
    }
    \label{fig:162-counterclockwise}
  \end{figure}

  Consider some $x \in \bool^n$ with $x_i=\0$ and $x_{i+1}=\1$
  (this require $n \geq 2)$).
  From our knowledge of $\Delta$ we have for some unknown $y_{i-1} \in \bool$
  that
  \[
    f^{(\Delta)}(x)_i=r_{162}(y_{i-1},x_i,x_{i+1})=r_{162}(y_{i-1},\0,\1)=\1.
  \]
  From our knowledge of $\Delta'$ we have for some unknown $y_{i-1},y_{i+2} \in \bool$
  that
  \begin{align*}
    f^{(\Delta')}(x)_i&=r_{162}(y_{i-1},x_i,r_{162}(x_i,x_{i+1},y_{i+2}))
    =r_{162}(y_{i-1},\0,r_{162}(\0,\1,y_{i+1}))\\
    &=r_{162}(y_{i-1},\0,\0)=\0.
  \end{align*}
  Thus $f^{(\Delta)}(x)_i \neq f^{(\Delta')}(x)_i$, a contradiction 
  to the fact that $\Delta,\Delta'$ is a special pair.
\end{proof}

From Lemma~\ref{lemma:162-counterclockwise} and the fact that $\Delta \not\equiv \Delta'$, we will now consider a special pair with a difference on some arc $(i,i+1)$ for $i \in \intz{n}$, and prove that this first difference enforces all the other labels both in $\Delta$ and in $\Delta'$.

\begin{lemma}
  \label{lemma:162-clockwise}
  For any $n \geq 3$, there is a unique special pair $\Delta,\Delta'\in\Pn$
  for rule $162$ with $\lab_\Delta((i-1,i)) \neq \lab_{\Delta'}((i-1,i))$
  for some $i \in \intz{n}$, and its labels are depicted on Figure~\ref{fig:162-special}.
\end{lemma}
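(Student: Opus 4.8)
The plan is to show that a single clockwise disagreement rigidly determines every other label, so that the whole pair is forced. By Lemma~\ref{lemma:162-counterclockwise} the two schedules agree on all counterclockwise arcs $(k+1,k)$, and since $\Delta\not\equiv\Delta'$ they must disagree on a clockwise arc; up to a left rotation I would place this disagreement on $(i-1,i)$ and assume WLOG $\lab_\Delta((i-1,i))=\labplus$ and $\lab_{\Delta'}((i-1,i))=\labminus$. Theorem~\ref{theorem:lab_valid} then forces $\lab_\Delta((i,i-1))=\lab_{\Delta'}((i,i-1))=\labplus$, to avoid a forbidden $2$-cycle. Writing $r_{162}(a,b,c)=(a\vee\neg b)\wedge c$, the images of cell $i$ read $f^{(\Delta)}(x)_i=(x_{i-1}\vee\neg x_i)\wedge\rho$ and $f^{(\Delta')}(x)_i=(\ell\vee\neg x_i)\wedge\rho$, where $\rho$ is the common right-chain value (equal since it only depends on counterclockwise labels) and $\ell=f^{(\Delta')}(x)_{i-1}$ is the already-updated left neighbour seen by $\Delta'$; these can differ only when $x_i=\1$ and $\rho=\1$.

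First I would force the counterclockwise arcs. If the right chain of cell $i$ did not wrap the whole cycle, i.e. $\dright{\Delta}(i)<n-1$, then $\rho$ would not depend on $x_{i-1}$; choosing $x_{i-1}=\0$, $x_i=\1$ and filling the right chain with $\1$'s to get $\rho=\1$ gives $f^{(\Delta)}(x)_i=r_{162}(\0,\1,\1)=\0$, while $\ell=f^{(\Delta')}(x)_{i-1}=r_{162}(\,\cdot\,,\0,\1)=\1$ (its right input being the old $x_i=\1$) yields $f^{(\Delta')}(x)_i=\1$, contradicting that the pair is special. Hence $\dright{\Delta}(i)=\dright{\Delta'}(i)=n-1$, which pins down every counterclockwise arc: $\labplus$ exactly on $(i,i-1)$ and $(i-1,i-2)$, and $\labminus$ elsewhere.

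Next I would force the clockwise arcs. For every adjacent pair other than the two touching the two $\labplus$ counterclockwise arcs, the counterclockwise arc is $\labminus$, so by Theorem~\ref{theorem:lab_valid} the opposite arc cannot also be $\labminus$; this forces $\lab((k,k+1))=\labplus$ for all $k\notin\{i-2,i-1\}$. The arc $(i-2,i-1)$ is instead pinned by the dynamics: if it were $\labplus$ then $\ell=x_{i-2}$, and taking $x_{i-2}=\0$, $x_{i-1}=x_i=\1$ with $\rho=\1$ would distinguish the two schedules, so $\lab((i-2,i-1))=\labminus$. With the disagreement confined to $(i-1,i)$, both $\Delta$ and $\Delta'$ are now completely determined, which is exactly the pair drawn in Figure~\ref{fig:162-special}; as the only remaining freedom was the rotation fixing the position $i$, the special pair is unique.

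It remains to verify that this candidate really is a special pair, which is where the real work lies. Both labelings are valid, since the unique clockwise $\labminus$ run cannot close a forbidden cycle. Moreover, because every clockwise arc except $(i-2,i-1)$ and $(i-1,i)$ is $\labplus$, the chains $\dleft{\Delta}(j),\dright{\Delta}(j)$ coincide with $\dleft{\Delta'}(j),\dright{\Delta'}(j)$ for every $j\neq i$, so Lemma~\ref{lemma:di} already gives $f^{(\Delta)}(x)_j=f^{(\Delta')}(x)_j$ off cell $i$. Only cell $i$ then remains, and here the point is that $\rho=\1$ together with the fully wrapped right chain forces $x_{i-1}=\1$ and the absence of a $\0\1$ factor to its left; a short case analysis on $x_{i-2}$ (using $\lab((i-1,i-2))=\labplus$ and $\lab((i-2,i-1))=\labminus$) shows the updated value $f^{(\Delta')}(x)_{i-1}$ equals $\1$ in every such configuration, so $f^{(\Delta)}(x)_i=f^{(\Delta')}(x)_i$ throughout. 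The main obstacle is precisely this bookkeeping of the nested $r_{162}$-evaluations around the wrap-around right chain, and cleanly separating the labels dictated by the dynamics at cell $i$ from those dictated by $2$-cycle validity.
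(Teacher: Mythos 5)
Your strategy coincides with the paper's: invoke Lemma~\ref{lemma:162-counterclockwise} to make all counterclockwise labels common, reduce the comparison at cell $i$ to $f^{(\Delta)}(x)_i=(x_{i-1}\vee\neg x_i)\wedge\rho$ versus $f^{(\Delta')}(x)_i=(\ell\vee\neg x_i)\wedge\rho$, kill the case $\dright{\Delta}(i)<n-1$ with the configuration $x_{i-1}=\0$, $x_i=\1$ and an all-\1 right chain, pin the remaining labels, and finally verify that the surviving candidate is indeed special (your verification, via ``$\rho=\1$ forces $x_{i-1}=\1$ and forces every chain value, in particular the update of cell $i-2$, to be $\1$, hence $\ell=\1$'', is correct and matches the paper's case analysis). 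However, the forcing stage has a genuine hole: you never use the forbidden-cycle-of-length-$n$ part of Theorem~\ref{theorem:lab_valid}, and two of your steps cannot be completed without it (or a substitute).

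First, ruling out $\dright{\Delta}(i)<n-1$ does not yield $\dright{\Delta}(i)=n-1$: the value $\dright{\Delta}(i)=n$, \ie the arc $(i-1,i-2)$ also labeled $\labminus$, must be excluded, and this is precisely the paper's second case, where the cycle $i\to i+1\to\dots\to i-1\overset{\labplus}{\to}i$ is forbidden in $\Delta$ (it is \emph{not} forbidden in $\Delta'$, so the argument must be run on $\Delta$). Second, and more seriously, your dynamics argument pinning $(i-2,i-1)$ to $\labminus$ constrains only $\Delta'$: the quantity $\ell=f^{(\Delta')}(x)_{i-1}$ lives in $\Delta'$, whereas $\lab_\Delta((i-2,i-1))$ never enters your formula for $f^{(\Delta)}(x)_i$, so no configuration can detect that label at cell $i$. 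Hence your proof does not exclude candidate pairs in which the two schedules differ on \emph{both} $(i-1,i)$ and $(i-2,i-1)$, with $\lab_\Delta((i-2,i-1))=\labplus$ and $\lab_{\Delta'}((i-2,i-1))=\labminus$; your concluding sentence ``the only remaining freedom was the rotation'' is therefore not yet justified, and uniqueness is not established. The paper closes exactly this point by showing that $\lab_\Delta((i-2,i-1))=\labplus$ creates the same forbidden $n$-cycle in $\Delta$ (so validity, not dynamics, pins $\Delta$), and only then uses a dynamics argument --- at cell $i-1$, not cell $i$ --- to pin $\Delta'$. Either that validity argument, or a comparison of the images of cell $i-1$ on the configuration $x_{i-2}=\0$, $x_{i-1}=x_i=\1$ (which distinguishes old $x_{i-2}$ from the updated value of cell $i-2$), is needed; both fixes are short, but without one of them the case analysis is incomplete.
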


\begin{figure}
  \centerline{\includegraphics{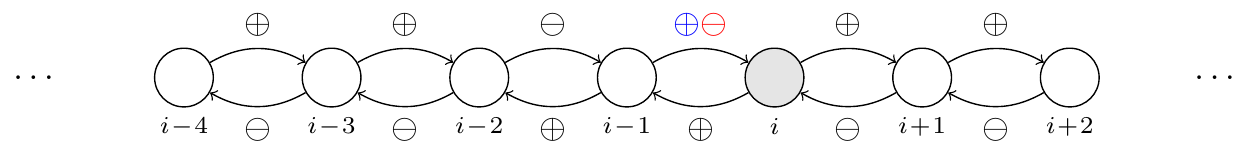}}
  \caption{
    the special pair $\Delta,\Delta' \in \Pn$ for rule $162$
    in Lemma~\ref{lemma:162-clockwise},
    with $\lab_{\Delta}((i-1,i))=\labplus$ and $\lab_{\Delta'}((i-1,i))=\labminus$
    for some $i \in \intz{n}$.
    $\lab_\Delta$ in blue, $\lab_{\Delta'}$ in red,
    and in black the labels on which they are equal.
  }
  \label{fig:162-special}
\end{figure}

\begin{proof}
  WLOG, as on Figure~\ref{fig:162-special}, assume that 
  $\lab_{\Delta}((i-1,i))=\labplus$ and $\lab_{\Delta'}((i-1,i))=\labminus$.
  We deduce that $\lab_{\Delta'}((i,i-1))=\labplus$ otherwise
  there is a forbidden cycle of length two in $\Delta'$ (Theorem~\ref{theorem:lab_valid})
  and from Lemma~\ref{lemma:162-counterclockwise} it follows that we also have
  $\lab_{\Delta}((i,i-1))=\labplus$.

  We are going to prove that this forces all the other labels of $\Delta$
  $\Delta'$, \ie there is a unique such special pair. From the hypothesis
  that $\Delta,\Delta'$ is a special pair, we will use the fact that
  for all $x \in \bool^n$ and for all $j \in \intz{n}$ we have
  $f^{(\Delta)}(x)_j = f^{(\Delta')}(x)_j$.


  From Lemma~\ref{lemma:162-counterclockwise} we deduce that
  $\dleft{\Delta}(i)=\dleft{\Delta}(i)$, meaning that at the time cell $i$
  is updated, the state of its right neighbor (cell $i+1$) are identical
  under update schedules $\Delta$ and $\Delta'$.
  Let us denote $y_{i+1} \in \bool$ this state for the rest of this proof.

  By contradiction assume that it is possible to have some configuration
  $x \in \bool^n$ such that
  \begin{equation}
    \label{eq:162-yi+1eq1}
    x_{i-1}=\0 \text{ and } x_i=\1 \text{ and } y_{i+1}=\1
  \end{equation}
  (this requires $n \geq 3$).
  In this case we have
  \[
    f^{(\Delta)}(x)_i=r_{162}(x_{i-1},x_i,y_{i+1})=r_{162}(\0,\1,\1)=\0
  \]
  but for some unknown $y_{i-2} \in \bool^n$ we always have
  \begin{align*}
    f^{(\Delta')}(x)_i&=r_{162}(r_{162}(y_{i-2},x_{i-1},x_i),x_i,y_{i+1})=r_{162}(r_{162}(y_{i-2},\0,\1),\1,\1)\\
    &=r_{162}(\1,\1,\1)=\1
  \end{align*}
  \ie $f^{(\Delta)}(x)_i \neq f^{(\Delta')}(x)_i$ which contradicts
  the hypothesis that $\Delta$,$\Delta'$ is a special pair.
  We conclude that it must be impossible to have simultaneously
  $x_{i-1}=\0$, $x_i=\1$ and $y_{i+1}=\1$. This hints at the fact that
  the value of $\dright{\Delta}(i)=\dright{\Delta'}(i)$ must be close to $n$
  so that the constraints on $x_{i-1}$ and $x_i$ make it impossible
  to obtain $y_{i+1}=\1$ when updating the chain of influence to the right of
  cell $i$.
  This is what we are going to prove formally, via the following case disjunction.
  \begin{itemize}
    \item If $\dright{\Delta}(i)=\dright{\Delta'}(i) < n-1$ then
      consider $x \in \bool^n$ with
      $x_{i-1}=\0$ and $x_i=x_{i+1}=\dots=x_{i+\dright{\Delta}(i)}=\1$.
      From our current hypothesis on $\dright{\Delta}(i)$ and for $n \geq 3$,
      such a configuration exists.
      We deduce from the definition of rule $162$ that the updates
      (in this order, both in $\Delta$ and $\Delta'$) of cells
      $i+\dright{\Delta}(i), i+\dright{\Delta}(i)-1, \dots, i+1$ all give
      state $\1$, \ie in particular $y_{i+1}=\1$, leading to a contradiction
      as developed from Equation~\ref{eq:162-yi+1eq1}.
    \item If $\dright{\Delta}(i)=\dright{\Delta'}(i) \geq n$ then
      there is a forbidden cycle of length $n$ in $\Delta$:
      \begin{equation}
        \label{eq:162-fcycle}
        i \overset{\labminus}{\to} i+1 \overset{\labminus}{\to} \dots
        \overset{\labminus}{\to} i-2 \overset{\labminus}{\to} i-1
        \overset{\labplus}{\to} i
      \end{equation}
      (recall that the orientation of $\labminus$ arcs is reversed, see
      Theorem~\ref{theorem:lab_valid}).
      As a consequence we discard this case.
    \item If $\dright{\Delta}(i)=\dright{\Delta'}(i) = n-1$ then it means that
      we have $\lab_\Delta((j+1,j))=\lab_{\Delta'}((j+1,j))=\labminus$ for all
      $j \in \intz{n}\setminus\{i-2,i-1\}$, and
      $\lab_\Delta((i-1,i-2))=\lab_{\Delta'}((i-1,i-2))=\labminus$,
      but also $\lab_\Delta((j,j+1))=\lab_{\Delta'}((j,j+1))=\labplus$ for all
      $j \in \intz{n}\setminus\{i-2,i-1\}$ from Theorem~\ref{theorem:lab_valid}.

      Therefore it only remains to consider the labels of arc $(i-2,i-1)$
      in schedules $\Delta$ and $\Delta'$. To avoid a forbidden cycle of length
      $n$ in $\delta$, similar to Equation~\ref{eq:162-fcycle} with
      $i-2 \overset{\labplus}{\to} i-1$, we need to set
      $\lab_\Delta((i-2,i-1))=\labminus$. It only remains to consider
      $\lab_{\Delta'}((i-2,i-1))$.

      Suppose for the contradiction that $\lab_{\Delta'}((i-2,i-1))=\labplus$,
      then similarly to our previous reasoning, for some $x \in \bool^n$
      with $x_{i-2}=\0$, $x_{i-1}=\1$ and $x_i=\1$, we have
      for some unknown $y_{i-3} \in \bool$ that
      \begin{align*}
        f^{(\Delta)}(x)_{i-1}&=r_{162}(r_{162}(y_{i-3},x_{i-2},x_{i-1}),x_{i-1},x_i)=r_{162}(r_{162}(y_{i-3},\0,\1),\1,\1)\\
        &=r_{162}(\1,\1,\1)=\1
      \end{align*}
      whereas
      \[
        f^{(\Delta)}(x)_{i-1}=r_{162}(x_{i-2},x_{i-1},x_i)=r_{162}(\0,\1,\1)=\0
      \]
      thus $f^{(\Delta)}(x)_{i-1} \neq f^{(\Delta')}(x)_{i-1}$, contradicting
      the fact that $\Delta,\Delta'$ is a special pair.
  \end{itemize}

  We conclude that there is only one remaining possible special pair with a difference
  on the labelings of arc $(i-1,i)$, and that it is the one given on
  Figure~\ref{fig:162-special}.


  Let us finally prove that this is indeed a special pair.
  One easily checks on Figure~\ref{fig:162-special} that
  the update schedules of this pair have no forbidden cycle and are non-equivalent.
  For any $j \in \intz{n}\setminus\{i\}$ we have
  $\dleft{\Delta}(j)=\dleft{\Delta'}(j)$ and 
  $\dright{\Delta}(j)=\dright{\Delta'}(j)$, \ie the chain of influences are identical
  hence for all $x \in \bool^n$ we have
  $f^{(\Delta)}(x)_{j} = f^{(\Delta')}(x)_{j}$.

  Regarding cell $i$, we have $\dright{\Delta}(i)=\dright{\Delta'}(i)$,
  meaning that at the time cell $i$ is updated, its right neighbor (cell $i+1$)
  will be in the same state (denoted $y_{i+1}$) in both update schedules.
  Given some $x \in \bool^n$, we proceed to a case disjunction.
  \begin{itemize}
    \item If $y_{i+1}=\0$ then
      \[
        f^{(\Delta)}(x)_i=r_{162}(x_{i-1},x_i,y_{i+1})=r_{162}(x_{i-1},x_i,\0)=\0
      \]
      and for some unknown $y_{i-1} \in \bool$ we have
      \[
        f^{(\Delta')}(x)_i=r_{162}(y_{i-1},x_i,y_{i+1})=r_{162}(y_{i-1},x_i,\0)=\0
      \]
      therefore we conclude $f^{(\Delta)}(x) = f^{(\Delta')}(x)$.
    \item If $y_{i+1}=\1$ then, from the reasoning we have just made above
      and since cell $i+2$ is updated prior to cell $i+1$, we deduce that cell
      $i+2$ is updated to state $\1$ otherwise cell $i+1$ would be updated to state $\0$
      (in both $\Delta$ and $\Delta'$, contradicting our last hypothesis
      that $y_{i+1}=\1$). This applies to cell $i+3$, {\em etc}, until cell $i-2$
      which must also be updated to state $\1$, and finally cell $i-1$ which must be
      in state $\1$, \ie $x_{i-1}=\1$.

      We deduce from $y_{i+1}=\1$ and $x_{i-1}=\1$ that
      \[
        f^{(\Delta)}(x)_i=r_{162}(x_{i-1},x_i,y_{i+1})=r_{162}(\1,x_i,\1)=\1.
      \]
      Regarding cell $i$ in the update schedule $\Delta'$, we proceed to a last case disjunction.
      \begin{itemize}
        \item If $x_i=\0$ then for some unknown $y_{i-1}\in\bool$ we have
          \[
            f^{(\Delta')}(x)_i=r_{162}(y_{i-1},x_i,y_{i+1})=r_{162}(y_{i-1},\0,\1)=\1
          \]
          and we conclude $f^{(\Delta)}(x) = f^{(\Delta')}(x)$.
        \item If $x_i=\1$ then we can use our prior deduction that cell $i-2$
          is updated to state $\1$, therefore
          \begin{align*}
            f^{(\Delta')}(x)_i&=r_{162}(r_{162}(\1,x_{i-1},x_i),x_i,y_{i+1})=r_{162}(r_{162}(\1,\1,\1),\1,\1)\\
            &=r_{162}(\1,\1,\1)=\1
          \end{align*}
          and we also conclude $f^{(\Delta)}(x) = f^{(\Delta')}(x)$ in this
          ultimate case.
      \end{itemize}
  \end{itemize}
  We have seen that for any $x \in \bool^n$, $f^{(\Delta)}(x) = f^{(\Delta')}(x)$.
  Thus, $\Delta,\Delta'$ is a special pair,
  and, from the first part of this proof, it is unique.
\end{proof}

\begin{theorem}
\label{theorem:162}
  $\sens{f_{162,n}}=\frac{3^n-2^{n+1}-n+2}{3^n-2^{n+1}+2}$ for any $n \geq 3$.
\end{theorem}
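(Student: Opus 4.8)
The plan is to compute $\sens{f_{162,n}} = |\dynamics{f_{162,n}}| / |\updates{f_{162,n}}|$ by the special-pair method already used for rule $128$. Since $|\updates{f_{162,n}}| = 3^n - 2^{n+1} + 2$, it suffices to prove that exactly $n$ pairs of non-equivalent update schedules collapse to a common dynamics, and that these collapses are pairwise disjoint, so that $|\dynamics{f_{162,n}}| = (3^n - 2^{n+1} + 2) - n$.

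First I would combine Lemmas~\ref{lemma:162-counterclockwise} and~\ref{lemma:162-clockwise} to enumerate all special pairs. By Lemma~\ref{lemma:162-counterclockwise}, any special pair agrees on every arc of the form $(i+1,i)$, so its set of differing arcs, which is nonempty because $\Delta \not\equiv \Delta'$, lies among the clockwise arcs $(i-1,i)$. Lemma~\ref{lemma:162-clockwise} then shows that a single such difference, say on $(i-1,i)$, rigidly forces every remaining label in both schedules, so the pair differs on exactly that one arc and coincides with the configuration of Figure~\ref{fig:162-special}. Applying the left rotation $\sigma$ (which preserves special pairs) to this base pattern produces one special pair for each position $i \in \intz{n}$, and by the rigidity these are all of them.

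Next I would verify that the $n$ pairs are disjoint, that is, that the $2n$ schedules occurring in them are pairwise distinct. Two rotation-stable invariants read off Figure~\ref{fig:162-special} suffice. In both members of the pair at position $i$, exactly one arc of the form $(j+1,j)$ carries label $\labplus$, namely $(i,i-1)$, which pins down $i$; and the two members are separated by the number of arcs $(j,j+1)$ carrying label $\labminus$, which is one for the member whose label on $(i-1,i)$ is $\labplus$ and two for the member whose label is $\labminus$. Hence no schedule can lie in two of the pairs, and members of distinct pairs never coincide. Disjointness in turn precludes any dynamics being shared by three or more classes, since such a coincidence would produce overlapping special pairs.

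Finally, the $n$ disjoint special pairs merge exactly $n$ pairs of classes of $\updates{f_{162,n}}$ into single dynamics, all other classes yielding pairwise distinct dynamics, whence $|\dynamics{f_{162,n}}| = 3^n - 2^{n+1} - n + 2$ and the stated value of $\sens{f_{162,n}}$ follows. I expect the genuine difficulty to be entirely contained in the rigidity statement of Lemma~\ref{lemma:162-clockwise}, already proved above; the remaining effort is the clean bookkeeping of the two invariants, which must be checked to hold down to $n = 3$ rather than only for large $n$.
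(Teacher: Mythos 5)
Your proposal is correct and follows essentially the same route as the paper's proof: enumerate all special pairs by combining Lemmas~\ref{lemma:162-counterclockwise} and~\ref{lemma:162-clockwise}, obtain the $n$ pairs as the rotations $\sigma^j$ of the unique base pair, check disjointness through rotation-stable statistics of the label words, and conclude $|\dynamics{f_{162,n}}|=3^n-2^{n+1}+2-n$. One factual slip in your first invariant, though: in the special pair forced by Lemma~\ref{lemma:162-clockwise}, the case analysis settles on $\dright{\Delta}(i)=\dright{\Delta'}(i)=n-1$ (the case $\geq n$ being discarded as a forbidden cycle), and by maximality of $\dright{\Delta}(i)$ this forces $\lab_\Delta((i-1,i-2))=\lab_{\Delta'}((i-1,i-2))=\labplus$; hence the counter-clockwise word carries \emph{two} adjacent $\labplus$ labels, on $(i,i-1)$ and $(i-1,i-2)$, not exactly one as you assert --- indeed, with a single $\labplus$ counter-clockwise arc, the member $\Delta$ with $\lab_\Delta((i-1,i))=\labplus$ would contain precisely the forbidden cycle of length $n$ that the lemma's proof excludes. (Your misreading is understandable: the text of the lemma's proof contains a typo stating $\lab((i-1,i-2))=\labminus$.) This does not damage your argument: the position of the unique adjacent $\labplus\labplus$ factor pins down $i$ just as well, and this is in fact exactly the invariant the paper's own proof of the theorem uses for disjointness; your second invariant (one versus two $\labminus$ labels among clockwise arcs) correctly separates the two members of each pair, and your closing observation that disjointness plus exhaustiveness of the characterization rules out three or more classes sharing one dynamics completes the count, so the final formula stands.
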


\begin{proof}
  From Lemmas~\ref{lemma:162-counterclockwise} and~\ref{lemma:162-clockwise}
  there are $n$ pairs of special pairs for rule $162$
  (no pair with a difference on the label of an arc of the form $(i+1,i)$
  for some $i \in \intz{n}$ by Lemma~\ref{lemma:162-counterclockwise},
  and exactly one pair with a difference on arc $(i,i+1)$ for each $i \in \intz{n}$
  by Lemma~\ref{lemma:162-clockwise}).
  Denoting $\Delta,\Delta'$ the special pair given by
  Lemma~\ref{lemma:162-clockwise} with a difference on the arc $(0,1)$,
  the $n$ special pairs are $\sigma^j(\Delta),\sigma^j(\Delta')$ for $j \in \intz{n}$.
  Lemmas~\ref{lemma:162-counterclockwise} and~\ref{lemma:162-clockwise} hold
  for any $n \geq 3$, and for any such $n$
  one easily checks by considering the word formed by the labels of arcs
  $(i-1,i)$ for $i \in \intz{n}$ (this word is identical for both schedules of
  each special pair, by Lemma~\ref{lemma:162-counterclockwise}) that these
  pairs are disjoint: these words contain exactly one factor $\oplus\oplus$
  whose position differs for any rotation of $\Delta,\Delta'$.
  It follows that among the $3^n-2^{n+1}+2$ non-equivalent update schedules,
  we have $\dynamics{f_{162,n}}=3^n-2^{n+1}+2-n$, as stated.
\end{proof}

\subsubsection{ECA rule 160}

The ECA rule $160$ somewhat similar to ECA rule $128$. Ideed, it is based on the 
Boolean function $r_{160}(x_1,x_2,x_3)= x_1 \wedge x_3$.

\begin{remark}
  \label{remark:160}
  It is clear from the definition of $r_{160}$ that for any update schedule
  $\Delta$ and any configuration $x\in\zu^n$ such that $x_{i}=x_{i+1}=\0$ (or
  $x_{i-1}=x_{i}=\0$) for some $i\in\intz{n}$ it holds
  $f^{(\Delta)}(x)_i=\0$. 
\end{remark}

We are going to adopt a adopt a reasoning analogous to rule
$128$ for the study of the sensitivity to synchronism of rule $160$.
Lemma~\ref{lemma:160-special-labminus} will be the first stone showing that as
soon as two update schedules form a special pair for rule $160$,
the position of their difference enforces the labels of many other arcs.
Then Lemma~\ref{lemma:160-special-onearc} will use applications of
Lemma~\ref{lemma:160-special-labminus} according to some carefuly crafted
case disjunction, in order to prove that a special pair with more than one
difference among the two update schedules (\ie differences on the labels of at
least two arcs) is contradictory. Finally, Lemma~\ref{lemma:counting-special-pairs-160}
will use these previous results to characterize exactly the special pairs of
update schedule for rule $160$, which comes down to six disjoint base special pairs,
leading to $12n$ special pairs when considering left/right exchange and rotations.
This will give Theorem~\ref{theorem:160}.

\begin{lemma}
  \label{lemma:160-special-labminus}
  For any $n > 4$, consider a special pair $\Delta,\Delta' \in \Pn$ for rule $160$
  such that
  $\lab_\Delta((i+1,i))=\labplus$ and $\lab_{\Delta'}((i+1,i))=\labminus$
  for some $i\in\intz{n}$.
  For all $j\in\intz{n}\setminus\set{i,i+1,i+2,i+3}$, it holds
  $\lab_\Delta((j,j+1))=\labminus$, $\lab_\Delta((j+1,j))=\labplus$
  and also $\lab_{\Delta'}((i+2,i+1))=\labminus$, $\lab_{\Delta'}((i+1,i+2))=\labplus$.
\end{lemma}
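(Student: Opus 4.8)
The plan is to mirror the proof of Lemma~\ref{lemma:special-labminus} for rule $128$, the essential new ingredient being a convenient closed form for the images of rule $160$. Since $r_{160}(x_1,x_2,x_3)=x_1\wedge x_3$ ignores its central argument, unfolding the nested local rules of Equation~\ref{eq:d} shows that for every cell $i$ the image $f^{(\Delta)}_{160,n}(x)_i$ is simply the conjunction of the states $x_j$ over the dependency interval $[\,i-\dleft{\Delta}(i),\,i+\dright{\Delta}(i)\,]$, from which the two ``inner endpoint'' cells $i-\dleft{\Delta}(i)+1$ and $i+\dright{\Delta}(i)-1$ are dropped (they occur only as central arguments and hence do not influence the result). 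I would record this AND-form first, as it is the workhorse of the whole argument. Because $\Delta,\Delta'$ is a special pair we have $f^{(\Delta)}_{160,n}(x)_i=f^{(\Delta')}_{160,n}(x)_i$ for all $x\in\bool^n$; as both sides are conjunctions, their sets of \emph{essential} cells must coincide. Throughout I will exploit that the hypothesis means $\dright{\Delta}(i)=1$ (so to the right cell $i$ only sees $x_{i+1}$), whereas $\dright{\Delta'}(i)\geq 2$.

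I would first prove $\dright{\Delta'}(i)\geq 3$, which gives the two conclusions on $\Delta'$. Suppose instead $\dright{\Delta'}(i)=2$. Then $i+1$ is the right inner-endpoint (don't-care) cell of $\Delta'$, while $x_{i+1}$ is always essential for $\Delta$ (it is the sole right argument of $f^{(\Delta)}_{160,n}(\cdot)_i$). Feeding the all-$\1$ configuration with only $x_{i+1}$ set to $\0$ then gives image $\0$ under $\Delta$ but $\1$ under $\Delta'$, unless $x_{i+1}$ is made essential for $\Delta'$ through its left chain; a short check shows this can happen only when $\dleft{\Delta'}(i)=n-1$, in which case $f^{(\Delta')}_{160,n}(\cdot)_i$ becomes the conjunction of \emph{all} $n$ states. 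But $f^{(\Delta)}_{160,n}(\cdot)_i$ always omits its own left inner-endpoint $i-\dleft{\Delta}(i)+1$, so it can never be this full conjunction --- a contradiction in both cases. Hence $\dright{\Delta'}(i)\geq 3$, i.e. $\lab_{\Delta'}((i+2,i+1))=\labminus$, and Theorem~\ref{theorem:lab_valid} (no forbidden $2$-cycle) forces $\lab_{\Delta'}((i+1,i+2))=\labplus$.

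Next I would prove $\dleft{\Delta}(i)\geq n-3$, which is exactly the remaining conclusion. Having $\dright{\Delta'}(i)\geq 3$ guarantees that at least one of the cells $i+2,i+3$ is essential for $\Delta'$ (it is $i+3$ when $\dright{\Delta'}(i)=3$, and $i+2$ when $\dright{\Delta'}(i)\geq 4$). If on the contrary $\dleft{\Delta}(i)\leq n-4$, then the dependency interval $[\,i-\dleft{\Delta}(i),\,i+1\,]$ of $\Delta$ contains neither $i+2$ nor $i+3$, so $f^{(\Delta)}_{160,n}(\cdot)_i$ depends on neither. Starting again from the all-$\1$ configuration and flipping to $\0$ whichever of $i+2,i+3$ is essential for $\Delta'$ changes the image under $\Delta'$ but not under $\Delta$, contradicting the special-pair hypothesis. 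Therefore $\dleft{\Delta}(i)\geq n-3$, which says precisely that $\lab_{\Delta}((j,j+1))=\labminus$ for every $j\in\intz{n}\setminus\set{i,i+1,i+2,i+3}$, and once more the absence of a forbidden $2$-cycle yields $\lab_{\Delta}((j+1,j))=\labplus$ for all such $j$.

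The main obstacle I anticipate is the cyclic bookkeeping of rule $160$: because two cells are dropped as don't-cares and because the dependency interval may nearly wrap around $\intz{n}$, a single cell can migrate between being essential, being a don't-care, and lying outside the interval, so each case must be tracked carefully. The hypothesis $n>4$ is what keeps the four cells $i,i+1,i+2,i+3$ distinct and leaves a cell outside $\set{i,i+1,i+2,i+3}$ to carry the flip arguments; the two facts that make the contradictions fire are the AND-form above and the observation that $f^{(\Delta)}_{160,n}(\cdot)_i$ always omits its left inner-endpoint cell and hence is never the full conjunction.
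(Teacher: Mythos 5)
Your proposal is correct, and it takes a genuinely different route from the paper's. The paper proves the two conclusions in the opposite order and by explicit configuration constructions: first, assuming $\dleft{\Delta}(i)<n-3$, it builds a configuration with $x_{i+2}=x_{i+3}=\0$ and \1's along the dependency chain of $\Delta$, so that under $\Delta'$ the \0 at $i+2$ propagates through $i+1$ into cell $i$ (using the $\labminus$ on $(i+1,i)$ and Remark~\ref{remark:160}), while under $\Delta$ cell $i$ sees only \1's; second, assuming $\lab_{\Delta'}((i+2,i+1))=\labplus$, it takes the configuration that is all \1 except $x_{i+1}=\0$ and argues the \0 dies out under $\Delta'$ but forces $f^{(\Delta)}_{160,n}(x)_i=\0$. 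You instead run an essential-variables analysis of the conjunction computed by rule $160$, establish $\dright{\Delta'}(i)\geq 3$ first, and then use it to produce a cell in $\set{i+2,i+3}$ that is essential for $\Delta'$ but invisible to $\Delta$ whenever $\dleft{\Delta}(i)\leq n-4$; a single bit-flip then separates the dynamics. Both orderings work, and your version is actually airtight in a sub-case the paper passes over quickly: when $i+1$ and $i+2$ lie in the same block of $\Delta'$ (which the $\labplus$ label permits), cell $i+2$ reads the \emph{old} $x_{i+1}=\0$, so the paper's claim that ``all cells will go to state \1'' fails and its chosen configuration need not witness a difference; your comparison of essential sets covers that case uniformly.

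One caveat: the ``AND-form'' you want to record as the workhorse is false as stated. The image is indeed a conjunction of old states over a subset of $[\,i-\dleft{\Delta}(i),\,i+\dright{\Delta}(i)\,]$, but the two inner endpoints are not always dropped. When exactly one chain has length $1$, the cell $i$ itself is an inner endpoint yet is essential: for $\dleft{\Delta}(i)=2$, $\dright{\Delta}(i)=1$ one gets $f^{(\Delta)}_{160,n}(x)_i=x_{i-2}\wedge x_i\wedge x_{i+1}$, not $x_{i-2}\wedge x_{i+1}$, because $x_i$ occurs as a non-central argument in the left chain. And in wrap-around situations a cell dropped as a centre of one chain reappears as a non-central argument of the other: your own case $\dleft{\Delta'}(i)=n-1$, $\dright{\Delta'}(i)=2$, where the image is the conjunction of \emph{all} $n$ states, already contradicts the stated formula (which would drop $i+1$ and $i+2$). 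What is true --- and is all your argument ever uses --- is the weaker statement: the essential set is contained in the interval; $x_{i+1}$ is essential for $\Delta$ since $\dright{\Delta}(i)=1$; and the inner endpoint of a chain occurs only as a central argument \emph{within that chain}, hence is inessential unless the other chain reaches it (which forces $\dleft{}$ or $\dright{}$ to be near $n$, the cases you track explicitly). Restate the workhorse lemma in that form and every step of your proof goes through as written.
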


\begin{proof}
  Let us prove that, with the hypothesis of the statement,
  we must have $\dleft{\Delta}(i) \geq n-4$
  (which implies the $\labminus$ labels on $\Delta$) and also
  $\lab_{\Delta'}((i+2,i+1))=\labminus$. The complete result follows by
  application of Theorem~\ref{theorem:lab_valid} to get the $\labplus$
  labels (in order to avoid any forbidden cycle of length two).

  \medskip

  For the first part, if $\dleft{\Delta}(i) < n-3$ then
  we can construct the following configuration $x \in \bool^n$
  without a contradiction on the states of cells $i-\dleft{\Delta}(i)$ and $i+3$:
  \begin{itemize}
    \item $x_{i+2}=x_{i+3}=\0$
    \item $x_i=x_{i-1}=\dots=x_{i-\dleft{\Delta}(i)}=\1$.
  \end{itemize}
  This requires $n \geq 5$,
  see Figure~\ref{fig:160-special-labminus} for an illustration.
  Regarding $\Delta'$, it follows from Remark~\ref{remark:160} that cell $i+2$
  remains in state $\0$, and as a consequence, regardless of the label of arc
  $(i+2,i+1)$, cell $i+1$ is updated to state $\0$, then so is $i$.
  However in $\Delta$, we have $x_j=\1$ for all $j \in \dset{\Delta}(i)$,
  \ie cell $i$ depends only on cells in state $\1$, and we deduce that it is
  updated to state $\1$.
  Thus $f_{160,n}^{(\Delta)}(x)_i \neq f_{160,n}^{(\Delta')}(x)_i$,
  a contradiction to the fact that $\Delta,\Delta'$ is a special pair.

  \medskip

  For the second part, suppose for the contradiction that
  $\lab{\Delta'}((i+2,i+1)=\labplus$, and consider the configuration $x \in \bool^n$
  with $x_{i+1}=\0$ and state $\1$ in all other cells. In $\Delta$, at time cell $0$
  is updated it has a state $\0$ on its right (cell $i+1$, not yet updated), and
  $f_{160,n}^{(\Delta)}(x)_i=\0$. In $\Delta'$, cell $i+1$ is updated prior to
  its left and right neighbors (from Theorem~\ref{theorem:lab_valid} again we have
  $\lab_{\Delta'}((i,i+1))=\labplus$) thus it goes to state $\1$. We can deduce
  from this that all cells will go to state $\1$ because they all have two neighbors
  in state $\1$ at the time they are updated. Therefore in particular
  $f_{160,n}^{(\Delta')}(x)_i=\1$, again a contradiction.
\end{proof}

\begin{figure}
  \centerline{\includegraphics{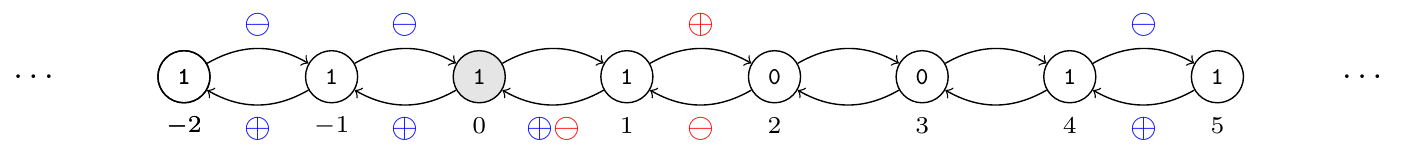}}
  \caption[\TODO{problem with dleft in caption solved by adding this stupid short text}]{
    illustration of Lemma~\ref{lemma:160-special-labminus}, with $\Delta$ in blue
    and $\Delta'$ in red: hypothesis on the labelings of arc $(1,0)$
    imply many $\labminus$ labels on arcs of the form $(j,j+1)$, and
    $\labplus$ labels on arcs of the form $(j+1,j)$, for $\Delta$,
    and labels on arcs $(2,1)$ and $(1,2)$ for $\Delta'$.
    Inside the cells are depicted the states corresponding to a contradictory
    configuration (having different images on cell $0$)
    when $\dleft{\Delta}(0) \leq n-4$.
  }
  \label{fig:160-special-labminus}
\end{figure}

Let us recall that rule $160$ is symmetric, therefore
Lemma~\ref{lemma:160-special-labminus} also applies with a left/right exchange.

\begin{lemma}
  \label{lemma:160-special-onearc}
  For any $n > 8$, if $\Delta,\Delta' \in \Pn$ is a special pair for rule $160$
  then $\Delta$ and $\Delta'$ differ on the labeling of exactly one arc.
\end{lemma}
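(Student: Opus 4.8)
The plan is to mirror the proof of Lemma~\ref{lemma:special-onearc}, the analogous statement for rule $128$, replacing every appeal to Lemma~\ref{lemma:special-labminus} by an appeal to Lemma~\ref{lemma:160-special-labminus} together with its left/right symmetric version, which is legitimate because rule $160$ is symmetric. By definition of a special pair we have $\Delta\not\equiv\Delta'$, so $\Delta$ and $\Delta'$ differ on at least one arc; up to rotation and left/right exchange I would assume WLOG that this difference sits on arc $(1,0)$, with $\lab_\Delta((1,0))=\labplus$ and $\lab_{\Delta'}((1,0))=\labminus$. Applying Lemma~\ref{lemma:160-special-labminus} with $i=0$ then fixes, for every $j\in\intz{n}\setminus\set{0,1,2,3}$, the labels $\lab_\Delta((j,j+1))=\labminus$ and $\lab_\Delta((j+1,j))=\labplus$, together with $\lab_{\Delta'}((2,1))=\labminus$ and $\lab_{\Delta'}((1,2))=\labplus$.

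Assuming for contradiction a second arc on which the two schedules differ, I would run the same case disjunction as in Lemma~\ref{lemma:special-onearc}, organised by the type of the second difference (clockwise $(i,i+1)$ or counter-clockwise $(i+1,i)$) and its orientation. The generic cases are dispatched exactly as for rule $128$: a second clockwise $\labplus/\labminus$ difference lets one apply Lemma~\ref{lemma:160-special-labminus} in both directions and forces contradictory labels on a common arc of the form $(j,j+1)$ in $\Delta$; a counter-clockwise $\labminus/\labplus$ difference is compatible with the already-forced $\labplus$ labels on arcs $(j+1,j)$ only when it lies among the few cells near the first difference, and a further application of the lemma there produces $\lab_{\Delta'}((0,1))=\labminus$, hence a forbidden cycle of length two in $\Delta'$ against $\lab_{\Delta'}((1,0))=\labminus$ (Theorem~\ref{theorem:lab_valid}); a counter-clockwise $\labplus/\labminus$ difference away from the boundary makes all arcs $(j,j+1)$ carry label $\labminus$ in $\Delta$ (or routes a cycle through the single surviving $\labplus$ arc), a forbidden cycle of length $n$; and a clockwise $\labminus/\labplus$ difference reduces, after one application of the lemma, to the previous case. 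The handful of boundary arcs touching cells $0,1,2,3$ would be handled as extra sub-cases, each collapsing into one of the generic ones and exploiting the additional labels $\lab_{\Delta'}((2,1))=\labminus$ and $\lab_{\Delta'}((1,2))=\labplus$ granted by Lemma~\ref{lemma:160-special-labminus}.

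The main obstacle, and the reason the size threshold rises from $n>6$ (rule $128$) to $n>8$, is concentrated in the first case above: each invocation of Lemma~\ref{lemma:160-special-labminus} now leaves four cells, namely $\set{i,i+1,i+2,i+3}$, outside its scope instead of three, so two opposite invocations constrain all but at most $4+4=8$ of the arcs $(j,j+1)$. Only when $n>8$ is the overlap $(\intz{n}\setminus\set{0,1,2,3})\cap(\intz{n}\setminus\set{k,k-1,k-2,k-3})\neq\emptyset$ guaranteed, and any arc in this overlap receives both a forced $\labminus$ (from the application to arc $(1,0)$) and a forced $\labplus$ (from the symmetric application to the second arc), which is the sought contradiction. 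The remaining effort is the bookkeeping verifying that every boundary sub-case indeed reduces to a generic one; this is routine but must be done with care, since the wider exclusion window $\set{i,i+1,i+2,i+3}$ creates a couple more boundary positions than in the rule $128$ argument.
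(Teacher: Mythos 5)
Your proposal follows the paper's own proof essentially step for step: the same WLOG normalisation placing the first difference on arc $(1,0)$, the same reliance on Lemma~\ref{lemma:160-special-labminus} together with its left/right mirror (rule $160$ being symmetric), and the same case disjunction on the direction and orientation of a hypothetical second difference, each branch ending either in a forbidden cycle (of length two or $n$) or in contradictory forced labels, with the boundary arcs near cells $0,1,2,3$ reduced to the generic cases. The only divergence is bookkeeping: the paper's finer count makes the two-clockwise-differences case work already at $n=8$ and places the binding $n>8$ requirement in the clockwise $\labminus/\labplus$ case, whereas your coarser $4+4$ count attributes the threshold to the former case; under the hypothesis $n>8$ both versions go through.
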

\begin{proof}
  Up to rotation and right/left exchange, let us suppose WLOG that
  $\lab_\Delta((1,0))=\labplus$ and $\lab_{\Delta'}((1,0))=\labminus$.
  Now, for the sake of contradiction, assume that they also differ on 
  another arc, and consider the following cases disjunction
  (remark that the order of the case study is chosen so that
  cases make reference to previous cases).
  \begin{enumerate}[label=(\alph*)]
    \item If $\lab_\Delta((i,i+1))=\labplus$ and   
      $\lab_{\Delta'}((i,i+1))=\labminus$
      for some $i \in \intz{n}$, then by applying Lemma~\ref{lemma:160-special-labminus} to
      the two arcs where $\Delta$ and $\Delta'$ differ leads to a contradiction
      on the labeling of some arc according to $\Delta$.
      Indeed, Lemma~\ref{lemma:160-special-labminus} is applied to two arcs in 
      different directions, one application leaves four arcs of the form 
      $(j,j+1)$ not labeled $\labminus$ in $\Delta$ and three arcs of the form 
      $(j+1,j)$ not labeled $\labplus$ in $\Delta$, the converse for the other 
      application, hence starting from $n=8$ these labelings overlap in a contradictory fashion.
    \item If $\lab_\Delta((i+1,i))=\labminus$ and 
      $\lab_{\Delta'}((i+1,i))=\labplus$
      for some $i \in \intz{n}\setminus\set{0}$, then $i \in \set{1,2,3}$ otherwise 
      there is a forbidden cycle of length two in $\Delta$ with some
      $\labminus$ label given by the application of
      Lemma~\ref{lemma:160-special-labminus} to the arc $(1,0)$. 
      However, for $i\in\set{2,3,4}$ the application of Lemma~\ref{lemma:160-special-labminus} 
      to the arc $(i+1,i)$ gives $\lab_{\Delta'}((0,1))=\labminus$, creating a
      forbidden cycle of length two in $\Delta'$. 
    \item\label{item:special-counterclockwiseplusminus}
      If $\lab_\Delta((i+1,i))=\labplus$ and $\lab_{\Delta'}((i+1,i))=\labminus$
      for some $i \in \intz{n} \setminus\{1,2\}$, then applying
      Lemma~\ref{lemma:160-special-labminus} to the two arcs where $\Delta$ and $\Delta'$
      differ leads to a forbidden cycle of length $n$ in $\Delta$
      (contradiction Theorem~\ref{theorem:lab_valid}).
      Indeed, if $i \notin\set{1,2,3}$ then we have $\labminus$ labels on arcs
      of the form $(j,j+1)$ for all $j \in \intz{n}$, and if $i=3$ then the
      forbidden cycle contains the arc $(4,3)$ labeled $\labplus$. 
            The case $i=0$ is not a second difference.
    \item\label{item:special-clockwiseminusplus}
      If $\lab_\Delta((i,i+1))=\labminus$ and $\lab_{\Delta'}((i,i+1))=\labplus$
      for some $i \in \intz{n}$, then applying Lemma~\ref{lemma:160-special-labminus} to
      arc $(1,0)$ gives $\lab_\Delta((j+1,j))=\labplus$ for all $j \in
      \intz{n} \setminus \{0,1,2,3\}$, and applying
      Lemma~\ref{lemma:160-special-labminus} to arc $(i,i+1)$ gives
      $\lab_{\Delta'}((j+1,j))=\labminus$ for all $j \in \intz{n} \setminus
      \set{i,i-1,i-2}$. Starting from $n=9$ we have $(\intz{n} \setminus \set{0,1,2,3}) 
      \cap (\intz{n}\setminus\set{i,1-1,i-2}) \neq \emptyset$, and as a consequence
      there is an arc $((j+1,j))$ in the case of
      Item~\ref{item:special-counterclockwiseplusminus}.
    \item 
      If $\lab_\Delta((2,1))=\labplus$ and $\lab_{\Delta'}((2,1))=\labminus$,
      then applying Lemma~\ref{lemma:160-special-labminus} to arc $(2,1)$
      gives $\lab_\Delta((0,1))=\labminus$, however since by hypothesis
      $\lab_{\Delta'}((1,0))=\labminus$ we also have $\lab_{\Delta'}((0,1))=\labplus$
      otherwise there is a forbidden cycle of length two in $\Delta'$
      (Theorem~\ref{theorem:lab_valid}). As a consequence, the arc $(0,1)$ is in
      the case of Item~\ref{item:special-clockwiseminusplus}. The arc $(3,2)$ is 
      involved in the same situation.
  \end{enumerate}
  We conclude that in any case a second difference leads to a contradiction,
  either because an invalid cycle is created, or because repeated 
  applications of Lemma~\ref{lemma:160-special-labminus} give contradictory 
  labels (both $\labplus$ and $\labminus$) to some arc for some update 
  schedule.
  \end{proof}

\begin{lemma}
  \label{lemma:counting-special-pairs-160}
  For any $n > 8$, there exist $12n$ disjoint special pairs of schedules of size $n$
  for rule $160$.
\end{lemma}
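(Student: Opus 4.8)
The plan is to mirror the proof of Lemma~\ref{lemma:counting-special-pairs} for rule $128$, adjusting for the longer reach of rule $160$ (which reads its two neighbours but not itself). Up to rotation and left/right exchange, and using Lemma~\ref{lemma:160-special-onearc} to know that a special pair differs on a single arc, I would fix that difference on arc $(1,0)$ with $\lab_\Delta((1,0))=\labplus$ and $\lab_{\Delta'}((1,0))=\labminus$. Lemma~\ref{lemma:160-special-labminus} applied to this arc then forces $\lab_\Delta((j,j+1))=\labminus$ and $\lab_\Delta((j+1,j))=\labplus$ for every $j\in\intz{n}\setminus\set{0,1,2,3}$, together with $\lab_{\Delta'}((2,1))=\labminus$ and $\lab_{\Delta'}((1,2))=\labplus$; by Lemma~\ref{lemma:160-special-onearc} these coincide in $\Delta$ and $\Delta'$, and Theorem~\ref{theorem:lab_valid} forces $\lab((0,1))=\labplus$ to avoid a cycle of length two. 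This leaves exactly the four arcs $(2,3),(3,2),(3,4),(4,3)$ free, hence $2^4=16$ candidate labelings.

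I would first discard those containing a forbidden cycle. Both arcs between cells $1$ and $2$ now point the same way ($\lab((2,1))=\labminus$, $\lab((1,2))=\labplus$), so---unlike for rule $128$---the counter-clockwise backbone cannot be closed across the $1$--$2$ junction and no forbidden cycle of length $n$ occurs; the only forbidden cycles are the two of length two, obtained by labelling both arcs between $2$ and $3$, or both between $3$ and $4$, with $\labminus$. This removes $7$ labelings and leaves $9$ valid pairs. The crux, and the step that departs most from rule $128$, is to decide which of these $9$ are genuinely special. Since only arc $(1,0)$ differs, the sole cell whose image can change is cell $0$, whose image is $\ell_0\wedge x_1$ under $\Delta$ and $\ell_0\wedge y_1$ under $\Delta'$, where $\ell_0$ is the common value delivered by the left chain and $y_1$ the value of cell $1$ delivered by the right chain of $\Delta'$. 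I would compute both explicitly: because rule $160$ ignores the centre cell, $\ell_0$ is a conjunction over a specific set of cells that omits the leftmost cell of its chain, and the pair is special exactly when this set contains every cell on which $y_1$ depends, so that the $\labminus$ at $(1,0)$ is absorbed. Running this over the three admissible labelings of $\set{(2,3),(3,2)}$ and the three of $\set{(3,4),(4,3)}$ leaves precisely $6$ special base pairs, the three discarded ones being those whose left chain fails to cover the dependencies of $y_1$.

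Finally, these $6$ base pairs generate $12n$ pairs through the $n$ rotations $\sigma^j$ and the left/right exchange $\rho$ (rule $160$ being symmetric), each still special by the remarks preceding Lemma~\ref{lemma:160-special-labminus}. Disjointness follows exactly as for rule $128$: in the word $\big(\lab_\Delta((i,i+1))\big)_{i\in\intz{n}}$ each base schedule exhibits a unique factor $\labminus\labminus\labminus\labminus\labplus$ that pins down the rotation applied, so distinct base schedules remain distinct under rotation; and the count $\countminus{\Delta}$ of $\labminus$ labels on arcs $(i,i+1)$, which is at least $n-4$ for a base schedule but at most $3$ after exchange, separates a schedule from its left/right exchange. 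The hypothesis $n>8$ is precisely what makes these words long enough for the factor to exist and for the overlap contradictions of Lemma~\ref{lemma:160-special-onearc} to apply. The main obstacle is the middle paragraph: lacking the clean characterisation $\dset{\Delta}=\dset{\Delta'}$ that Lemma~\ref{lemma:128d} provides for rule $128$, one must track how the left chain wraps around and exactly which neighbours its conjunction records, and this is what cuts the nine valid candidates down to six.
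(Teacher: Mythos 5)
Your proposal is correct and follows essentially the same route as the paper's proof: the same reduction via Lemmas~\ref{lemma:160-special-labminus} and~\ref{lemma:160-special-onearc} to the $2^4$ labelings of arcs $(2,3),(3,2),(3,4),(4,3)$, the same elimination of the seven two-cycle labelings, the same $9\to 6$ cut (the paper does it by exhibiting counterexample configurations for three pairs and verifying the other six, which is exactly your explicit computation), and the same $12n$ count by rotations and left/right exchange with an equivalent disjointness argument (the paper's unique factor is $\labminus\labminus\labminus\labminus\labplus\labplus$ where yours is $\labminus^4\labplus$ combined with $\countminus{\cdot}$).

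One precision on your crux criterion: as literally stated (``the cells of $\ell_0$ contain every cell on which $y_1$ depends'') it would wrongly reject five of the six special pairs, because $y_1=x_0\wedge x_1\wedge R_2$ (with $R_2$ the right input of cell $2$) always depends on $x_1$, while $x_1$ multiplies both images ($\ell_0\wedge x_1$ versus $\ell_0\wedge y_1$) and therefore never needs to be covered by $\ell_0$; the correct test is that the cells of $R_2$ lie in the cells of $\ell_0$ together with cell $1$, and with this fix your computation yields precisely the paper's six special and three non-special base labelings.
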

\begin{proof}
  The structure of this proof is very similar to
  Lemma~\ref{lemma:counting-special-pairs}.
  Fix $n>8$ and consider the set of special pairs $\Delta,\Delta' \in \Pn$
  which have a difference between $\Delta$ and $\Delta'$
  on the labeling of arc $(1,0)$,
  with $\lab_\Delta((1,0))=\labplus$ and $\lab_{\Delta'}((1,0))=\labminus$.
  Lemma~\ref{lemma:160-special-labminus} fixes the labels of many arcs of $\Delta$,
  and from Lemma~\ref{lemma:160-special-onearc} the same labels hold for $\Delta'$
  since there is already a difference on arc $(1,0)$:
  \begin{align*}
    \text{for all } j \in \intz{n} \setminus\{0,1,2,3\} \text{ we have }
    \lab_\Delta((j,j+1)) &= \lab_{\Delta'}((j,j+1)) = \labminus\\
    \text{and } \lab_\Delta((j+1,j)) &= \lab_{\Delta'}((j+1,j)) = \labplus,\\
    \text{and furthermore }
    \lab_\Delta((1,2)) &= \lab_{\Delta'}((1,2)) = \labplus\\
    \text{and } \lab_\Delta((2,1)) &= \lab_{\Delta'}((2,1)) = \labminus,
  \end{align*}
  Furthermore the labeling of arc $(1,0)$ is given by our hypothesis, and from
  Theorem~\ref{theorem:lab_valid} (to avoid a forbidden cycle of length two in
  $\Delta$) and Lemma~\ref{lemma:160-special-onearc} (equality of $\lab_\Delta$ and
  $\lab_{\Delta'}$ except for the arc $(1,0)$) we also have
  $\lab_\Delta((0,1)) = \lab_{\Delta'}((0,1)) = \labplus$.
  As a consequence it remains to consider $2^4$ possibilities for the labelings of arcs
  \[
    (2,3), (3,4), (3,2) \text{ and } (4,3)
  \]
  (which are equal on $\Delta$ and $\Delta'$, again by Lemma~\ref{lemma:160-special-onearc}).

  Among these, seven possibilities create a forbidden cycle of length two when
  the labels of the two arcs between cells 1 and 2, or 2 and 3, are both $\labminus$
  (see Figure~\ref{fig:special-cycle2} relative to rule $128$, the seven possibilities
  for rule $160$ are analogous with the four respective arcs we are now considering).
  
  Among the nine remaining possibilities, three do not correspond to special pairs,
  as we will prove now by exhibit for each of them a configuration $x\in\bool^n$
  such that the images at cell $0$ differ in $\Delta$ and $\Delta'$.
  These three possibilities are depicted on Figure~\ref{fig:160-wanted},
  let us denote them $\hat{\Delta}^i,\smash{\hat{\Delta'}}^i$ for $i \in \intz{3}$.
  \begin{itemize}
    \item For $\hat{\Delta}^0,\smash{\hat{\Delta'}}^0$ we have $x\in\bool^n$ with
      $x_2=\0$ and all other cells in state $\1$,
    \item For $\hat{\Delta}^1,\smash{\hat{\Delta'}}^1$ we have $x\in\bool^n$ with
      $x_2=\0$ and all other cells in state $\1$,
    \item For $\hat{\Delta}^2,\smash{\hat{\Delta'}}^2$ we have $x\in\bool^n$ with
      $x_3=\0$ and all other cells in state $\1$.
  \end{itemize}
  One can check that in these three cases $i \in \intz{n}$ with these three respective configurations,
  we have $f^{(\hat{\Delta}^i)}(x)_0 = \1$ but $f^{(\smash{\hat{\Delta'}}^i)}(x)_0=\0$,
  because in both update schedules of each pair the left neighbor of cell $0$
  (cell $-1$) will be updated to state $\1$, and the right neighbor of cell $0$
  (cell $1$) will be updated to state $\0$ before the update of cell $0$ in $\smash{\hat{\Delta'}}^i$
  whereas it is still in state $x_1=\1$ when cell $0$ is updated in $\hat{\Delta}^i$.
 
   \begin{figure}
     \centerline{\includegraphics{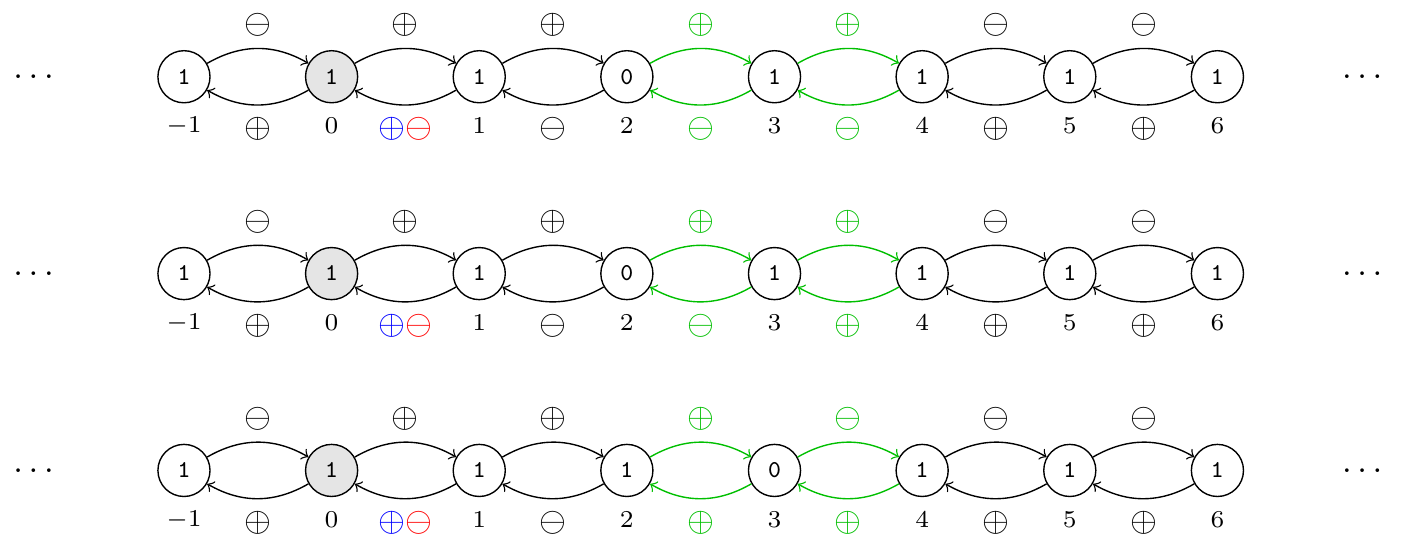}}
    \caption{
      three pairs $\hat{\Delta}^i,\smash{\hat{\Delta'}}^i$ for $i \in \intz{3}$ for rule $160$
      ($\hat{\Delta}^0,\smash{\hat{\Delta'}}^0$ on the top;
      $\hat{\Delta}^1,\smash{\hat{\Delta'}}^1$ in the middle;
      $\hat{\Delta}^2,\smash{\hat{\Delta'}}^2$ on the bottom)
      not corresponding to special pairs because for each of them there exists a configuration
      $x\in\bool^n$ such that $f^{(\hat{\Delta}^i)}(x)_0 = \1 \neq
      \0 = f^{(\smash{\hat{\Delta'}}^i)}(x)_0$. The states of configuration $x$ are given inside
      the cells.
    }
     \label{fig:160-wanted}
   \end{figure}


  The six remaining possibilities are presented on Figure~\ref{fig:160-special}.
  Let us argue that they indeed correspond to special pairs:
  \begin{itemize}
    \item neither $\Delta$ nor $\Delta'$ contain a forbidden cycle. Hence, they are pairs
      of non-equivalent update schedule,
    \item for any $i \in \intz{n} \setminus \set{0}$, we have
      $\dleft{\Delta}(i)=\dleft{\Delta'}(i)$ and 
      $\dright{\Delta}(i)=\dright{\Delta'}(i)$. Hence,
      $f^{(\Delta)}(x)_i=f^{(\Delta')}(x)_i$ for any $x \in \bool^n$
      (Lemma~\ref{lemma:di}).
      For cell $0$ let us show that $f^{(\Delta)}(x)_0=f^{(\Delta')}(x)_0$ for any $x\in\zu^n$.
      In order to have a difference in the update of cell $0$, one of the
      two update schedules must update it to state $\1$.
      Now remark that, given the definition of rule $160$,
      the only possibility for cell $0$ to be updated to state $\1$ in some
      update schedule (recall that $\dleft{\Delta}(0)=\dleft{\Delta'}(0)$) is that
      $x_0=x_{-1}=x_{-2}=\dots=x_{-\dleft{\Delta}(0)-2}=\1$, and
      $x_{-\dleft{\Delta}(0)}=\1$. Indeed, if any of these cells is in state
      $\0$, then at some point in the update of the chain of influence to the
      left of cell $0$ (in this order: cell $-\dleft{\Delta}$ then
      $-\dleft{\Delta}+1$ then \dots then $-1$ and finally $0$) some cell will
      be updated to state $\0$, and then all subsequent cells will be updated
      to state $\0$ as well.
      Given that $\dleft{\Delta}(0)=\dleft{\Delta'}(0) \geq n-3$,
      this would enforce the states of all cells in $x$ except (in the order of
      Figure~\ref{fig:160-special}):
      \begin{itemize}
          \item cells $1$ and $3$ for the first and third pairs,
          \item cells $1,2$ and $4$ for the second, fourth and fifth pairs,
          \item cell $2$ for the sixth pair.
      \end{itemize}
      A straightfoward exhaustive analysis of these $2 \times 2^2 + 3 \times 3^2 + 2$ cases
      would convince the reader that, for any configuration $x \in \bool^n$ where
      cell $0$ may be updated to state $\1$ in $\Delta$ or in $\Delta'$
      (otherwise $f^{(\Delta)}(x)_0=f^{(\Delta')}(x)_0=\0$),
      it turns out that $f^{(\Delta)}(x)_0=f^{(\Delta')}(x)_0$ (this is tedious but
      reveals the nice combinatorics of green labels on Figure~\ref{fig:160-special}).
 \end{itemize}

  \begin{figure}[htb]
    \centerline{\includegraphics{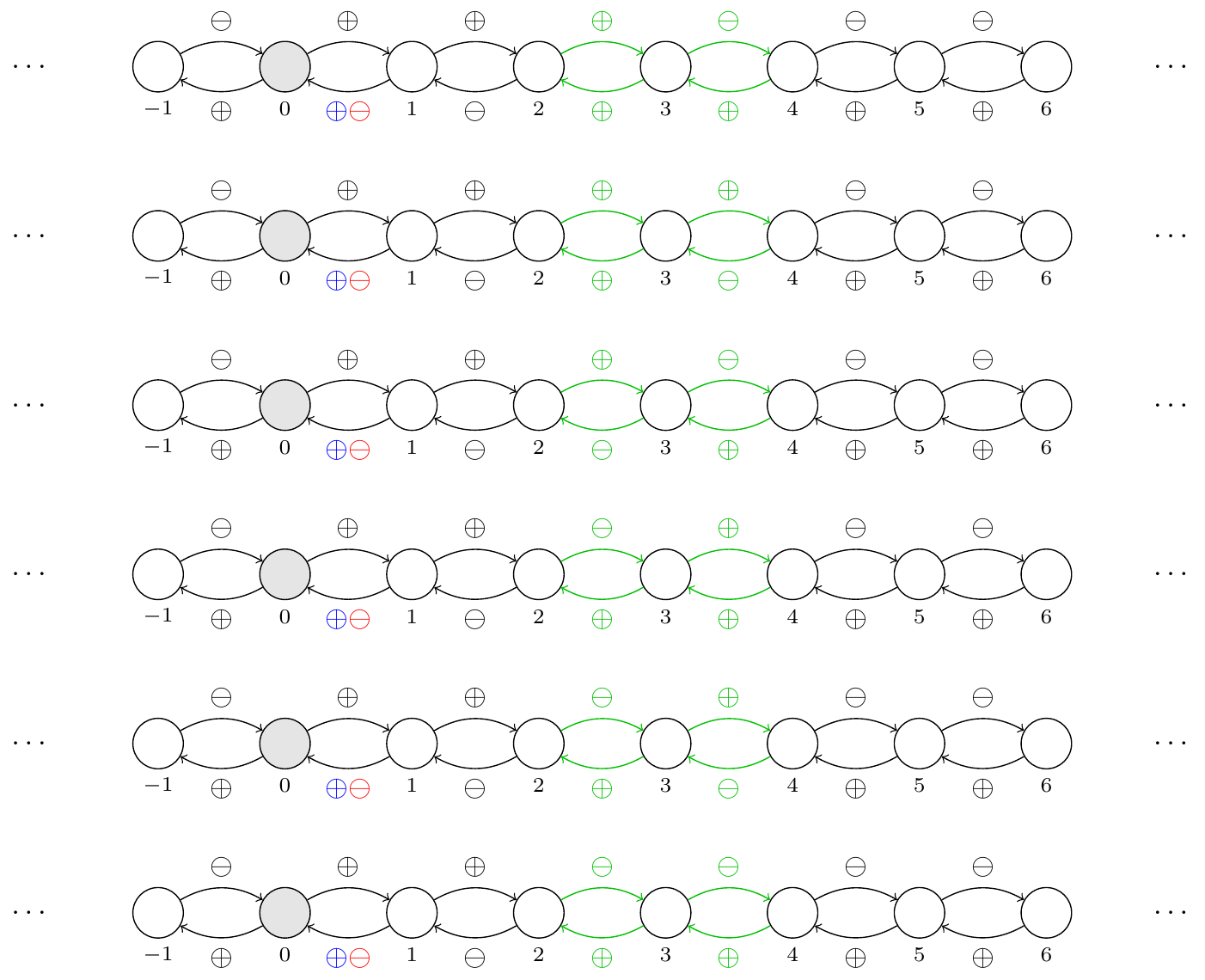}}
    \caption{
      six base special pairs $\Delta,\Delta'$ for rule $160$ in the proof of
      Lemma~\ref{lemma:counting-special-pairs-160},
      with $\lab_\Delta$ in blue, $\lab_{\Delta'}$ in
      red, in black the labels on which they are equal,
      and in green are highlighted the arcs on which we consider the
      six remaining possibilities.
    }
    \label{fig:160-special}
  \end{figure}

  \medskip

  We have seen so far that there are exactly six special pairs with their
  unique difference (Lemma~\ref{lemma:160-special-onearc}) on arc $(1,0)$.
  Let us finally argue that these six {\em base} pairs for rule $160$
  give $12n$ distinct pairs when considering their rotations and
  left/right exchange, \ie an update schedule belongs to at most one pair.
  
  It is clear from Figure~\ref{fig:160-special} that all the base pairs are all
  disjoint.
  Moreover, considering the pattern 
  $\labminus\labminus\labminus\labminus\labplus\labplus$
  and any of the $24n$ update schedules,
  for any $n > 8$ either it appears exactly once on arcs of the form
  $(i,i+1)$, or its mirror appears exaclty once on arcs of the form
  $(i+1,i)$, but not both.
  This allows to uniquely determine the left/right exchange and
  rotations applied to some base pair,
  and the remaining labelings straighforwardly allow to determine
  one of the six base special pair, and one of $\Delta$ or $\Delta'$.
  Therefore all special pairs are disjoint.
\end{proof}

As a consequence of Lemma~\ref{lemma:counting-special-pairs-160} we have
the following result.

\begin{theorem}
\label{theorem:160}
  $\sens{f_{160,n}}=\frac{3^n-2^{n+1}-12n+2}{3^n-2^{n+1}+2}$ for any $n>8$.
\end{theorem}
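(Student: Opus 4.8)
The plan is to read off the theorem as an immediate corollary of Lemma~\ref{lemma:counting-special-pairs-160}, in exactly the same way that Theorem~\ref{theorem:128} was deduced from Lemma~\ref{lemma:counting-special-pairs}. The starting point is the identity $\sens{f_{160,n}}=\frac{|\dynamics{f_{160,n}}|}{3^n-2^{n+1}+2}$, whose denominator is the number $|\updatesECA{n}|=3^n-2^{n+1}+2$ of equivalence classes of valid update schedules. Thus everything reduces to establishing that $|\dynamics{f_{160,n}}|=3^n-2^{n+1}-12n+2$, i.e.\ that passing from update schedules to dynamics loses exactly $12n$ classes.

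To organize this I would introduce the equivalence relation $\sim$ on the $3^n-2^{n+1}+2$ classes of update schedules defined by $\Delta \sim \Delta'$ iff $D_{f^{(\Delta)}_{160,n}}=D_{f^{(\Delta')}_{160,n}}$, so that $|\dynamics{f_{160,n}}|$ is precisely the number of $\sim$-classes. Two non-equivalent update schedules lie in the same $\sim$-class exactly when they form a special pair, hence the whole question is the block structure of $\sim$. Lemma~\ref{lemma:160-special-onearc} guarantees that every special pair differs on the label of exactly one arc, and this is what forces each nontrivial $\sim$-block to have size exactly two: a block $\set{\Delta,\Delta',\Delta''}$ of size at least three would produce two special pairs $(\Delta,\Delta')$ and $(\Delta,\Delta'')$ sharing the schedule $\Delta$, contradicting the disjointness asserted in Lemma~\ref{lemma:counting-special-pairs-160}.

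With the block sizes pinned down, I would invoke Lemma~\ref{lemma:counting-special-pairs-160} to conclude that there are exactly $12n$ nontrivial $\sim$-blocks (each a special pair) and that all remaining classes are singletons. Counting the blocks then gives $|\dynamics{f_{160,n}}| = 12n + \big((3^n-2^{n+1}+2)-24n\big) = 3^n-2^{n+1}-12n+2$, since the $12n$ special pairs consume $24n$ schedule classes. Dividing by $3^n-2^{n+1}+2$ yields the claimed sensitivity for all $n>8$, the range in which both Lemma~\ref{lemma:160-special-onearc} and Lemma~\ref{lemma:counting-special-pairs-160} apply.

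The genuinely delicate point, already discharged inside Lemma~\ref{lemma:counting-special-pairs-160}, is that the exhibited $12n$ pairs are not merely \emph{a} supply of special pairs but are \emph{all} of them: completeness rests on normalizing an arbitrary special pair, via rotation and the left/right exchange available because rule $160$ is symmetric, so that its unique differing arc becomes $(1,0)$ with a fixed orientation, and then exhausting the $2^4$ residual labelings down to the six base pairs. Consequently the main thing to be careful about when assembling this corollary is to use the counting lemma in its strong form, namely both \emph{exactness} and \emph{disjointness} of the $12n$ pairs, since the former excludes undercounting (missed collapses) and the latter, together with Lemma~\ref{lemma:160-special-onearc}, excludes overcounting (blocks larger than two).
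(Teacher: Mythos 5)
Your proposal is correct and takes essentially the same route as the paper, which likewise presents Theorem~\ref{theorem:160} as an immediate consequence of Lemma~\ref{lemma:counting-special-pairs-160}, in parallel with the deduction of Theorem~\ref{theorem:128} from Lemma~\ref{lemma:counting-special-pairs}. Your explicit bookkeeping --- that exactness of the $12n$ pairs rules out missed collapses while disjointness together with Lemma~\ref{lemma:160-special-onearc} rules out equivalence blocks of size greater than two, so the count drops by exactly $12n$ --- simply spells out what the paper leaves implicit in its one-line deduction.
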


\section{Conclusion and perspectives}
\label{s:conclusions}

Asynchrony highly impacts the dynamics of CA and new original dynamical behaviors are introduced.
In this new model, the dynamics become dependent from the update schedule of cells.
However, not all schedules produce original dynamics. For this reason, a measure to quantify
the sensitivity of ECA \wrt to changes of the update schedule has been introduced in~\cite{rmmop18}.
All ECA rules were then classified into two classes: max-sensitive and non-max sensitive.

This paper provides a finer study of the sensitivity measure \wrt the size of the configurations.
Indeed, we found that there are four classes (see Table~\ref{tab:results}). In particular, it is interesting to remark that
the asymptotic behavior is not dichotomic, \ie, the sensitivity function does not always either go
to $0$ or to $1$ when the size of configurations grows. The ECA rule $8$ when considered as a classical
ECA (\ie, when all cells are updated synchronously) has a very simple dynamical behavior but its asynchronous
version has a sensitivity to asynchronism function which tends to $\frac{1+\phi}{3}$ when $n$ tends to infinity
($\phi$ is the golden ratio). Remark that in the classical case, the limit set of the ECA rule $8$ is the same as ECA
rule $0$ after just two steps. It would be interesting to understand which are the relations
between the limit set (both in the classical and in the asynchronous cases) and the sensitivity to asynchronism.

Indeed, remark that in our study the sensitivity is defined on one step of the dynamics. It would be interesting
to compare how changes the sensitivity function of an ECA when the limit set is considered.
This idea has been investigated in works on {\em block-invariance} \cite{gmmmm18,gmmmrf15},
with the difference that it concentrates only on the set of
configurations in attractors, and discards the transitions within these sets.

Remark also that this study focus on block-sequential updating schemes. However, block-parallel updating
schedules are gaining growing interest~\cite{DemongeotS20}. It is a promising research direction to investigate
how the sensitivity functions change when block-parallel schedules are considered.

Another interesting research direction would consider the generalization of our study to arbitrary CA in order to verify
if a finer grained set of classes appear or not. Maybe, the set of possible functions is tightly related to the structure
of the neighborhood.

Finally, another possible generalization would consider infinite configurations in the spirit of \cite{rommp19}.
However, it seems much more difficult to come out with precise asymptotic results in this last case.

\section*{Acknowledgments}

The work of K\'evin Perrot was funded mainly by his salary as a French
State agent and therefore by French taxpayers' taxes,
affiliated to
Aix-Marseille Univ, Univ. de Toulon, CNRS, LIS, France,
and to
Univ. C\^{o}te d'Azur, CNRS, I3S, France,
and secondarily by
ANR-18-CE40-0002 FANs project,
ECOS-Sud C16E01 project,
STIC AmSud CoDANet 19-STIC-03 (Campus France 43478PD) project. 
P.P.B. thanks the Brazilian agencies CAPES and CNPq for the projects 
CAPES 88881.197456/2018-01, CAPES-Mackenzie PrInt project 88887.310281/2018-00 and CNPq-PQ 305199/2019-6.

\bibliographystyle{plain}
\bibliography{biblio}

\end{document}